%% file: main.tex
\documentclass[opre,unblindrev]{informs3}
\RequirePackage{endnotes}
\usepackage{endnotes}

\OneAndAHalfSpacedXI

\usepackage{natbib}
\bibpunct[, ]{(}{)}{,}{a}{}{,}%
\def\bibfont{\small}%
\usepackage{endnotes}
\usepackage{enumerate}
\usepackage{rotating}
\usepackage{subfigure}
\usepackage{subcaption}
\usepackage{float}
\usepackage{mathtools}
\usepackage{booktabs}
\usepackage{multirow}
\usepackage{multibib}
\newcites{ec}{References}
\usepackage[svgnames]{xcolor}
\usepackage{caption}
\usepackage{amsmath}
\usepackage{graphicx}
\usepackage{enumitem}
\usepackage{multirow}
\usepackage{makecell}
\usepackage{comment}
\usepackage{threeparttable}
\usepackage{hhline}
\usepackage{algpseudocode}
\usepackage{mathrsfs}  
\usepackage{physics}
\usepackage{bbm}
\usepackage{svg}
\usepackage{bm}
\usepackage{mathdots}
\usepackage{bbding}

\graphicspath{ {./Fig/} }
\allowdisplaybreaks
\newcommand{\qedsymbol}{$\square$}
\definecolor{DarkBlue}{rgb}{0,0.08,0.45}
\usepackage[colorlinks=true,breaklinks=true,bookmarks=true,urlcolor=DarkBlue,citecolor=DarkBlue,linkcolor=DarkBlue,bookmarksopen=false,draft=false]{hyperref}  
\usepackage[noabbrev, nameinlink, capitalize]{cleveref}
\crefname{assumption}{Assumption}{Assumptions}
\crefname{lemma}{Lemma}{Lemmata}
\crefname{theorem}{Theorem}{Theorems}
\crefname{corollary}{Corollary}{Corollaries}
\crefname{proposition}{Proposition}{Propositions}
\crefname{claim}{Claim}{Claims}
\crefname{algorithm}{Algorithm}{Algorithms}
\crefname{figure}{Figure}{Figures}
\crefname{remark}{Remark}{Remarks}
\crefname{section}{Section}{Sections}
\crefname{procedure}{Procedure}{Procedures}
\crefname{definition}{Definition}{Definitions}
\crefname{example}{Example}{Examples}
\crefname{table}{Table}{Tables}
\crefname{equation}{}{}

\TheoremsNumberedThrough  

\ECRepeatTheorems
\EquationsNumberedThrough 

\begin{document}
\RUNAUTHOR{Fu, Hu}
\RUNTITLE{Capacitated Spatiotemporal Matching}

\TITLE{Capacitated Spatiotemporal Matching}
 \ARTICLEAUTHORS{
 \AUTHOR{Mingyang Fu}
 \AFF{Rotman School of Management, University of Toronto 
 , \EMAIL{mingyang.fu@utoronto.ca}}
 \AUTHOR{Ming Hu}
 \AFF{Rotman School of Management, University of Toronto
 , \EMAIL{ming.hu@rotman.utoronto.ca}}
 }

\ABSTRACT{We study a spatiotemporal service matching problem in which demand, heterogeneous in location and time sensitivity/preference, is to be assigned to service stations. The planner seeks to maximize social welfare, defined as total service reward minus spatial and temporal costs, by optimally scheduling demand to stations and service time under processing capacity constraints. We formulate the problem as an optimal transport (OT) model that allows for both demand–capacity imbalance and endogenously unserved demand when service costs exceed rewards. Leveraging a barycenter-style decomposition, we reformulate the problem as a finite-dimensional convex optimization problem that generalizes semi-discrete OT and enables scalable computation. We characterize the geometry of optimal assignments, showing that spatial partitions correspond to generalized Laguerre cells. Temporally, we show that the structure of the optimal schedule depends on demand heterogeneity: when demand differs only in temporal cost sensitivity, higher-sensitivity demand is assigned service times closer to the common ideal time; when demand differs only in preferred times, the assignment is order-preserving with respect to preferred times. We further propose an envy-free, individually rational implementation of the optimal schedule using time-dependent pricing and a finite-slot mechanism with explicit bounds depending on the number of required slots. To illustrate the framework, we extend the classic Hotelling linear-city model on a line segment by incorporating a continuum of waiting-cost sensitivities, demonstrating how optimal spatial partitions vary with changes in sensitivity heterogeneity and reward. Finally, we conduct a numerical study of a vaccination-planning problem using publicly available 2021 data from the City of Toronto. The results show that joint spatiotemporal matching reduces total social cost by at least 3.24\% relative to other policies that separate location assignment and scheduling.
}


\maketitle
\section{Introduction}\label{sec:intro}
The efficient coordination of supply and demand across space and time is a central challenge in modern operations management. Examples range from e-commerce fulfillment to service assignment. In each case, planners must jointly account for two sources of loss: spatial inefficiency arising from matching demand to distant supply, and temporal misalignment arising from service scheduling under processing capacity constraints.

These challenges are vividly illustrated in the context of e-commerce fulfillment centers during peak periods. A large e-commerce platform may operate Inbound Cross-Docking (IXD) facilities that serve as the initial, high-speed sorting hub for seller inventory before it reaches fulfillment centers (FCs). Before peak events (e.g., Black Friday), IXDs face unloading‐dock bottlenecks as inbound shipments surge. Uncoordinated arrivals lead to congestion, stockouts, overtime, and seller complaints. A possible remedy is a time-slot allocation mechanism: the platform assigns appointments subject to facility capacity, prioritizing more delay-sensitive shipments into more desirable windows while delaying or pulling earlier, lower-priority shipments. By trading off each shipment’s value and time sensitivity/preference against facility capacity, the platform smooths shipment arrivals, eliminates peak-time congestion, and maximizes seller satisfaction.

A related challenge arises in the same retail environment once products are stocked: the platform must decide not only where to place units across fulfillment centers, but also which delivery promises to offer and which centers to use for each order. The spatial cost is borne by the retailer in shipping from a particular warehouse. The temporal cost, however, can be twofold. On the one hand, capacity constraints at a facility can lead to worker overload, creating costs for the firm such as overtime pay. On the other hand, processing bottlenecks lead to service delays, imposing a waiting cost on customers. This creates a difficult trade-off: for example, concentrating order fulfillment at a popular downtown center during the holiday season might minimize spatial costs, but it risks triggering high temporal costs by overloading the local workforce and simultaneously failing to meet delivery promises. Assigning order fulfillment to suburban warehouses might relieve this temporal pressure, but at the expense of higher transportation costs. This problem thus illustrates the same fundamental interplay of spatial and temporal frictions under capacity constraints.

Another example is charging planning for a centralized electric-vehicle (EV) fleet, such as that of an autonomous-vehicle company like Waymo. In this context, demands are EVs in the fleet that require charging, and service stations correspond to available charging stations. Each EV exhibits degrees of sensitivity and varying time preferences depending on its battery status. Due to this dual heterogeneity in sensitivities and preferences, EVs originating from the same location may be optimally assigned to different charging stations at different times, rather than simply being assigned to the nearest station.

Similar challenges arise in large-scale public service operations. In mass vaccination campaigns, each clinic can serve only a fixed number of people per hour, yet participants differ in the urgency of their need for service. Those at higher risk, such as the elderly or those with pre-existing conditions, may require earlier appointments, while others are more flexible. Unscheduled arrivals lead to long queues and under- or over-utilized sites, whereas coordinated assignment of individuals to specific time–location slots ensures timely service, balanced utilization, and reduced waiting costs. A comparable situation arises in large-scale elections: voting centers have limited numbers of booths and staff, and voters differ in their preferred times to cast ballots. Some seek early-morning convenience; others arrive after work. Without coordination, simultaneous arrivals generate long lines and discourage participation. By carefully balancing distance and time preferences, planners can improve voter scheduling across locations and times. These public-sector examples highlight that demand is not only spatially distributed but also shaped by heterogeneous time preferences, making coordinated spatiotemporal matching essential for efficient and equitable service delivery.

Addressing these challenges requires a framework that integrates spatial, temporal, and capacity constraints within a unified design of service systems. In this paper, we develop a theoretical framework for spatiotemporal service matching under capacity constraints, motivated by the applications outlined above. We consider a centralized planner that seeks to maximize social welfare (i.e., the total service rewards of matched services minus transportation and temporal deviation costs) by optimally assigning each demand to a service station and time slot, or, if doing so is too costly, deciding not to serve some demands. Each service station has a time-varying processing capacity, which limits the number of units it can serve in each time interval. Demands are heterogeneous in both spatial location and type, such as time preference and time sensitivity (a measure of urgency); specifically, each demand incurs an increasing penalty as its service time deviates from that demand’s ideal or due date, with the magnitude of the penalty increasing with time sensitivity. Our model captures these features in a continuum formulation, leveraging optimal transport (OT) theory to find the welfare-maximizing assignment of demand mass to service capacity. 

In contrast to classical partial OT models that allow for endogenously unserved demand, where transportation costs are fixed and exogenous, our framework introduces an endogenous temporal cost arising from capacity constraints. In the classical uncapacitated setting, the optimal matching follows a simple rule: each demand is either assigned to its nearest station if the reward exceeds the transportation cost, or remains unmatched. As the homogeneous service reward increases, the set of served customers expands monotonically, but no demand point ever switches to a farther station. Our model departs fundamentally from this structure. Here, capacity limits induce congestion, resulting in temporal costs that depend on the endogenous load at each station. As the service reward increases, more customers enter the system, amplifying congestion at popular stations and reshaping the total cost. Consequently, some customers may find it optimal to switch from a nearby but congested station to a more distant, less busy one. 

Moreover, our paper treats temporal cost as a consequence of scheduling decisions, rather than attributing it solely to stochastic queuing. This enables us to accommodate heterogeneous time preferences and tailor the matching policy accordingly. This approach aligns more closely with real-world observations, such as those in e-commerce fulfillment centers or clinic appointments. In these settings, spatial assignment is typically integrated with appointment-style scheduling, which stands in contrast to analyses based on First-Come, First-Served (FCFS) performance. As a result, our framework offers implementable prescriptions (e.g., time-based pricing or time-slot allocations) and managerial insights into capacity configuration.
We make the following contributions in this paper. 

{\it Methodology.} We formulate spatiotemporal matching as an OT problem allowing partial service subject to capacity constraints. We further propose a barycenter-style decomposition that clarifies the geometry of optimal plans and induces type-dependent spatial partitions. Most importantly, when the set of time-sensitivity types is finite, we reduce the problem to a finite-dimensional convex program \eqref{eq:STB_D}. This reformulation extends the classic semi-discrete OT to a spatiotemporal setting and significantly enhances tractability.
   
{\it Implementation and design.} 
Based on our finite-dimensional reduction \eqref{eq:STB_D}, we propose an implementation via time-dependent pricing that is envy-free and individually rational. We also design a finite-slot mechanism with bounds on the number of offered slots. Moreover, we study budget-constrained capacity allocation across stations and provide structural guidance for how to deploy limited capacity. We show that this joint matching and allocation problem can be reformulated as a tractable convex program.

{\it Managerial insights.} In \Cref{sec:opt_structure}, we characterize optimal matching in space and time. Spatially, service regions are weighted Laguerre cells with a no-match threshold. Specifically, the matched region is partitioned into station-specific cells, in which demands at each location are assigned to the station with the smallest weight-adjusted transportation cost, and demands outside the matched region are left unmatched. Temporally, when demands share a preferred time but deviation sensitivities differ, schedules are sensitivity-prioritized, and the served set shrinks as sensitivity increases; when sensitivities are the same but preferred times differ, schedules are order-preserving, in the sense that demands with earlier preferred times are assigned (weakly) earlier service times. In \Cref{Sec:Hotelling}, we extend the classic Hotelling model \cite{harold1929stability} to a spatiotemporal setting with heterogeneous time sensitivities, showing how optimal matching plans shift as reward or capacity varies.

Finally, we complement the theoretical analysis with a numerical study motivated by vaccination planning in Toronto during the COVID-19 rollout, involving two types of vaccination sites across 21 locations and 1.65 million individuals divided into 4 risk groups. We compare the optimal spatiotemporal matching policy with simpler benchmark policies that separate spatial assignment from temporal scheduling. The results show that joint optimization delivers the lowest total social cost, primarily by reducing temporal cost. In particular, the policy prioritizes elderly individuals, assigning them earlier time slots and directing them to more efficient sites. The computed solution also reflects our structural results: higher-risk populations are systematically scheduled earlier, and spatial partitions vary across risk groups rather than remaining fixed across groups. Specifically, high-capacity sites serve broader regions for higher-risk groups than for low-risk groups.

{\bf Organization of the paper.} The remainder of the paper is organized as follows. At the end of this section, we list the mathematical notations used in this paper. In \Cref{sec:lit_review}, we review the relevant literature streams on optimal transport theory, spatiotemporal matching, and resource allocation. In \Cref{sec:model}, we formally introduce the problem setting. We formulate the capacitated spatiotemporal matching problem as an OT model, present its dual formulation, and propose an alternative barycenter-style reformulation. We also characterize several structural properties of the optimal spatial and temporal assignments. \Cref{Sec:finite_type} analyzes the important case of finite demand types, reducing the problem to a finite-dimensional convex program. Based on this, we develop an envy-free implementation via time-dependent pricing and a finite-slot mechanism, and we also analyze the problem of capacity allocation. In \Cref{Sec:Hotelling}, we apply our framework to extend the classic Hotelling model, illustrating how optimal matching plans shift in response to changes in reward and capacity limits. In \Cref{sec:numerical}, we present a numerical study of a vaccination planning problem and compare the proposed policy with several benchmark rules. Finally, \Cref{sec:conclusion} concludes the paper. All proofs and additional analyses are deferred to the E-Companion.

{\bf Notation.} Throughout the paper, let $[n] = \{1, 2, \dots, n\}$ denote the set of integers from $1$ to $n$. 
We use $\mathbb{R}$ to denote the set of all real numbers, with $\mathbb{R}_{+}$ 
representing its subsets of non-negative numbers. 
For any Borel set $\mathcal{B}\subset\mathbb{R}^{n}$, the set of all positive integrable functions on $\mathcal{B}$ is denoted by $\mathcal{L}^+_1(\mathcal{B})$. Specifically, we equip each continuous component of $\mathcal{B}$ with the Lebesgue measure, each discrete (finite) component with the counting measure, and, in mixed cases, with the corresponding product of these measures. The maximum of two values $a$ and $b$ is denoted by $a \vee b = \max\{a, b\}$, and their minimum by $a \wedge b = \min\{a, b\}$. The indicator function is denoted by $\mathbbm{1}(\cdot)$, where $\mathbbm{1}(A) = 1$ if $A$ is true and $\mathbbm{1}(A) = 0$ otherwise. The non-negative part of a real number $x$ is defined as $(x)^+ = \max\{x, 0\}$. 

\section{Literature Review}\label{sec:lit_review}
Our research is related to several streams of literature: optimal transport theory, spatiotemporal matching, and resource allocation.

The theory of optimal transport (OT), originated by \cite{monge1781memoire} and later formalized by \cite{kantorovich1942translocation}, provides a powerful framework for comparing probability distributions by computing the minimum cost of transforming one into the other. 
As a classical mathematical tool (see modern expositions such as \citealt{villani2008optimal} and \citealt{santambrogio2015optimal}), it has recently emerged as a valuable tool in operations and supply chain management. A key contribution in this area is the application of OT to logistics and geographical partitioning problems (see, e.g., a recent review \citealt{ryzhov2024introduction}). Specifically, our research is closely related to unbalanced OT and semi-discrete OT. Unbalanced OT \citep{liero2018optimal,chizat2018scaling} relaxes mass conservation to capture unmatched demand. When continuous demand is assigned to a discrete set of facilities or time slots, semi-discrete OT provides a natural formulation \citep{bourne2018semi,tacskesen2023semi}. Classical constructs such as power diagrams and weighted Voronoi partitions \citep{aurenhammer1987power,okabe2009spatial} have been widely used to partition territories into regions associated with discrete points. These extensions closely parallel our admission and scheduling decisions.

On the operations management side, our research is related to the literature on the design of spatial and temporal service systems. This stream, dating back to Hotelling's seminal work (\citealt{harold1929stability}), examines how firms or facilities partition a market according to customers' geographic distribution. Much of this literature focuses on facility location and service partitioning \citep{eiselt1993competitive,haugland2007designing,carlsson2013dividing}. In public-service and healthcare applications, researchers have incorporated accessibility and endogenous capacity decisions \citep{castillo2009social,zhang2010bilevel,aboolian2016maximal,aboolian2022probabilistic,aboolian2023location,he2025spatial}. Recently, researchers have integrated congestion effects, recognizing that customer choice depends on both travel distance and expected waiting time. Many of these papers model the temporal cost using a queuing system, as in \cite{baron2008facility,dan2019competitive,besbes2022spatial,gunnar2024demand,abouee2025spatial}. While the spatial queuing literature typically treats waiting time as an stochastic outcome of a first-come, first-served queueing process, we model time scheduling as an explicit decision within a welfare-maximizing framework. This connects our work conceptually to the extensive literature on appointment scheduling (see, e.g., \citealt{wang2024appointment}), but with the crucial addition of the spatial dimension. This shift from performance analysis to optimal design allows us to explicitly prioritize service based on demand characteristics, such as time sensitivity, which is not available under a classic spatial queuing framework.

Another parallel, highly active stream studies spatiotemporal problems from an online perspective, with a particular focus on ride-hailing. Foundational work in online matching provides competitive-ratio benchmarks for sequential, irrevocable decisions under uncertainty \citep{karp1990optimal,mehta2007adwords}. Building on these ideas, the ride-hailing literature has developed real-time matching and pooling \citep{alonso2017demand,lowalekar2018online,barrientos2025online, bo2026spatialstaffing}. 
Broad surveys synthesize matching and dynamic-pricing controls \citep{yan2020dynamic,wang2019ridesourcing}. 
By contrast, our framework addresses a different, complementary set of questions. We adopt an offline, deterministic perspective not as a simplification, but as a deliberate choice to uncover the fundamental structure of an optimal plan under capacity constraints.

\section{Matching as Optimal Transport}\label{sec:model}
Optimal transport (OT) provides a natural framework for addressing the challenge of spatiotemporal matching. By treating the joint distribution of customer locations and types as a source measure and each service station’s capacity throughout the time horizon as a sink, we can frame the assignment of demand to stations as a transportation problem that balances travel costs against temporal deviation penalties. This OT-based reformulation integrates spatial and temporal trade-offs into a single convex optimization problem. 

Next, we first formulate our model from the perspective of optimal transport and present its primal and dual formulations. Additionally, we introduce an alternative formulation that involves selecting an intermediate measure to balance multiple OT objectives. This approach is analogous to the concept of a Wasserstein barycenter. Finally, we highlight the structural properties of the optimal matching that facilitate tractable analysis.

\subsection{Problem Setting}
Building on the motivating examples, we now formalize the key elements of our spatiotemporal matching problem. Our framework consists of two primary components: a heterogeneous demand population and a set of capacity-constrained service stations. 

Let $\mathcal X \subset \mathbb{R}^d$ with $d\ge 2$ denote the spatial domain, a compact set in $d$-dimensional space that contains all demand locations. We assume the time horizon of interest $\mathcal T \subset \mathbb{R}$ is a bounded interval (e.g., a fixed operating period). Within this region $\mathcal X$ and time horizon $\mathcal T$, a collection of service requests (demands) is to be served by a finite set of stations. Each demand is heterogeneous in location $\bm x \in \mathcal X$ and type $\alpha \in \mathcal A$, where $\mathcal{A}$ denotes the type space of demand, endowed with a $\sigma$-finite measure. The type space $\mathcal{A}$ can be either finite (with the counting measure) or continuous (a compact subset of $\mathbb{R}^{d_a}$ with Lebesgue measure). The parameter $\alpha$ captures the urgency or temporal tolerance of the demand. For instance, when $\mathcal A\subseteq\mathbb{R}$, a larger $\alpha$ may indicate that the request’s service value decays quickly if not served promptly.

We write the joint demand density on $\mathcal{X}\times\mathcal{A}$ as $\mu(\bm{x},\alpha)=\varrho(\alpha)\rho_\alpha(x)$. Specifically, $\varrho(\alpha)$ denotes the mass function of demand types on $\mathcal{A}$. We assume $\varrho(\alpha)\ge 0$ for all $\alpha\in\mathcal{A}$ when $\mathcal{A}$ is finite or $\varrho \in \mathcal{L}_+^1(\mathcal{A})$ when $\mathcal{A}$ is continuous.
Given demand type $\alpha$, we assume that the density function of demand with respect to location is $\rho_\alpha(\bm{x})\in\mathcal{L}_1^{+}\left(\mathcal{X}\right)$ with $\int_{\mathcal{X}}\rho_\alpha(\bm{x})\mathrm{d}\bm x =1$.

There are $n$ service stations, indexed by $i \in [n]$ and located at fixed positions $\{\bm{y}_i\}_{i=1}^n \subset \mathcal{X}$. The service operations occur over a bounded time horizon $\mathcal{T} \subset \mathbb{R}$. Each station $i$ possesses a time-varying processing capacity, captured by a function $c_i(t) > 0$ for $t \in \mathcal{T}$, representing the maximum volume of demand that station $i$ can process per unit of time at time $t$. This capacity may fluctuate over time due to factors such as staffing schedules, equipment availability, or operational costs. The total service supply of station $i$ over a time interval is the integral of $c_i(t)$ over that interval.


The value generated by serving a unit of demand arises from a homogeneous base reward for successful service and is reduced by spatial and temporal costs. The planner's decision is to design a matching plan that specifies, for each demand type $\alpha$ at location $\bm{x}$, how much demand each station $i$ will serve and at what time $t$, subject to each station's capacity limits. 

The objective is to maximize the total social welfare, i.e, the aggregate matching value of all served demand. This objective is common in service operations and matching models (see, e.g.,\citealt{besbes2022spatial,gunnar2024demand}). Although often motivated by public-service settings, it is equally natural in commercial contexts. In our motivating examples, an e-commerce platform such as Amazon must account for the welfare of all sellers to sustain a healthy marketplace. Moreover, in the EV-charging setting, the vehicles are owned by a single centralized company seeking to maximize overall system utility.  
We next provide a formal OT representation of this planning problem.

\subsection{Optimal Transport Model}
We model the matching problem as a classical OT problem. Assume that each unit type-$\alpha$ demand at location $\bm x$ gets served at station $i$ at time $t$ has the matching value
$$
r - \delta(\bm{x},\bm{y}_i) - \ell(t,\alpha),
$$
where $r>0$ is a standard matching reward homogeneous across demands, and $\delta(\bm{x},\bm{y}_i)\in\mathcal{L}^+_1(\mathcal{X}\times\mathcal{X})$ is the transportation cost per unit demand transferred from location $\bm{x}$ to station $i$ (with location $\bm{y}_i$). We assume that $\delta(\bm{x},\bm{y}_i) = \hat\delta(\|\bm{x}-\bm{y}_i\|_2)$, where $\|\bm{x}-\bm{y}_i\|_2$ is the Euclidean distance between $\bm{x}$ and $\bm{y}_i$ and $\hat\delta$ is a continuous and differentiable function on $\mathbb{R}^+$ with $\hat\delta(0) = 0$ and $\hat\delta^\prime(\cdot)>0$. Furthermore, we assume $\hat\delta$ is convex. $\ell(t,\alpha)\in \mathcal{L}^+_1 (\mathcal{T}\times\mathcal{A})$ is a temporal cost of type-$\alpha$ demand assigned with scheduled time $t$, which is assumed to be continuous in $t$. 

A matching plan is characterized by its density function $\pi\in\mathcal{L}_1^{+}\left(\mathcal{X}\times\mathcal{A}\times\mathcal{T}\times[n]\right)$, which represents the amount of type-$\alpha$ demand from location $\bm{x}$ assigned to be served at station $i$ at time $t$. Since the set of stations $[n]$ is discrete, we can simplify the notation by representing the plan as a vector of $n$ functions, $\{\pi_i\}_{i\in[n]}$, where $\pi(\bm{x},\alpha,t,i) = \pi_{i}(\bm{x},\alpha,t)$ for short. The set of all feasible plans, denoted as $\Gamma\left(\mu,\bm{c}\right)$ , is then defined as:
\begin{equation}
\Gamma\left(\mu,\bm{c}\right) \coloneqq  
\left\{\
\pi\in\mathcal{L}_1^+\left(\mathcal{X}\times\mathcal{A}\times\mathcal{T}\times[n]\right)
\left|\quad
\begin{aligned}
&\int_{\mathcal{X}\times\mathcal{A}}\pi_{i}(\bm{x},\alpha,t)\mathrm{d}(\bm{x},\alpha)\leq c_i(t), \quad  \forall t\in\mathcal{T}, \forall i\in[n],\\
&\sum_{i=1}^{n}\int_{\mathcal{T}}\pi_{i}(\bm{x},\alpha, t)\mathrm{d}t\leq \mu(\bm{x},\alpha),\quad \forall(\bm{x},\alpha)\in\mathcal{X}\times\mathcal{A}.\\
\end{aligned}
\right.
\right\}
\end{equation}
The feasible set is defined by two key constraints. The first inequality is the service capacity constraint, ensuring that the aggregate flow of all demand types assigned to station $i$ at time $t$ does not exceed its processing capacity $c_i(t)$. The second enforces demand capacity: no extra demand exceeding the available density at location $\bm x$ may be ``created," while some demand may remain unfulfilled. The planner's objective is to find the feasible plan that maximizes the total social welfare:
\begin{equation}\tag{STM}\label{eq:STM}
\begin{aligned}
\sup_{\pi\in\Gamma\left(\mu,\bm{c}\right) }&\quad  \sum_{i=1}^{n}\int_{\mathcal{X}\times\mathcal{A}\times\mathcal{T}}
\left(r- \delta(\bm{x },\bm{y}_i) - \ell(t,\alpha)\right)\pi_{i}(\bm{x},\alpha,t)\mathrm{d}(\bm{x},\alpha,t).\\
\end{aligned}
\end{equation}

Applying the standard Lagrangian duality to \eqref{eq:STM}, we write its dual problem as:  
\begin{equation}\tag{STM-D}\label{eq:STM_D}
\begin{aligned}
\inf_{\phi,\varphi_i}&\quad
\int_{\mathcal{X}\times\mathcal{A}} \phi(\bm{x},\alpha)\mu(\bm{x},\alpha)\mathrm{d}(\bm{x},\alpha) + \sum_{i=1}^{n}\int_{\mathcal{T}} \varphi_i(t)c_i(t) \mathrm{d}t  \\
\text{s.t.}&\quad 
\phi(\bm{x},\alpha) + \varphi_i(t) \geq r-\delta(\bm{x},\bm{y}_i) - \ell(t,\alpha),\quad\forall (\bm{x},\alpha,t)\in\mathcal{X}\times\mathcal{A}\times\mathcal{T}, \\
&\quad 
\phi(\bm{x},\alpha), \varphi_i(t)\geq 0,\quad\forall (\bm{x},\alpha,t)\in\mathcal{X}\times\mathcal{A}\times\mathcal{T}.
\end{aligned}
\end{equation}

Unlike the classical OT formulation, our problem neither enforces mass balance between supply and demand nor requires that every demand or supply be matched. Nevertheless, with appropriate adjustments to the primal and dual programs, one can still prove that strong duality holds.

\begin{proposition}\label{prop:general_Dual}
 The strong duality holds for the primal problem \eqref{eq:STM} and the dual problem \eqref{eq:STM_D}.
\end{proposition}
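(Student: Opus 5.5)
Our plan is to establish weak duality directly and then obtain the absence of a gap from a reduction to a classical Kantorovich duality on compact spaces. For weak duality, take any $\pi\in\Gamma(\mu,\bm c)$ and any dual-feasible $(\phi,\varphi)$. Integrating the pointwise constraint against $\pi_i\ge 0$ and summing over $i$ gives
\[
\sum_{i}\int (r-\delta(\bm x,\bm y_i)-\ell(t,\alpha))\pi_i
\le \int \phi(\bm x,\alpha)\sum_i\int_{\mathcal T}\pi_i\,\mathrm dt\,\mathrm d(\bm x,\alpha)+\sum_i\int_{\mathcal T}\varphi_i(t)\int\pi_i\,\mathrm d(\bm x,\alpha)\,\mathrm dt.
\]
Because $\phi,\varphi_i\ge 0$, the two inequality constraints defining $\Gamma(\mu,\bm c)$ bound the right-hand side by the dual objective. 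This step is routine.

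For the reverse inequality, we convert the unbalanced, partial problem into a balanced OT problem by adding dummy atoms. On the demand side, let $\nu=\mu\otimes\mathrm d(\bm x,\alpha)$ on $\mathcal X\times\mathcal A$. On the supply side, let $\kappa$ be the measure on $\mathcal T\times[n]$ with density $c_i(t)$. Add a ``no-service'' atom $s_0$ of mass $C:=\sum_i\int c_i$ to the demand side, and a ``no-match'' atom $d_0$ of mass $M:=\int\mu$ to the supply side. After this, both sides have total mass $M+C$. Define the surplus $\tilde u$ to equal $u:=r-\delta-\ell$ on real pairs and $0$ on any pair involving a dummy; the pair $(s_0,d_0)$ also gets $0$ and carries the slack. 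Feasible plans of \eqref{eq:STM} correspond to couplings of the augmented measures with the same value: unserved demand is sent to $d_0$, idle capacity is filled from $s_0$, and the remainder goes to $(s_0,d_0)$. Conversely, restricting any coupling to the real pairs gives a plan in $\Gamma(\mu,\bm c)$, as long as that restriction has a density. We handle this in the last step.

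Next we apply Kantorovich duality for the balanced problem (e.g.\ \citealt{villani2008optimal}, Thm.~5.10) to the maximization of $\int\tilde u\,\mathrm d\gamma$. The underlying spaces are compact after closing $\mathcal T$ and $\mathcal A$, and $\tilde u$ is bounded and upper semicontinuous, since $\hat\delta$ is continuous and $\ell$ is continuous in $t$. The theorem yields potentials $\tilde\phi,\tilde\varphi$ with $\tilde\phi(a)+\tilde\varphi(b)\ge\tilde u(a,b)$ and no duality gap. We then read off the dual of \eqref{eq:STM}:
\begin{itemize}
\item The constraints against the dummies say $\tilde\phi(\bm x,\alpha)+\tilde\varphi(d_0)\ge0$, $\tilde\phi(s_0)+\tilde\varphi_i(t)\ge0$ and $\tilde\phi(s_0)+\tilde\varphi(d_0)\ge0$.
\item Use the shift invariance $\tilde\phi\mapsto\tilde\phi+k$, $\tilde\varphi\mapsto\tilde\varphi-k$ to set $\tilde\varphi(d_0)=0$.
\item Then $\phi:=\tilde\phi$ on real points is nonnegative.
\item The dual objective contains the term $\tilde\phi(s_0)C$. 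Replacing $\varphi_i:=\tilde\varphi_i+\tilde\phi(s_0)$ makes $\varphi_i\ge0$ and leaves the value unchanged.
\end{itemize}
This gives a feasible pair for \eqref{eq:STM_D} with value equal to the primal supremum. Together with weak duality, this proves strong duality.

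The main obstacle is measure-theoretic. \eqref{eq:STM} is posed over densities $\pi\in\mathcal L_1^+$, while the Kantorovich optimum may be singular; for instance, it may concentrate on graphs $t=T(\bm x,\alpha)$. We therefore only need the \emph{supremum} over densities to equal the value over measures. We would show this by mollifying an optimal coupling in the $(\bm x,\alpha)$ and $t$ variables, then rescaling by a factor $1-\epsilon$ so that the marginal constraints still hold. The continuity of $u$ and dominated convergence then show the value converges as $\epsilon\to0$. The dual functions may need to be taken merely measurable, or a.e.-defined. This is harmless because the constraints only need to hold almost everywhere for weak duality.
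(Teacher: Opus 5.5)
Up to the density step, your argument is the paper's proof. The paper also balances \eqref{eq:STM} with zero-value dummy points: a dummy demand point of mass $\sum_i\int_{\mathcal T}c_i$, and a dummy time of mass $\int\mu/n$ at each station. It then applies Kantorovich duality, normalizes $\hat\phi(\hat\infty,\hat\infty)=0$, and shifts $\phi$ by $\hat\phi_{\min}\le 0$. Your normalization $\tilde\varphi(d_0)=0$ followed by $\varphi_i=\tilde\varphi_i+\tilde\phi(s_0)$ is the mirror image of this and is correct.

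The gap is in the step you flag as the main obstacle; the paper passes over it by asserting that the balanced problem ``gives exactly the same objective.'' Mollifying and then rescaling does not keep you in $\Gamma(\mu,\bm c)$. The marginals of a mollified coupling are the mollified marginals, so you would need $(1-\epsilon)(\gamma_1*\rho_\epsilon)\le\mu$ pointwise. This fails wherever $\mu(\cdot,\alpha)$ vanishes on an open set or jumps, and only $\rho_\alpha\in\mathcal L_1^+(\mathcal X)$ is assumed. For instance, if type-$\alpha$ demand is uniform on a ball inside $\mathcal X$ and fully served, the mollified marginal is positive just outside the ball, where the bound is $0$. The same happens at jumps of a saturated $c_i$, and mollification also pushes mass out of $\mathcal X\times\mathcal T$. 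No constant factor $1-\epsilon$ repairs any of this.

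A repair that works uses structure you already have. The $(\bm x,\alpha)$-marginal $\gamma_1\le\mu$ of any admissible coupling is already absolutely continuous, so only $t$ needs smoothing, and you can do this by averaging against $c_i$ on small cells. Partition $\mathcal T$ into intervals $F_l$ of length at most $h$ and set
\[
\tilde\gamma(\mathrm d\bm x,\mathrm d\alpha,\mathrm dt,i)=\sum_l\gamma(\mathrm d\bm x,\mathrm d\alpha,F_l,i)\,\frac{c_i(t)\,\mathbbm{1}(t\in F_l)\,\mathrm dt}{\int_{F_l}c_i(s)\,\mathrm ds}.
\]
This measure has a density and leaves the $(\bm x,\alpha,i)$-marginal, and hence the spatial term, unchanged. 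Its $(t,i)$-density on $F_l$ is $c_i(t)\,\gamma(\mathcal X\times\mathcal A\times F_l\times\{i\})/\int_{F_l}c_i\le c_i(t)$, so it lies in $\Gamma(\mu,\bm c)$ exactly. Its value differs from that of $\gamma$ by at most $\int\omega_h(\alpha)\,\mathrm d\gamma$, where $\omega_h(\alpha)$ is the modulus of continuity of $\ell(\cdot,\alpha)$ at scale $h$. This tends to $0$ using only the assumed continuity of $\ell$ in $t$ (on a closed horizon) plus dominated convergence, e.g.\ for bounded $\ell$. A smaller point of the same kind: continuity of $\ell$ in $t$ alone does not make $\tilde u$ jointly upper semicontinuous when $\mathcal A$ is a continuum. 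Either assume joint regularity there, or note that $-\tilde u\ge -r$ and invoke Kellerer's duality for Borel costs bounded below.
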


This result can be obtained by investigating an extended version of \eqref{eq:STM} as a balanced OT problem (supply equals demand) that includes additional dummy points (cf. \citealt{ekeland2010existence}, \citealt{caffarelli2010free}). 

The dual problem \eqref{eq:STM_D} is linear in terms of $\phi$ and $\varphi$, which play the role of generalized Kantorovich potentials. The function $\varphi_i(t)$ plays the role of the marginal value of capacity at station $i$ and time $t$, while $\phi(\bm x,\alpha)$ is the marginal value of admitting a unit of type $\alpha$ demand at $\bm x$. The constraint ensures that no assignment can create value beyond the sum of these marginal values (i.e., shadow prices of constraints).

\subsection{A Barycenter Perspective}
Note that in \eqref{eq:STM_D}, the Kantorovich potentials remain continuous functions of $\bm{x}$ and $t$, even when the demand types are finite. In this section, we present an alternative perspective on the spatial and temporal matching problem by reformulating it as the selection of an intermediate measure that jointly balances reward, and spatial and temporal costs. As demonstrated in \Cref{Sec:finite_type}, this reformulation enables a reduction to a finite-dimensional convex program when the demand types are finite. To this end, we introduce two optimal transport subproblems, $OT_{\mathrm{time}}$ and $OT_{\mathrm{space}}$. The first transports the measure of service capacity on $\mathcal{T}\times[n]$, and the second transports the demand measure on $\mathcal{X}\times\mathcal{A}$, both to an intermediate measure $\bm{q}$ defined on $[n]\times\mathcal{A}$.
\begin{equation*}
\begin{aligned}
&OT_{\mathrm{time}}(\bm{c},\bm{q}) =\\
\inf_{\nu}\quad
& \sum_{i=1}^{n}\int_{\mathcal{T}\times\mathcal{A}}
\ell(t,\alpha)\nu_{i}(t,\alpha)\,\mathrm{d}(t,\alpha), \\
\text{s.t.}\quad
& \int_\mathcal{A}\nu_{i}(t,\alpha)\le c_i(t),\quad \forall i,t, \\
& \int_{\mathcal{T}}\nu_{i}(t,\alpha)\,\mathrm{d}t
= q_i(\alpha),\quad \forall i,\alpha;
\end{aligned}
\qquad
\begin{aligned}
&OT_{\mathrm{space}}(\bm{q},\mu) =\\
\inf_{\theta}\quad
& \sum_{i=1}^{n}\int_{\mathcal{X}\times\mathcal{A}}
\delta(\bm{x},\bm{y}_i)\theta_{i}(\bm{x},\alpha)\,\mathrm{d}(\bm{x},\alpha), \\
\text{s.t.}\quad
& \sum_{i=1}^{n}\theta_{i}(\bm{x},\alpha)\le \mu(\bm{x},\alpha),\quad \forall \alpha,\bm{x}, \\
& \int_{\mathcal{X}}\theta_{i}(\bm{x},\alpha)\,\mathrm{d}\bm{x}
= q_i(\alpha),\quad \forall \alpha,i.
\end{aligned}
\end{equation*}

The problem can be viewed as the production and allocation of virtual goods, where each good $(i,\alpha)$ represents serving a unit of type-$\alpha$ demand at station $i$. The production corresponds to temporal scheduling: “producing” a portfolio $\{ q_i(\alpha)\}_{\alpha\in\mathcal A}$ at station $i$ incurs the minimum aggregate temporal cost from optimally matching demand type to the service time under station’s processing capacity $c_i(t)$. The allocation then assigns these goods to end-users, where demand $(\bm x,\alpha)$ “purchases” a good $(i,\alpha)$ by bearing spatial cost $\delta(\bm x,\bm y_i)$. The planner’s objective is to find market-clearing quantities that jointly optimize production and allocation, thus maximizing total social welfare. This decomposition is not merely illustrative but is also helpful to our reformulation. The following theorem formalizes this intuition: the original problem \eqref{eq:STM} is equivalent to optimizing over $\bm q$, the balancing measure linking these two transport problems.

\begin{theorem}\label{Thm:STB}
The problem \eqref{eq:STM} can be reduced to the following problem with respect to $q_i \in \mathcal{L}_1^+\left(\mathcal{A}\right), i\in[n]$:
\begin{equation}\tag{STB}\label{eq:STB}
\begin{aligned}
\sup_{\bm{q}}\quad& \sum_{i=1}^{n}\int_{\mathcal{A}}rq_i(\alpha)\mathrm{d}\alpha-OT_{\mathrm{time}}(\bm{c},\bm{q})-OT_{\mathrm{space}}(\bm{q},\mu),\\
\text{\textup{s.t.}}\quad& \sum_{i=1}^{n}q_i(\alpha)\leq \varrho(\alpha) \quad\forall\alpha\in\mathcal{A},\\
& \int_\mathcal{A} q_i(\alpha)\mathrm{d}\alpha \leq \int_{\mathcal{T}}c_{i}(t)\mathrm{d}t\quad\forall i\in[n].
\end{aligned}
\end{equation}
Moreover, the objective is concave in $\bm q$.
\end{theorem}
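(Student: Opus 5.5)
We prove the equivalence by a two-sided comparison of optimal values, routing a plan through its intermediate measure $\bm q$. Given a feasible plan $\pi\in\Gamma(\mu,\bm c)$, define the marginals
\begin{equation*}
\nu_i(t,\alpha)=\int_{\mathcal X}\pi_i(\bm x,\alpha,t)\,\mathrm d\bm x,\qquad
\theta_i(\bm x,\alpha)=\int_{\mathcal T}\pi_i(\bm x,\alpha,t)\,\mathrm dt,\qquad
q_i(\alpha)=\int_{\mathcal X\times\mathcal T}\pi_i\,\mathrm d(\bm x,t).
\end{equation*}
By Fubini, $\nu$ is feasible for $OT_{\mathrm{time}}(\bm c,\bm q)$, since its $\alpha$-integral is bounded by $c_i(t)$ and its $t$-integral equals $q_i$. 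Likewise $\theta$ is feasible for $OT_{\mathrm{space}}(\bm q,\mu)$. Integrating the two constraints of $\Gamma$ shows that $\bm q$ satisfies both constraints of \eqref{eq:STB}: use $\int\rho_\alpha=1$ for the first and integrate the capacity constraint over $t$ for the second. The objective of \eqref{eq:STM} is additively separable in $(\bm x,i)$ and $(t,\alpha)$, so it splits as $r\sum_i\int q_i-\int\ell\,\nu-\int\delta\,\theta$. This is at most the \eqref{eq:STB} objective at $\bm q$, which gives $\mathrm{val}(\eqref{eq:STM})\le\mathrm{val}(\eqref{eq:STB})$.

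For the reverse inequality, fix any $\bm q$ feasible for \eqref{eq:STB}. If either subproblem is infeasible its value is $+\infty$, so those $\bm q$ are irrelevant; assume both are feasible. Take $\varepsilon$-optimal $\nu$ and $\theta$ for the two subproblems. Both share the marginal $q_i(\alpha)$ on $[n]\times\mathcal A$, so we glue them conditionally on $(i,\alpha)$:
\begin{equation*}
\pi_i(\bm x,\alpha,t)=\frac{\theta_i(\bm x,\alpha)\,\nu_i(t,\alpha)}{q_i(\alpha)},
\end{equation*}
with $\pi_i(\cdot,\alpha,\cdot)=0$ when $q_i(\alpha)=0$. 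This is the barycenter/gluing step. We then check three things.
\begin{enumerate}
\item Integrating over $\bm x$ recovers $\nu_i$, so the capacity constraint holds.
\item Integrating over $t$ recovers $\theta_i$, so summing over $i$ gives at most $\mu$.
\item By separability, the welfare of $\pi$ equals the \eqref{eq:STB} objective at $\bm q$ up to $2\varepsilon$.
\end{enumerate}
Taking the supremum over $\bm q$ and letting $\varepsilon\to0$ gives equality. If we also want the optimal $\pi$ to correspond to an optimal $\bm q$, we need attainment in the subproblems. Since $\mathcal T$ is bounded, $\mathcal X$ is compact and the costs are continuous, weak-$*$ compactness of the feasible sets makes the infima attained. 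Measurability of $\pi$ follows because it is a product and ratio of measurable functions.

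For concavity, it suffices to show that $OT_{\mathrm{time}}(\bm c,\cdot)$ and $OT_{\mathrm{space}}(\cdot,\mu)$ are convex in $\bm q$, with value $+\infty$ off their feasible domains; the term $r\sum_i\int q_i$ is linear. Take $\bm q^0,\bm q^1$ and $\lambda\in[0,1]$, with ($\varepsilon$-)optimal $\nu^0,\nu^1$. Then $\lambda\nu^0+(1-\lambda)\nu^1$ is feasible for $\lambda\bm q^0+(1-\lambda)\bm q^1$, because the constraints are linear and the capacity bound is preserved under convex combination. Its cost is the same convex combination of the costs, which gives convexity; the argument for $OT_{\mathrm{space}}$ is identical. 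The constraint set of \eqref{eq:STB} is convex, being defined by linear inequalities. Hence the \eqref{eq:STB} objective is a linear function minus two convex functions, so it is concave.

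The main obstacle is the gluing step. We must handle null sets where $q_i(\alpha)=0$ and confirm that $\pi$ is integrable on the product space with the mixed Lebesgue/counting measure. Integrability should follow from Tonelli, since $\int\pi_i=\int_{\mathcal A}q_i\le\int c_i<\infty$. We also must justify that restricting \eqref{eq:STB} to $\bm q$ with finite subproblem values loses nothing. The remaining steps are routine bookkeeping with Fubini.
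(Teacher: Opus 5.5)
Your proof is correct and follows the same route as the paper's. You marginalize $\pi$ to show the optimal value of \eqref{eq:STM} is at most that of \eqref{eq:STB}, glue $\pi_i=\theta_i\nu_i/q_i$ along the shared $(i,\alpha)$-marginal for the reverse inequality, and take convex combinations of subproblem plans for concavity; your $\varepsilon$-optimal bookkeeping makes explicit what the paper leaves implicit. The one remark I would drop is the side claim that weak-$*$ compactness gives attainment, because it produces optimal \emph{measures}, and with a continuum of types the optimal $OT_{\mathrm{time}}$ plan can be singular (e.g.\ the counter-monotone coupling under \Cref{assm:Hom_Pref_Het_Sens}), so attainment among densities is only guaranteed under pointwise domination (as for $OT_{\mathrm{space}}$, where $\theta_i\le\mu$, or for finite $\mathcal A$)---but the value equality and concavity you prove do not depend on it.
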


In fact, the reward $r$ can always be absorbed into either the temporal or spatial cost. For instance, define $\hat{\delta}(\bm{x},\bm{y}_i) = \delta(\bm{x},\bm{y}_i) -r$ and denote the corresponding subproblem as $\widehat{OT}_{space}(\bm{q},\mu)$. This adjustment does not alter the procedure for optimizing $\theta$ given $\bm q$ since it only adds a constant in the objective. Alternatively, one may interpret the first term as a special OT problem in which the transportation cost (i.e., the negative reward) is $-r$. Consequently, problem \eqref{eq:STB} decomposes (up to a sign reversal) naturally into two or three OT-type components. This decomposition is conceptually akin to the construction of a Wasserstein barycenter (see, e.g., \citealt{agueh2011barycenters, borgwardt2020improved}). In that framework, a barycenter is a measure that minimizes a weighted sum of transport costs (e.g., the Wasserstein distance) with respect to a set of given measures. Here, our decision $\bm q$ plays a similar role to a mediating plan that balances the sum of spatial and temporal costs. Accordingly, we refer to this reformulation as the spatiotemporal barycenter (STB).

The formulation \eqref{eq:STB} provides an alternative perspective on the problem. The decision variables are now $n$ functions $\{q_i\}_{i=1}^{n}$ on $\mathcal{A}$. By contrast, \eqref{eq:STM} optimizes over $n$ matching functions $\{\pi_i\}_{i=1}^{n}$ on $\mathcal{X}\times\mathcal{A}\times\mathcal{T}$, and \eqref{eq:STM_D} uses $n$ temporal potentials $\{\varphi_i\}_{i=1}^{n}$ on $\mathcal{T}$ plus a single potential $\phi$ on $\mathcal{X}\times\mathcal{A}$. This represents a significant reduction in the domain of the core decision variable. The trade-off is that the objective in \eqref{eq:STB} becomes more intricate, involving two subproblems $OT_{\mathrm{time}}$ and $OT_{\mathrm{space}}$. However, this alternative formulation is advantageous. As we will show in \Cref{Sec:finite_type}, when the set of demand types is finite, this barycenter structure is precisely what allows the problem to be reduced to a tractable, finite-dimensional convex program.

\subsection{Structure of Optimal Matching}\label{sec:opt_structure}

While formulations such as \eqref{eq:STM_D} and \eqref{eq:STB} are essential to our framework, they do not, by themselves, reveal the qualitative nature of the optimal allocation. To generate actionable managerial insights, we examine the underlying geometry and priority rules that characterize the optimal policy $\pi$. Specifically, we introduce several structural properties of the matching and show that, without loss of optimality, attention may be restricted to policies that satisfy these properties. 

Our first structural result characterizes the spatial property of the matching. It is based on generalized Laguerre cells, a weighted extension of the classical Laguerre cells (also known as Voronoi diagram). In this framework, each site is assigned a real-valued weight that modifies its effective influence on nearby points, so that the allocation depends not only on transportation cost but also on the site-specific priority measured by the weight. Our construction further extends this concept by incorporating a threshold parameter, which defines a residual ``no-match" region. 

\begin{definition}[Generalized Laguerre Cells]\label{def:GLC}
Given a transportation cost $\delta$ and a set of stations located at $\{\bm{y}_i\}_{i=1}^n\subset \mathcal{X}$. The generalized Laguerre cells are defined by a weight vector $\bm{\omega}\subset\mathbb{R}^{n}$ and a threshold $\omega_0$. Concretely, the \emph{generalized Laguerre cell} of site $i$ is the region
$
\mathcal{C}^{\omega_0}_i(\bm{\omega}) \coloneqq \left\{\bm{x}\in\mathcal{X}:\delta(\bm{x},\bm{y}_i) + \omega_i \leq \omega_0, \delta(\bm{x},\bm{y}_i) + \omega_i \leq\delta(\bm{x},\bm{y}_{\tilde{i}}) + \omega_{\tilde{i}}, \forall\,\tilde{i}\neq i\right\}
$. 
Moreover,
$
\mathcal{C}^{\omega_0}_0(\bm{\omega}) \coloneqq \left\{\bm{x}\in\mathcal{X}:\omega_0 \leq\delta(\bm{x},\bm{y}_{\tilde{i}}) + \omega_{\tilde{i}},\forall\,\tilde{i}\in[n]\right\}=\mathcal{X}\setminus\left(\cup_{i=1}^{n}\mathcal{C}_i(\bm{\omega})\right)
$
is the residual of $\mathcal{X}$.
\end{definition}

When $\omega_0 \rightarrow +\infty$ and $\bm\omega = \bm{0}$, the generalized Laguerre cells reduce to the classic unweighted Laguerre diagram. Because $\delta$ is strictly increasing in the Euclidean distance, $\mathcal{C}^{+\infty}_i(\bm{0})$ corresponds to the set of points $\bm{x}$ that are closer to station $i$ than to any other station. Conversely, a non-zero weight vector $\bm\omega$ introduces station-specific penalties, meaning that a point might be assigned to a more distant station if its associated weight is sufficiently lower to offset the increased spatial cost. 

A standard argument further shows that the boundaries of cells $\mathcal{C}^{\omega_0}_i(\bm{\omega})$ have Lebesgue measure zero.

\begin{lemma}\label{lemma:null_space}
For any $i,j\in\{0,1,\dots,n\}$ with $i\neq j$, $\mathcal{C}^{\omega_0}_i(\bm{\omega})\cap\mathcal{C}^{\omega_0}_j(\bm{\omega})$ has Lebesgue measure $0$ for all $\bm\omega$ and $\omega_0$.
\end{lemma}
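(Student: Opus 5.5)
Any point in the intersection of two cells satisfies an equality constraint, so the plan is to show that every such equality set is Lebesgue-null.

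\emph{Case $i,j\in[n]$.} Any $\bm x\in\mathcal{C}^{\omega_0}_i(\bm\omega)\cap\mathcal{C}^{\omega_0}_j(\bm\omega)$ satisfies
$$\delta(\bm x,\bm y_i)+\omega_i=\delta(\bm x,\bm y_j)+\omega_j.$$

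\emph{Case one index is $0$, say $j=0$, $i\in[n]$.} The point satisfies
$$\delta(\bm x,\bm y_i)+\omega_i=\omega_0.$$
Hence the intersection is contained in the zero set of either
$$g_{ij}(\bm x)=\hat\delta(\|\bm x-\bm y_i\|_2)-\hat\delta(\|\bm x-\bm y_j\|_2)+\omega_i-\omega_j \quad\text{or}\quad g_{i0}(\bm x)=\hat\delta(\|\bm x-\bm y_i\|_2)+\omega_i-\omega_0.$$
It therefore suffices to show that each such level set has measure zero in $\mathbb{R}^d$.

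The case with $0$ is easy. Since $\hat\delta$ is strictly increasing, $\hat\delta(s)=\omega_0-\omega_i$ has at most one solution $s\ge 0$. The zero set is thus empty, the single point $\bm y_i$, or a sphere $\{\|\bm x-\bm y_i\|_2=s\}$. Each of these is null (for $d\ge 2$ a sphere is a smooth hypersurface, and in any dimension it is null by Fubini in polar coordinates).

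For $i,j\in[n]$, assume first that $\bm y_i\neq\bm y_j$. The plan is to use the Fubini/coarea idea along lines. Work in polar coordinates centered at $\bm y_j$, or alternatively restrict $g_{ij}$ to rays from $\bm y_i$, and show that on almost every line the zero set of $g_{ij}$ is countable.

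A cleaner route is to note that $g_{ij}$ is $C^1$ on $\mathbb{R}^d\setminus\{\bm y_i,\bm y_j\}$, with gradient
$$\nabla g_{ij}(\bm x)=\hat\delta'(\|\bm x-\bm y_i\|)\,\bm u_i-\hat\delta'(\|\bm x-\bm y_j\|)\,\bm u_j,$$
where $\bm u_i,\bm u_j$ are the unit vectors from $\bm y_i$ and $\bm y_j$ toward $\bm x$. This gradient vanishes only if $\bm u_i=\bm u_j$, because both coefficients are positive and the vectors are unit. That happens only when $\bm x$ lies on the line through $\bm y_i,\bm y_j$ outside the segment between them. Off this line (a null set), $\nabla g_{ij}\neq 0$, so by the implicit function theorem the zero set is locally a $C^1$ hypersurface. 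A countable cover then shows it has measure zero. Adding back the line and the two points, which are null since $d\ge2$, finishes this case.

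If $\bm y_i=\bm y_j$, then $g_{ij}\equiv\omega_i-\omega_j$ is constant. It is nonzero if $\omega_i\ne\omega_j$, giving an empty intersection. The degenerate case of coincident stations with equal weights has to be excluded or handled by a tie-breaking convention. I would state explicitly that stations are assumed distinct, which the paper implicitly presumes.

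The main obstacle is the distinct-station case, namely verifying the nonvanishing of the gradient off the line. A fallback is a one-dimensional argument. Along any ray from $\bm y_i$ that does not point directly away from $\bm y_j$, the map $s\mapsto g_{ij}(\bm y_i+s\bm e)$ is strictly monotone on the relevant piece, or has only isolated zeros. Fubini in polar coordinates about $\bm y_i$ then yields measure zero.
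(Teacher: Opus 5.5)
Your argument is correct, but in the two-station case it takes a different route from the paper.

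The paper normalizes $\bm y_i=\bm 0$ and $\|\bm y_j\|_2=1$, and writes $\bm x=\xi\bm y_j+\bm z$ with $\bm z\perp\bm y_j$. It then shows that for every $\bm z\neq\bm 0$ the slice function $\xi\mapsto h_{\bm z}(\xi)-h_{\bm z}(\xi-1)$, with $h_{\bm z}(\xi)=\hat\delta(\sqrt{\|\bm z\|_2^2+\xi^2})$, is strictly increasing, because $h_{\bm z}$ is strictly convex. So each line parallel to $\bm y_j-\bm y_i$, other than the one through the stations, meets the boundary at most once, and Fubini finishes. That route is purely one-variable and even gives a global picture: the bisector is a graph over the hyperplane orthogonal to $\bm y_j-\bm y_i$. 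But it genuinely relies on convexity of $\hat\delta$.

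Your gradient/implicit-function argument is local and needs only $\hat\delta'>0$ for non-degeneracy, so it is more general. The price is a regularity step you should state explicitly. The implicit function theorem requires $\hat\delta'$ to be continuous on $(0,\infty)$. The paper does not assume this directly, but it follows because a differentiable convex function has a continuous derivative. Alternatively, you can bypass $C^1$ entirely with the standard fact that at a Lebesgue density point of a level set where $g_{ij}$ is differentiable, $\nabla g_{ij}$ must vanish.

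Your caveat about coincident stations is well taken. The paper's normalization $\|\bm y_i-\bm y_j\|_2=1$ silently assumes $\bm y_i\neq\bm y_j$. With $\bm y_i=\bm y_j$ and $\omega_i=\omega_j$ the two cells coincide, and the lemma fails.

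Your ray-based fallback is unnecessary and, as stated, unsupported: monotonicity along rays from $\bm y_i$ is not clear. Slicing along lines parallel to $\bm y_j-\bm y_i$, as the paper does, is what makes that one-dimensional idea work.
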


Since the demand is continuous on $\mathcal{X}$ with a density, the total demand on these boundaries has measure zero and is therefore negligible. In what follows, we therefore ignore all geometric boundary effects without loss of generality.

Based on the generalized Laguerre cell, we show the following property of optimal matching.

\begin{proposition}\label{prop:Laguerre_cell}
Without loss of optimality, there exists an optimal matching for \eqref{eq:STB} whose spatial assignment is characterized by type-based generalized Laguerre cells. That is, there exists a function $\bm{\omega}(\alpha):\mathcal{A}\mapsto\mathbb{R}^{n}$ for each demand type~$\alpha$, such that for station $i$, any type $\alpha$ demand will be transported to station $i$ if and only if it is in the region $\mathcal{C}^{r}_i(\bm{\omega}(\alpha))$.
\end{proposition}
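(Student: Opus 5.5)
Since \eqref{eq:STB} maximizes over $\bm q$, we fix an optimal $\bm q^\star$ (assuming one exists; otherwise we run the same argument along a maximizing sequence). By \Cref{Thm:STB}, an optimal plan for \eqref{eq:STM} is obtained by pairing an optimal $\nu$ for $OT_{\mathrm{time}}(\bm c,\bm q^\star)$ with an optimal $\theta$ for $OT_{\mathrm{space}}(\bm q^\star,\mu)$. The spatial assignment is therefore completely described by $\theta$. The key observation is that $OT_{\mathrm{space}}$ decouples across types: for each fixed $\alpha$ it is a semi-discrete partial transport problem. It moves mass from the density $\varrho(\alpha)\rho_\alpha(\cdot)$ on $\mathcal X$ to the $n$ atoms $\bm y_i$ with prescribed masses $q_i^\star(\alpha)$, with unmatched demand allowed. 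So we work type by type.

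For fixed $\alpha$, we write the Lagrangian dual of the per-type problem. We introduce multipliers $\psi(\bm x)\ge 0$ for the supply inequality and free multipliers $-\omega_i(\alpha)$ for the equality constraints. The dual reads
\[
\sup_{\psi\ge 0,\ \bm\omega}\ -\int_{\mathcal X}\psi(\bm x)\mu(\bm x,\alpha)\,\mathrm d\bm x-\sum_i \omega_i q_i^\star(\alpha)
\quad\text{s.t.}\quad \delta(\bm x,\bm y_i)+\omega_i+\psi(\bm x)\ge 0 .
\]
The optimal $\psi$ is $\psi(\bm x)=(-\min_i\{\delta(\bm x,\bm y_i)+\omega_i\})^+$, which makes the dual a finite-dimensional concave problem in $\bm\omega$, exactly as in semi-discrete OT. Strong duality and attainment follow as in \Cref{prop:general_Dual}. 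The same dummy-point balancing works here: we add a sink of zero cost that absorbs unmatched demand, and feasibility requires $\sum_i q_i^\star(\alpha)\le\varrho(\alpha)$, which is guaranteed by the constraint of \eqref{eq:STB}.

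Complementary slackness then forces $\theta_i(\bm x,\alpha)>0$ only where $\delta(\bm x,\bm y_i)+\omega_i=\min_j\{\delta(\bm x,\bm y_j)+\omega_j\}\wedge 0$. It also forces full absorption of $\mu$ wherever this minimum is negative, and zero assignment wherever it is positive.

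To match the threshold $r$ in the statement, we absorb the reward as the text after \Cref{Thm:STB} suggests. Shifting $\omega_i\mapsto\omega_i+r$ turns the "$\le 0$" threshold into "$\le r$", so the support of $\theta_i(\cdot,\alpha)$ lies in $\mathcal C^r_i(\bm\omega(\alpha))$. Conversely, on the interior of $\mathcal C^r_i(\bm\omega(\alpha))$ all type-$\alpha$ mass goes to $i$. By \Cref{lemma:null_space}, the overlaps between cells and the level set $\{\min=r\}$ are null, and $\rho_\alpha$ has a density, so this gives the "if and only if" up to measure zero.

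The main obstacle is the degenerate case where some $q_i^\star(\alpha)=0$, or where the threshold constraint is not tight, so that the optimal dual $\bm\omega$ is non-unique or the zero level set is not null. In particular, with the free sink, the threshold $0$ (that is, $r$) need not be the "active" one. Demand inside $\{\min_i(\delta+\omega_i)<r\}$ might exceed $\sum_i q^\star_i(\alpha)$ unless $\bm q^\star$ is optimal in \eqref{eq:STB}. To handle this, we use optimality of $\bm q^\star$: if some demand with $\delta+\omega_i<r$ were left unserved, then increasing $q_i^\star(\alpha)$ slightly would be feasible and would raise the objective. This requires comparing with the temporal marginal cost $\varphi_i$, so we combine the multipliers from \eqref{eq:STM_D}, setting $\omega_i(\alpha)$ to the temporal shadow cost of producing good $(i,\alpha)$. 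With this choice the threshold is exactly $r$. For the degenerate non-uniqueness, we choose $\omega_i(\alpha)$ large for stations with $q_i^\star(\alpha)=0$, and we select a version of $\bm\omega(\alpha)$ that is measurable in $\alpha$ by a standard measurable-selection argument.
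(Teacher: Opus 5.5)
Your argument is essentially the paper's proof. You decouple $OT_{\mathrm{space}}$ by type, dualize the per-type semi-discrete partial transport problem (your slack multiplier $\psi\ge 0$ plays the role of the paper's zero-cost dummy station), read off generalized Laguerre cells from complementary slackness, and discard boundaries via \Cref{lemma:null_space}; your shift $\omega_i\mapsto\omega_i+r$ is just the normalization the paper leaves implicit, since $\mathcal{C}^{\omega_0}_i(\bm\omega)$ is invariant under a common shift of $\bm\omega$ and $\omega_0$.

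Your last paragraph is unnecessary, and its worries do not arise. Complementary slackness for any feasible $\bm q^\star$ already forces full absorption on $\{\min_i(\delta+\omega_i)<0\}$, as you noted yourself. The set $\{\min_i(\delta+\omega_i)=0\}$ lies in a finite union of spheres and is null. A station with $q^\star_i(\alpha)=0$ automatically has a cell interior carrying no type-$\alpha$ mass. So neither optimality of $\bm q^\star$ in \eqref{eq:STB} nor the \eqref{eq:STM_D} multipliers are needed.
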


\Cref{prop:Laguerre_cell} shows that the service regions are not necessarily fixed partitions, but rather overlap across different demand types. Unlike classical OT or Voronoi partitions, in which every location has a unique serving facility, here the effective territory of a station depends on the demand type. This implies that service boundaries are fluid, adjusting in response to demand heterogeneity.

To analyze optimal time scheduling, we identify structural properties of optimal matching under two settings: (i) homogeneous preferences with heterogeneous sensitivities, and (ii) homogeneous sensitivities with heterogeneous preferences. We begin with the first case, formally described in the following assumption.

\begin{assumption}
[{\sc Homogeneous Preference with Heterogeneous Sensitivities}]\label{assm:Hom_Pref_Het_Sens}
Let the preferred service time be normalized to $t=0$ and demand type $\alpha\in\mathcal{A}\subseteq\mathbb{R}^+$ denote the time sensitivity. The temporal cost function $\ell\colon\mathcal{T}\times \mathcal{A}\to[0,\infty]$ satisfies the following conditions:
\begin{itemize}
    \item$\ell(t,\alpha)\ge0, \forall t,\alpha$ and $\ell(0,\alpha)=0$;
    \item for each $\alpha$, $t\mapsto\ell(t,\alpha)$ is continuous, differentiable, and strictly increasing (decreasing) on $\mathcal{T}^+$ ($\mathcal{T}^-$); 
    \item if $\alpha_1>\alpha_2$ then $|\partial_t\ell(t,\alpha_1)|>|\partial_t\ell(t,\alpha_2)|$ for all $t\in\mathcal{T}$.
\end{itemize}
Where $\mathcal{T}^+ = \mathcal{T}\cap\mathbb{R}^+$ and $\mathcal{T}^- = \mathcal{T}\setminus \mathcal{T}^+$ represent the positive and negative time portions of the horizon, respectively.
\end{assumption}

This assumption posits that all demand types share the same preferred service time, normalized to the reference time $t=0$, but differ in the degree of pain associated with being served earlier or later. The function $\ell(t,\alpha)$ is a temporal cost (or delay/earliness cost): it is zero exactly at the reference time and increases smoothly as service deviates from that time, increasing on the “late” side $\mathcal{T}^+=\{t\geq0\}$ and decreasing on the “early” side $\mathcal{T}^-=\{t<0\}$. The third requirement implies that higher-$\alpha$ types have uniformly steeper marginal temporal cost, i.e., they are more time-sensitive everywhere. Economically, $\alpha$ indexes impatience or urgency; a larger $\alpha$ means the customer’s utility drops more quickly if you miss the preferred time. This structure is typical when everyone wants the same release or appointment window, but some are more tolerant of deviations. 
As highlighted in our motivating examples, during mass vaccination campaigns, all participants may prefer early appointments; however, those at higher risk, such as the elderly, are especially delay-sensitive. Similarly, for inbound e-commerce shipments, while all sellers prefer immediate inbound, those facing low inventory levels and imminent stockouts typically exhibit greater urgency. 
Under this assumption, we present the following proposition.

\begin{proposition}\label{prop:sensitive_priority}
Under \Cref{assm:Hom_Pref_Het_Sens}, without loss of optimality, the decision maker can restrict attention to \textbf{sensitivity-prioritized schedules}. 
That is, if two demand types with sensitivities $\alpha_1>\alpha_2$ are matched to a common station~$i$ on the same side of the ideal time $0$, then the more sensitive demand $\alpha_1$ is scheduled closer to~$0$ than the less sensitive demand $\alpha_2$.
\end{proposition}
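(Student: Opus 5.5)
Fix an optimal solution of \eqref{eq:STB}, available by \Cref{Thm:STB}, with optimal $\bm q$. Given $\bm q$, the spatial and temporal subproblems separate, so the temporal schedule at each station $i$ is an optimal solution $\nu_i$ of the station-$i$ part of $OT_{\mathrm{time}}(\bm c,\bm q)$, and the spatial assignment can be kept fixed. It therefore suffices to show that, for each station $i$ separately, some optimal $\nu_i$ for the given marginals $q_i(\cdot)$ and capacity $c_i(\cdot)$ is sensitivity-prioritized on $\mathcal T^+$ and on $\mathcal T^-$. Replacing $\nu_i$ by such a plan leaves $\bm q$, hence the spatial cost and the reward, unchanged and does not increase the temporal cost. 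So the modified plan is still optimal.

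\textbf{Exchange step.} Work on $\mathcal T^+$; the case $\mathcal T^-$ is symmetric, with $t\mapsto -t$. Suppose $0\le t_1<t_2$ and type $\alpha_2$ is served at $t_1$ while type $\alpha_1>\alpha_2$ is served at $t_2$. Move mass $\varepsilon$ of $\alpha_1$ from $t_2$ to $t_1$, and mass $\varepsilon$ of $\alpha_2$ from $t_1$ to $t_2$, using small neighbourhoods with matched Lebesgue mass when $\mathcal A$ or $\mathcal T$ is continuous. This swap preserves the capacity usage at each time and each type marginal $q_i(\alpha)$. The change in cost is
\[
\varepsilon\big[\ell(t_1,\alpha_1)-\ell(t_2,\alpha_1)+\ell(t_2,\alpha_2)-\ell(t_1,\alpha_2)\big]
=-\varepsilon\int_{t_1}^{t_2}\big(\partial_t\ell(s,\alpha_1)-\partial_t\ell(s,\alpha_2)\big)\,ds<0 .
\]
The inequality holds because, by \Cref{assm:Hom_Pref_Het_Sens}, both derivatives are positive on $\mathcal T^+$ and $|\partial_t\ell(\cdot,\alpha_1)|>|\partial_t\ell(\cdot,\alpha_2)|$. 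The integrand is therefore a strict single-crossing (submodularity) condition, and any inversion strictly lowers the cost. Consequently, in any optimal $\nu_i$ the set of inverted pairs has measure zero.

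\textbf{From exchange to a global structure.} The plan is not to iterate swaps, since that raises convergence issues in the continuum. Instead I would construct the candidate directly on $\mathcal T^+$. Let $m_i^+(\alpha)$ be the mass of type $\alpha$ that $\nu_i$ schedules in $\mathcal T^+$, and let $C^+(t)=\int_0^t c_i(s)\,ds$ be the cumulative capacity. Starting from time $0$, fill the used capacity, taken as given from $\nu_i$, in decreasing order of $\alpha$. This is the monotone (comonotone) rearrangement that matches the cumulative mass of types above $\alpha$ with the cumulative used capacity. It has the same time marginal as the restriction of $\nu_i$ to $\mathcal T^+$ and the same type marginal $m_i^+$, so it is feasible. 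The supermodularity of $-\ell$ in $(t,\alpha)$ on $\mathcal T^+$ then gives, by the classical result that the monotone coupling is optimal for supermodular costs in one dimension (with a Lorentz-type inequality), that its cost does not exceed that of $\nu_i$. Doing the same on $\mathcal T^-$, ordered outward from $0$, produces the required schedule.

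\textbf{Main obstacle.} The main obstacle is this rigorous rearrangement with a general, possibly mixed, measure on $\mathcal A$, and with the fact that $\ell$ need not be differentiable at $0$ across the two sides. I would handle it by treating the two sides entirely separately and applying the one-dimensional monotone rearrangement to each side's time marginal. This avoids any pairwise swapping argument in the continuum.
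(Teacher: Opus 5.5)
Your argument is correct and is essentially the paper's: both split $OT_{\mathrm{time}}$ by station and by side of $0$ with the side masses $q_i^{\pm}$ held fixed, and both use the derivative condition in \Cref{assm:Hom_Pref_Het_Sens} to show that $\ell$ has increasing differences in $(t,\alpha)$ on $\mathcal T^+$. Both then invoke the one-dimensional fact that the coupling placing higher $\alpha$ at earlier times (the paper's counter-monotone copula, via Tchen) minimizes such a cost; the paper first removes idle capacity so that service occupies $[0,t_{q_i^+}]$, whereas you keep the used time marginal of $\nu_i$ fixed, and both routes work. One naming slip: what you use is supermodularity of $\ell$ (equivalently submodularity of $-\ell$), not supermodularity of $-\ell$, but your actual construction, filling from $0$ in decreasing order of $\alpha$, is the correct counter-monotone plan.
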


From this proposition, we observe that optimal time scheduling prioritizes demand with higher time sensitivity, if admitted, assigning it a closer time to the ideal time $0$ than another admitted demand with low time sensitivity. However, the following proposition shows that the service inclusion actually favors demand with lower time sensitivity. 

\begin{proposition}\label{prop:exclusion}
Under \Cref{assm:Hom_Pref_Het_Sens}, without loss of optimality, the decision maker can restrict attention to \textbf{sensitivity-monotone coverage}. That is, denote the service coverage of type $\alpha$ demand as $S(\alpha)\subseteq\mathcal{X}$, then for any two distinct sensitivity $\alpha_1<\alpha_2$, $S(\alpha_2)\subseteq S(\alpha_1)$.
\end{proposition}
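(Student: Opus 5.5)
An exchange argument on an optimal plan, written in the barycenter form \eqref{eq:STB} together with the Laguerre structure of \Cref{prop:Laguerre_cell}, should give the result. The starting point is an optimal plan in which type-$\alpha$ demand at $\bm x$ is served by station $i$ at time $t$. Such a unit earns $r-\delta(\bm x,\bm y_i)-\ell(t,\alpha)$, and by complementary slackness in \eqref{eq:STM_D} we have
$$\phi(\bm x,\alpha)=\max\Bigl\{0,\ \max_{i,t}\bigl(r-\delta(\bm x,\bm y_i)-\ell(t,\alpha)-\varphi_i(t)\bigr)\Bigr\}.$$
Coverage $S(\alpha)$ is, up to null sets (\Cref{lemma:null_space}), the set of $\bm x$ where the inner maximum is nonnegative. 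It is positive on the interior and served, and the ties can be broken toward serving or not serving without loss of optimality.

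\textbf{Step 1: an exchange on the coverage sets.} Suppose $\alpha_1<\alpha_2$ and there is a set $B\subseteq S(\alpha_2)\setminus S(\alpha_1)$ of positive demand mass. On $B$, type $\alpha_2$ is served at some station $i$ and time $t$, while type $\alpha_1$ is unserved. Swap a small mass $\epsilon$: serve the type-$\alpha_1$ demand in $B$ at the same $(i,t)$ slots, and release the type-$\alpha_2$ demand. Capacity usage is unchanged, the demand constraints still hold, and the spatial cost is unchanged because the location is the same. The change in welfare is
$$\epsilon\bigl(\ell(t,\alpha_2)-\ell(t,\alpha_1)\bigr).$$
Under \Cref{assm:Hom_Pref_Het_Sens}, $\ell(0,\cdot)=0$ and $|\partial_t\ell|$ is strictly larger for $\alpha_2$ on each side of $0$. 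Integrating from $0$ to $t$ therefore gives $\ell(t,\alpha_2)\ge\ell(t,\alpha_1)$, with strict inequality for $t\neq0$. So the swap weakly improves welfare, and strictly improves it unless $t=0$. When the densities $\rho_{\alpha_1}$ and $\rho_{\alpha_2}$ differ on $B$, the swap is done with mass $\min$ of the two amounts. Any leftover type-$\alpha_2$ mass that cannot be covered is absorbed in Step 2.

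\textbf{Step 2: the dual-threshold version, which I expect to be cleaner and is the main obstacle.} The exchange in Step 1 needs the type-$\alpha_1$ demand at $B$ to exist, which it does whenever $\rho_{\alpha_1}>0$ there. But $S(\alpha)$ is a coverage set and not a mass statement, so pure swapping cannot resolve the case where type-$\alpha_1$ has only small mass at $B$. I would therefore argue through potentials instead. For $\bm x\in S(\alpha_2)$ there exist $(i,t)$ with
$$r-\delta(\bm x,\bm y_i)-\varphi_i(t)\ \ge\ \ell(t,\alpha_2)\ \ge\ \ell(t,\alpha_1),$$
so the same $(i,t)$ gives type $\alpha_1$ a nonnegative reduced value at $\bm x$. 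By complementary slackness, type $\alpha_1$ at $\bm x$ is then either served or indifferent. In the indifferent case, admitting it at $(i,t)$ does not change the optimal value only if capacity at $(i,t)$ is available. The delicate point is when capacity is binding there. In that case I return to the swap of Step 1, trading type-$\alpha_2$ mass served at the same station and time (possibly from a different location, using the Laguerre weights $\bm\omega(\alpha)$) to free capacity.

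\textbf{Step 3: closing the argument.} Repeating these improvements, and taking a limit using compactness and the finite-dimensional structure in \eqref{eq:STB}, yields an optimal plan with $S(\alpha_2)\subseteq S(\alpha_1)$. Equivalently, in Laguerre terms the weights satisfy $\omega_i(\alpha_1)\le\omega_i(\alpha_2)$ for every $i$, so the cells $\mathcal{C}^r_i(\bm\omega(\alpha))$ with threshold $r$ shrink as $\alpha$ grows. This weight monotonicity is the statement I would ultimately aim to establish, using $\omega_i(\alpha)=\min_t\bigl(\ell(t,\alpha)+\varphi_i(t)\bigr)$, which is monotone in $\alpha$ because $\ell$ is.
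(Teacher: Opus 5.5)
Your ingredients are right, but the argument as written never closes. Step~2 is organized around an obstacle that does not exist. Step~3 (``repeating these improvements \dots\ taking a limit using compactness and the finite-dimensional structure in \eqref{eq:STB}'') is not a valid step: \eqref{eq:STB} is infinite-dimensional when $\mathcal A$ is a continuum, and you never show that the iterates converge to a plan that is still optimal and has monotone coverage.

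The missing observation is strictness. In Step~2 you only record that type $\alpha_1$ has a nonnegative reduced value at $(i,t)$. But Step~1 already gives $\ell(t,\alpha_2)-\ell(t,\alpha_1)>0$ for $t\neq 0$, and $\{t=0\}$ carries no mass under a time density. Hence, if $(\bm x,\alpha_2)$ is served at $(i,t)$, the support equality for $\alpha_2$ and dual feasibility for $\alpha_1$ give
\[
\phi(\bm x,\alpha_1)\ \ge\ r-\delta(\bm x,\bm y_i)-\ell(t,\alpha_1)-\varphi_i(t)\ =\ \phi(\bm x,\alpha_2)+\ell(t,\alpha_2)-\ell(t,\alpha_1)\ >\ 0 .
\]
Complementary slackness for the demand constraint then forces $\sum_i\int_{\mathcal T}\pi_i(\bm x,\alpha_1,t)\,\mathrm dt=\mu(\bm x,\alpha_1)$, so $\bm x\in S(\alpha_1)$. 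There is no ``indifferent'' case, and binding capacity at $(i,t)$ is irrelevant. Complementary slackness is a property of the optimal plan, which already respects capacity, so you never need to insert $\alpha_1$ at that particular slot.

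This is exactly the paper's proof, run contrapositively. The paper assumes $(\bm x,\alpha_1)$ is unserved, so $\phi^*(\bm x,\alpha_1)=0$. It then combines the support equality for $\alpha_2$ at $(i_2,t_2)$ with dual feasibility for $\alpha_1$ at the same $(i_2,t_2)$ to get $\phi^*(\bm x,\alpha_2)<0$, contradicting $\phi^*\ge 0$.

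The same observation turns your other two routes into complete proofs, so the iterative Step~3 should simply be dropped.

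\emph{Your Step~1 swap.} A swap of \emph{any} mass $\epsilon>0$ strictly raises welfare, again because $\{t=0\}$ is null, so it contradicts optimality outright. There is no need to swap the minimum of the two densities or to handle leftovers. Where $\mu(\bm x,\alpha_1)=0$, coverage of $\alpha_1$ is vacuous and can be declared.

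\emph{Your closing Laguerre remark.} It is itself a proof once stated as one. With the type-independent $\varphi_i$, set $\omega_i(\alpha)=\inf_t\{\ell(t,\alpha)+\varphi_i(t)\}$, so that $\phi(\bm x,\alpha)=(\max_i[r-\delta(\bm x,\bm y_i)-\omega_i(\alpha)])^+$. Complementary slackness then identifies $S(\alpha)$ with $\{\bm x:\min_i[\delta(\bm x,\bm y_i)+\omega_i(\alpha)]<r\}$, up to the null spheres of \Cref{lemma:null_space}. Pointwise $\ell(\cdot,\alpha_1)\le\ell(\cdot,\alpha_2)$ gives $\omega_i(\alpha_1)\le\omega_i(\alpha_2)$, hence $S(\alpha_2)\subseteq S(\alpha_1)$. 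This version is slightly cleaner than the paper's: it needs only weak monotonicity of $\ell$ in $\alpha$, and it defines a coverage representative for every single type, which matters when individual types in a continuum carry zero mass.
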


Taken together, Propositions~\ref{prop:sensitive_priority} and~\ref{prop:exclusion} highlight a subtle tension in optimal time scheduling. On the one hand, when high-sensitivity customers are admitted, they will be served closer to the ideal time, ahead of less-sensitive customers, ensuring that urgency translates into temporal priority. On the other hand, whether such high-sensitivity customers are admitted depends on the service inclusion rule: the system favors accommodating lower-sensitivity types, who are easier to schedule without incurring high costs, while possibly excluding excessively time-sensitive ones, whose service leads to negative utility. This combination implies that optimal scheduling is priority-based within the served set, but can be selective across sensitivities, potentially balancing fairness in scheduling with efficiency in inclusion. Nevertheless, in applications such as vaccination scheduling, where the service reward is sufficiently large, the optimal policy serves all demand, so that the exclusion effect is inactive and only the priority effect remains relevant, with the elderly prioritized for vaccination.

A complementary scenario arises when all demand shares the same sensitivity to pull-in early or delay, but different demand types have different ideal service times. This case is formalized as follows. 

\begin{assumption}[{\sc Homogeneous Sensitivity with Heterogeneous Preferences}]\label{assm:Hom_Sens_Het_Pref}
Let $\tau_\alpha : \mathcal{A}\mapsto\mathcal{T}$ denote the time preference of demand type $\alpha$
. The function $\ell\colon\mathcal{T}\times \mathcal{A}\to[0,\infty]$ is defined by $\ell(t,\alpha) = \hat{\ell}(t - \tau_\alpha)$, where $\hat{\ell}:\mathbb{R} \to \mathbb{R}^+$ is a convex function satisfying $\hat{\ell}(t) \ge 0$ for all $t \in \mathbb{R}$, and $\hat{\ell}(t) = 0$ if and only if $t = 0$.
\end{assumption}

In essence, this assumption models temporal cost as a common penalty curve, horizontally shifted to align with individual preferences. All demand types share a common temporal cost function, $\hat{\ell}$, which penalizes deviations from their personal ideal time and captures how quickly dissatisfaction increases as the assigned time deviates from this ideal. However, their ideal time $\tau_\alpha$ may differ. Convexity of $\hat{\ell}$ means marginal pain increases as the deviation gets large. Economically, this model's populations have dispersed preferred times but similar tolerance around those times. For example, dental patients with different appointment preferences throughout the day, and package recipients who prefer distinct delivery windows due to work-related constraints. 

\begin{proposition}\label{prop:order_preserve}
Under \Cref{assm:Hom_Sens_Het_Pref}, without loss of optimality, the decision maker can restrict attention to \textbf{order-preserving schedules}. That is, if two demand types with preferred times $\tau<\tau'$ are matched to a common station~$i$, their realized service times $t,t'$ satisfy $t \leq t'$. 
\end{proposition}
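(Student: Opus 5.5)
We argue by an exchange (uncrossing) argument at a single station, using the decomposition of \Cref{Thm:STB}. Fix an optimal $\bm q$ for \eqref{eq:STB} together with optimal plans $\nu$ for $OT_{\mathrm{time}}(\bm c,\bm q)$ and $\theta$ for $OT_{\mathrm{space}}(\bm q,\mu)$. The spatial part and the total welfare depend on the temporal plan only through $OT_{\mathrm{time}}$. It therefore suffices to show that, for each station $i$, among all temporal plans $\nu_i$ with the given capacity profile $c_i(t)$ and marginals $q_i(\alpha)$, there is a cost-minimizing one that is order-preserving in $\tau_\alpha$. Recombining such a $\nu_i$ with $\theta$ (by disintegration over $\alpha$) then gives an optimal $\pi$ for \eqref{eq:STM} that is order-preserving at every station.

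The key inequality comes from convexity of $\hat\ell$ in \Cref{assm:Hom_Sens_Het_Pref}. Take $\tau<\tau'$ and $t>t'$. The claim is
\begin{equation*}
\hat\ell(t'-\tau)+\hat\ell(t-\tau')\le \hat\ell(t-\tau)+\hat\ell(t'-\tau').
\end{equation*}
This is the submodularity of $(t,\tau)\mapsto \hat\ell(t-\tau)$, equivalently the Monge property of the cost. To prove it, set $a=t'-\tau'$ and $b=t-\tau$, so the two arguments on the right-hand side are the extremes $a<b$. The two arguments on the left, $t'-\tau$ and $t-\tau'$, have the same sum $a+b$ and both lie in $[a,b]$. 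Convexity then gives the inequality by the standard majorization argument: write each middle point as a convex combination of $a$ and $b$ with complementary weights.

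Next we apply the exchange at station $i$. Suppose $\nu_i$ serves a mass $\epsilon$ of type $\alpha$ (preference $\tau$) at times in a set $B$, and mass $\epsilon$ of type $\alpha'$ (preference $\tau'>\tau$) at times in a set $B'$ lying earlier than $B$. Swap these two masses. Type $\alpha$ now takes the slots in $B'$ and type $\alpha'$ takes those in $B$. This leaves the per-time load $\int_{\mathcal A}\nu_i(t,\alpha)\,\mathrm d\alpha$ unchanged, so the capacity constraint still holds, and it leaves each type's total mass $q_i(\alpha)$ unchanged. By the inequality above, pairing points of $B$ and $B'$ monotonically, the cost does not increase.

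The main obstacle is passing from such local swaps to a globally order-preserving plan in the continuum setting. There could be infinitely many crossings, and a sequence of swaps need not terminate. Rather than iterating swaps, we use a direct construction. For station $i$, consider the one-dimensional "time-supply" measure $c_i(t)\,\mathrm dt$, restricted to a total mass $\sum_\alpha q_i(\alpha)$, and the "preference" measure, the pushforward of $q_i$ under $\alpha\mapsto\tau_\alpha$. The order-preserving (quantile/monotone) coupling between these is optimal for the submodular cost $\hat\ell(t-\tau)$, by the classical result that monotone rearrangement is optimal for costs of the form $h(t-\tau)$ with $h$ convex (Villani 2008; Santambrogio 2015, Thm.~2.9).

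One subtlety is that the capacity constraint is an inequality, so the set of used times is itself a choice. To handle this, first fix the used-time density $s_i(t)\le c_i(t)$ from some optimal $\nu_i$. Then replace $\nu_i$ by the monotone coupling of $s_i$ with the preference measure. This can only weakly lower the cost and preserves feasibility, so the resulting $\nu_i$ is optimal and order-preserving. Types with equal $\tau_\alpha$ are split arbitrarily within their common quantile range, which is consistent with the weak inequality $t\le t'$. This completes the plan.
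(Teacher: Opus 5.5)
Your proposal is correct and follows essentially the same route as the paper: work through the decomposition \eqref{eq:STB}, fix a station, freeze the occupied capacity $s_i\le c_i$ (the paper's auxiliary $\hat c_i$), and invoke optimality of the monotone coupling for costs $\hat\ell(t-\tau)$ with $\hat\ell$ convex (Santambrogio, Thm.~2.9). Your explicit submodularity inequality, the pushforward under $\alpha\mapsto\tau_\alpha$ for types that share a preferred time, and the recombination with $\theta$ spell out steps the paper leaves implicit; the paper simply sets $\tau_\alpha=\alpha$.
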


This proposition establishes a strong structural property for the case of heterogeneous time preferences but homogeneous sensitivity. The optimal schedule is preference-order preserving, meaning that the sequence of service times must respect the chronological order of the demands' preferred times. This result establishes a ``non-crossing" principle: a demand with a later preference time cannot be scheduled ahead of one with an earlier preferred time. This result is analogous to monotone measure-preserving maps for one-dimensional OT problems (see, e.g., \citealt{mccann1995existence}). 

\section{Finite Demand Types}\label{Sec:finite_type}
In this section, we assume that demand is classified into $m$ types, each of which may have a distinct temporal cost according to preferences. This assumption is widely used in practice because demand is often segmented into discrete categories (e.g., urgency levels or preferred time windows). We index demand types by $j\in\{1,\dots,m\}=\mathcal{A}$. The spatial distribution of type-$j$ demand is given by the following density function $\mu_j(\bm{x}) = \mu(\bm{x},j)\in\mathcal{L}^1_+(\mathcal X)$. One approach is to work with the formulation in \eqref{eq:STM_D}, which offers a clean representation of both the objective and the constraints. However, even though $\mathcal{A}$ is discrete, \eqref{eq:STM_D} involves $m + n$ continuous potentials: those related to stations are defined over the temporal domain $\mathcal{T}$, while those corresponding to demand types are defined over the spatial domain $\mathcal{X}$. To address this complexity, the reformulation in \eqref{eq:STB} provides a cleaner angle. In the following theorem, we further simplify the problem based on \eqref{eq:STB}.

\begin{theorem}\label{Thm:finite_type}
When the demand types are finite, the problem \eqref{eq:STB} coincides with
\begin{equation}\tag{STB-D}\label{eq:STB_D}
\begin{aligned}
\inf_{\bm\eta\in\mathbb{R}^{n\times m}}\quad &
   \sum_{i=1}^{n}\int_{\mathcal T} 
      \Big(\ \max_{j\in[m]}\big[\eta_{i,j}-\ell_j(t)\big]\ \Big)^+c_i(t) \mathrm{d}t \\
      &+\sum_{j=1}^m \int_{\mathcal X} 
      \Big(\ \max_{i\in[n]}\big[r-\eta_{i,j}-\delta(\bm{x},\bm{y}_i)\big]\ \Big)^+\mu_j(\bm{x}) \mathrm{d}\bm{x},\\
\end{aligned}
\end{equation}
which is a finite-dimensional convex optimization problem. 
\end{theorem}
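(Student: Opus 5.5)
The plan is to dualize \eqref{eq:STB} by introducing a multiplier $\eta_{i,j}$ for each coupling quantity $q_i(j)$. We then show that, for fixed $\bm\eta$, the resulting inner problems separate into a temporal part and a spatial part, each with a closed form. With $m$ types, $\bm q\in\mathbb{R}_+^{n\times m}$ is finite-dimensional.

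First, we rewrite \eqref{eq:STB} as a single linear program over the pair $(\nu,\theta)$, linked by the coupling constraints
\[
\int_{\mathcal T}\nu_i(t,j)\,\mathrm dt=\int_{\mathcal X}\theta_i(\bm x,j)\,\mathrm d\bm x .
\]
By \Cref{Thm:STB}, this program has the same value as \eqref{eq:STM}. The constraints $\sum_i q_i(j)\le\varrho(j)$ and $\int_{\mathcal A} q_i\le\int_{\mathcal T} c_i$ are implied by the marginal constraints on $\theta$ and $\nu$, so they can be dropped. The reward is placed on the spatial side as $r\,\theta_i(\bm x,j)$. We attach the multiplier $\eta_{i,j}\in\mathbb{R}$ to the coupling constraint and form the Lagrangian
\[
\sum_{i,j}\int_{\mathcal T}\big(\eta_{i,j}-\ell_j(t)\big)\nu_i(t,j)\,\mathrm dt
\;+\;\sum_{i,j}\int_{\mathcal X}\big(r-\eta_{i,j}-\delta(\bm x,\bm y_i)\big)\theta_i(\bm x,j)\,\mathrm d\bm x .
\]

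Next, we solve the two inner problems pointwise. For the temporal part, at each $(i,t)$ we maximize $\sum_j(\eta_{i,j}-\ell_j(t))\nu_i(t,j)$ subject to $\nu\ge0$ and $\sum_j\nu_i(t,j)\le c_i(t)$. The maximum is obtained by putting all capacity on the best type if its margin is positive, which gives $\big(\max_j[\eta_{i,j}-\ell_j(t)]\big)^+c_i(t)$. For the spatial part, at each $(\bm x,j)$ we put all of $\mu_j(\bm x)$ on the best station if its margin is positive, which gives $\big(\max_i[r-\eta_{i,j}-\delta(\bm x,\bm y_i)]\big)^+\mu_j(\bm x)$. Summing these gives exactly the objective of \eqref{eq:STB_D}. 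Weak duality (value of \eqref{eq:STB} $\le$ value of \eqref{eq:STB_D}) is then immediate. Convexity also follows directly: each integrand is a pointwise maximum of functions affine in $\bm\eta$, passed through $(\cdot)^+$ and integrated against nonnegative weights, so the objective is convex in $\bm\eta$.

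The main obstacle is strong duality, i.e.\ the absence of a gap. We plan to get it from \Cref{prop:general_Dual} rather than from abstract minimax. Take an optimal (or $\epsilon$-optimal) dual pair $(\phi,\varphi)$ for \eqref{eq:STM_D}. We may replace it by its tightest version $\varphi_i(t)=\big(\sup_{\bm x,j}[r-\delta-\ell_j(t)-\phi(\bm x,j)]\big)^+$ and $\phi$ analogously; this is a $c$-transform step that does not increase the dual objective. With finitely many types, set
\[
\eta_{i,j}:=\sup_{\bm x}\big[r-\delta(\bm x,\bm y_i)-\phi(\bm x,j)\big].
\]
Then $\varphi_i(t)=\big(\max_j[\eta_{i,j}-\ell_j(t)]\big)^+$. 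Moreover, the dual constraint gives $\phi(\bm x,j)\ge\big(\max_i[r-\eta_{i,j}-\delta]\big)^+$; the needed inequality in this direction uses $\varphi_i(t)\ge\eta_{i,j}-\ell_j(t)$ together with the tightness of $\phi$. Hence the \eqref{eq:STB_D} value at this $\bm\eta$ is at most the \eqref{eq:STM_D} value, which equals the \eqref{eq:STM} value by \Cref{prop:general_Dual}, and that in turn equals the \eqref{eq:STB} value by \Cref{Thm:STB}. The delicate points are handling the infimum over $t$ properly when constructing $\eta$, and ensuring $\eta$ is finite, which follows from compactness of $\mathcal X$ and boundedness of $\ell$ on $\mathcal T$. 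Combined with weak duality, this closes the gap.
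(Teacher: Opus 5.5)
Your argument is correct, and it reaches \eqref{eq:STB_D} by a genuinely different route from the paper.

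The paper proceeds in four steps. It dualizes $OT_{\mathrm{time}}$ and $OT_{\mathrm{space}}$ separately for fixed $\bm q$, obtaining two semi-discrete duals with multipliers $\bm\eta$ and $\bm\zeta$. It then swaps $\sup_{\bm q}$ and $\inf_{\bm\zeta,\bm\eta}$ by Sion's theorem and dualizes the remaining linear program in $\bm q$, with multipliers $\bm\kappa,\bm\lambda$. Finally it eliminates $\bm\kappa,\bm\lambda,\bm\zeta$ via the shift $\hat\eta_{i,j}=\eta_{i,j}+\lambda_i$, $\hat\zeta_{i,j}=r-\hat\eta_{i,j}$.

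You instead attach a single multiplier to the coupling $\int_{\mathcal T}\nu_i(t,j)\,\mathrm dt=\int_{\mathcal X}\theta_i(\bm x,j)\,\mathrm d\bm x$. This makes weak duality a pointwise bound with no auxiliary variables. You then rule out a gap by mapping any feasible pair of \eqref{eq:STM_D} to $\eta_{i,j}=\sup_{\bm x}[r-\delta(\bm x,\bm y_i)-\phi(\bm x,j)]$ and chaining \Cref{prop:general_Dual} with \Cref{Thm:STB}.

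Your route has two advantages. It spares you from verifying the hypotheses of a minimax theorem. It also shows directly why the continuous temporal potentials of \eqref{eq:STM_D} collapse to $(\max_j[\eta_{i,j}-\ell_j(t)])^+$, which has the same form as the price $p_i^*$ in \eqref{eq:incentive_price}. The cost is that all the analytic weight rests on the infinite-dimensional duality of \Cref{prop:general_Dual}. The paper needs only semi-discrete duality plus a finite-dimensional swap, and along the way it obtains the saddle formulation in $(\bm q,\bm\zeta)$ that it reuses in the proof of \Cref{prop:finite_linear}.

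Three parts of your sketch are unnecessary: the $c$-transform tightening, the worry about an infimum over $t$, and the appeal to compactness. For any feasible $(\phi,\varphi)$, the definition of $\eta_{i,j}$ as a supremum together with $\phi\ge0$ gives $\phi(\bm x,j)\ge(\max_i[r-\eta_{i,j}-\delta(\bm x,\bm y_i)])^+$. Taking $\sup_{\bm x}$ in the dual constraint together with $\varphi\ge0$ gives $\varphi_i(t)\ge(\max_j[\eta_{i,j}-\ell_j(t)])^+$. These two inequalities are all you need. Finiteness is immediate: $\eta_{i,j}\le r$ because $\phi,\delta\ge0$, and $\eta_{i,j}>-\infty$ because $\phi$ is real-valued.
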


This reformulation offers two significant advantages. First, it transforms the original infinite-dimensional linear program \eqref{eq:STM} over matching plans into a $nm$-dimensional convex optimization problem. For instance, consider a city-scale discretization: coarse spatial grid ($1000\times1000$), $20$ demand types, $50$ sub-hourly time slots, and $100$ stations. The primal \eqref{eq:STM} would involve about $10^{11}$ variables; the dual \eqref{eq:STM_D} still carries tens of millions of unknowns and a billion constraints. By contrast, the formulation \eqref{eq:STB_D}, although nonlinear, compresses the decision space to roughly a thousand variables. This several orders-of-magnitude reduction makes the formulation computationally practical.

Another crucial advantage relative to \eqref{eq:STB} is that the objective function in \eqref{eq:STB_D} admits a computable subgradient, enabling the use of gradient-based methods. This reformulation also facilitates the design of approximation algorithms. For example, the operators $\max$ and $(\cdot)^+$ can be replaced with smooth approximations (a combination of softmax and softplus):
\(
(\max_j a_j)^{+} \approx \varepsilon \log\left(1 + \sum_j e^{\frac{a_j}{\varepsilon}}\right)
\),
where $\varepsilon > 0$ is a small smoothing parameter. This substitution yields a differentiable surrogate objective and eliminates the need for geometric partitioning, thereby facilitating large-scale computation.

Reformulation \eqref{eq:STB_D} closely resembles the classical semi-discrete optimal transport problem (see, e.g., \citealp[]{bourne2018semi}), with $\eta_{i,j}$ acting as partitioning weights. However, unlike existing formulations, these weights now interact with both spaces $\mathcal{T}$ and $\mathcal{X}$. Suppose temporal costs vanish, and hence, demand becomes homogeneous in sensitivity while remaining spatially differentiated. If total demand equals total stations' capacity and $r$ is sufficiently large to match all mass, then \eqref{eq:STB_D} reduces to the classical semi-discrete OT with capacity constraints. (c.f., formulations (5.17) in \citealp[]{galichon2016optimal} and (2.10) in \citealp[]{bourne2018semi}). Thus, our reformulation can be viewed as an extension of semi-discrete OT to the double-marginal and partial-matching settings.

\subsection{Implementation of Optimal Matching}
We now discuss how to implement the optimal matching in a simple and efficient manner. In practice, a key challenge in implementing optimal matching is ensuring envy-freeness, a condition critical for effective and stable allocations \citep{foley1966resource}. In this section, we assume that monetary transfers are available to the social planner. For instance, the service provider might adopt a time-dependent pricing policy $p_i(t)$ for each station $i$. In public service contexts, this policy may also take the form of subsidies\footnote{For example, in public services, the planner can reduce all payments by the same amount so that none of them is positive. Therefore, the pricing policy can be equivalently interpreted as a subsidy scheme. In practice, such a case arises when a specific demand (e.g., a remote or urgent case) cannot be feasibly met by public services. The subsidy enables individuals to obtain alternative private services.}. Transfers are purely redistributive and therefore do not change overall social welfare, so our focus is on implementation: how the provider can realize the optimal matching in an envy-free manner. From now on, we also use ``agent" to denote a unit of demand to reflect individual preferences.

\subsubsection{Spatiotemporal partition.}\label{sec:SP_partition}
Before proceeding, we first analyze the structure of the desired matching. Denote the optimal $\bm\eta$ in \eqref{eq:STB_D} as $\bm\eta^*$. Let $\mathcal{C}_{i,j}(\bm\eta^*) = \mathcal{C}^r_{i}(\bm\eta^*_{\cdot,j}) $ be the generalized Laguerre cells. $\mathcal{T}_{i,j}$ are defined analogously, that is
\begin{equation*}
\begin{aligned}
\mathcal{T}_{i,j}(\bm\eta^*) =&  \left\{t\in\mathcal{T}:\ell_j(t) -\eta^{*}_{i,j} \leq 0,\space \ell_j(t) -\eta^{*}_{i,j}\leq \ell_{\tilde{j}}(t) -\eta^{*}_{i,\tilde{j}},\quad\forall\tilde{j}\in[m]\right\},\\
\mathcal{T}_{i,0}(\bm\eta^*) =& \left\{t\in\mathcal{T}:\ell_j(t) \geq \eta^{*}_{i,j}, \quad\forall j\in[m]\right\}.
\end{aligned}
\end{equation*}
In other words, $\mathcal{T}_{i,j}(\bm\eta^*)$ can be viewed as the generalized Laguerre cells on $\mathcal{T}$ induced by the temporal cost function $\ell$, the weights $\bm\eta^*$, and threshold $0$. For simplicity, we further assume that for each $i$, the sets $\{t: \eta^*_{i,j} - \ell_j(t) = 0\}$ and $\{t: \eta^*_{i,j} - \ell_j(t) = \eta^*_{i,j'} - \ell_{j'}(t)\}$ for all $j, j' \in [m]$ are Lebesgue null sets. This assumption mirrors our earlier discussion of the boundaries of $\mathcal{C}^{\omega_0}_i(\bm{\omega})$ in \Cref{lemma:null_space}. Under \Cref{assm:Hom_Pref_Het_Sens}, the condition holds since the boundaries are finite points. Under \Cref{assm:Hom_Sens_Het_Pref}, this condition is satisfied whenever $\ell$ is strictly convex. We formally state this null boundary assumption, encompassing both $\mathcal{T}_{i,j}(\bm\eta^*)$ and $\mathcal{C}_{i,j}(\bm\eta^*)$, as follows. 

\begin{assumption}[{\sc Null Boundary}]\label{assm:null_boundary}
For each station $i\in[n]$, and any demand type $j\in[m]$, the intersections 
$\mathcal{C}_{i,j}(\bm\eta^*)\cap\mathcal{C}_{i,j'}(\bm\eta^*)$ and $\mathcal{T}_{i,j}(\bm\eta^*)\cap\mathcal{T}_{i,j'}(\bm\eta^*)$ have Lebesgue measure zero for all $j'\in\{0\}\cup[m]$ with $j'\neq j$.
\end{assumption}

Under \Cref{assm:null_boundary}, the first-order condition of \eqref{eq:STB_D} implies that
\begin{equation}\label{eq:first_order_condition}
\int_{\mathcal{C}_{i,j}(\bm\eta^*)}\mu_{j}(\bm{x})\mathrm{d}\bm{x} = \int_{\mathcal{T}_{i,j}(\bm\eta^*)}c_{i}(t)\mathrm{d}t.
\end{equation}
Then, the optimal matching assigns all type-$j$ demand within $\mathcal{C}_{i,j}$ to station $i$, scheduling it over the corresponding time set $\mathcal{T}_{i,j}$. The first-order condition \eqref{eq:first_order_condition} guarantees that the assigned demand volume matches the station’s capacity on $\mathcal{T}_{i,j}$. This matching yields the following social welfare:
\begin{equation*}
\sum_{i=1}^{n}\sum_{j=1}^{m}\left[\int_{\mathcal{C}_{i,j}(\bm\eta^*)}\left(r - \delta(\bm{x}, \bm{y}_i)\right)\mu_{j}(\bm{x})\mathrm{d}\bm{x} - \int_{\mathcal{T}_{i,j}(\bm\eta^*)}\ell_j(t)c_{i}(t)\mathrm{d}t \right],
\end{equation*}
which corresponds to the objective in \eqref{eq:STB} when $\bm\eta = \bm\eta^*$ due to our null boundary assumption, thereby establishing optimality. Therefore, for the remainder of this section, we may denote the optimal matching plan using the spatial and temporal partitions $\{\mathcal{C}_{i,j}(\bm\eta^*), \mathcal{T}_{i,j}(\bm\eta^*)\}_{i \in [n], j \in [m]}$ in place of $\pi$. \Cref{fig:STMatching} in \Cref{sec:Fig_STMatching} illustrates an example of spatial and temporal partitions with two stations and three demand types. A practical application is presented in \Cref{sec:numerical}.




\subsubsection{Envy-free pricing.}
We assume that the planner announces prices that vary by station and by time. Given posted pricing policy $\bm p (t)=\{p_i(t)\}_{i}$, a type-$j$ agent at $\bm x$ who matched to station $i$ at time $t$ receives net utility
\[
U_j(\bm{x},i,t;\bm{p}) \;=\; r - \delta(\bm{x},\bm{y}_i) - \ell_j(t) - p_i(t).
\]
Specifically, we aim to determine the prices and the corresponding optimal matching that together ensure envy-freeness and individual rationality, as defined below.
\begin{definition}
We say the price policy $\bm p$ and the matching plan $\{\mathcal{C}_{i,j}, \mathcal{T}_{i,j}\}_{i \in [n], j \in [m]}$ satisfies envy-freeness and individual rationality if for any $(i,j)$:
\begin{enumerate}
    \item\emph{Envy-Freeness:} $U_j(\bm{x},i,t;\bm{p}) \geq U_j(\bm{x},i',t';\bm{p})$ for all  $\bm{x}\in\mathcal{C}_{i,j}$, $t\in\mathcal{T}_{i,j}$, $i'\in[n]$ and $t'\in\mathcal{T}$; $ 0 \geq U_j(\bm{x},i',t';\bm{p})$ for all $\bm{x}\in\mathcal{X}\setminus \cup_{i=1}^{n}\mathcal{C}_{i,j}$, $i'\in[n]$ and $t'\in\mathcal{T}$.
    \item\emph{Individual Rationality:} $U_j(\bm{x},i,t;\bm{p})\geq 0$ for all $\bm{x}\in\mathcal{C}_{i,j}$ and $t\in\mathcal{T}_{i,j}$.
\end{enumerate}
\end{definition}

Envy-freeness ensures that any type-\(j\) agent assigned to \((i,t)\) or left unmatched weakly prefers the assigned outcome over any alternative \((i',t')\) under prices \(\bm p\). Individual rationality requires her net utility at \((i,t)\) after paying $p_i(t)$ to be nonnegative. To ensure the optimal matching satisfies both envy-freeness and individual rationality, we introduce the following pricing policy:
\begin{equation}\label{eq:incentive_price}
p^{*}_i(t) = \left(\max_{j\in[m]}\left\{\eta^{*}_{i,j} -\ell_j(t)\right\}\right)^+.
\end{equation}

By the definition of $p^*$ and $\mathcal{T}_{i,j}$, we have the following lemma.
\begin{lemma}\label{lemma:pricing_inequality}
For all $i\in[n],\, j\in[m]$ and $t\in\mathcal T$,
$
\ell_j(t)+p_i^*(t)\ \geq \eta_{i,j}^*
$.
Under \Cref{assm:null_boundary}, the equality holds if and only if $t\in\mathcal T_{i,j}(\bm\eta^*)$ (up to a null set).
\end{lemma}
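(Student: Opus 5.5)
The argument follows directly from the definition \eqref{eq:incentive_price} of $p_i^*$ and of the temporal cells $\mathcal T_{i,j}(\bm\eta^*)$. I will prove the inequality first and then the two directions of the equality characterization.

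\emph{Inequality.} Fix $i,j,t$. Since the positive part dominates its argument and the maximum dominates each term,
\[
p_i^*(t)=\Big(\max_{j'\in[m]}\{\eta^*_{i,j'}-\ell_{j'}(t)\}\Big)^+\ \ge\ \max_{j'}\{\eta^*_{i,j'}-\ell_{j'}(t)\}\ \ge\ \eta^*_{i,j}-\ell_j(t).
\]
Rearranging gives $\ell_j(t)+p_i^*(t)\ge\eta^*_{i,j}$.

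\emph{Equality implies membership.} Suppose $\ell_j(t)+p_i^*(t)=\eta^*_{i,j}$. Then both inequalities in the chain above are tight. Tightness of the second gives $\eta^*_{i,j}-\ell_j(t)\ge \eta^*_{i,\tilde j}-\ell_{\tilde j}(t)$ for every $\tilde j$, which is the comparison condition defining $\mathcal T_{i,j}$. Tightness of the first means the maximum is nonnegative, so $\eta^*_{i,j}-\ell_j(t)\ge 0$, i.e.\ $\ell_j(t)-\eta^*_{i,j}\le 0$. These are exactly the two conditions in the definition of $\mathcal T_{i,j}(\bm\eta^*)$, so $t\in\mathcal T_{i,j}(\bm\eta^*)$. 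This direction holds pointwise and needs no null-set qualification.

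\emph{Membership implies equality.} Conversely, let $t\in\mathcal T_{i,j}(\bm\eta^*)$. The comparison condition makes $j$ a maximizer, so the inner maximum equals $\eta^*_{i,j}-\ell_j(t)$. The threshold condition makes this value nonnegative, so the positive part is inactive. Hence $p_i^*(t)=\eta^*_{i,j}-\ell_j(t)$, which is the equality.

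The equivalence therefore holds pointwise, and the phrase ``up to a null set'' is needed only for uniqueness of the assignment. A point $t$ may lie in several cells, namely on the boundaries where $\eta^*_{i,j}-\ell_j(t)=\eta^*_{i,j'}-\ell_{j'}(t)$ or where $\eta^*_{i,j}-\ell_j(t)=0$. By \Cref{assm:null_boundary}, the intersections $\mathcal T_{i,j}\cap\mathcal T_{i,j'}$ for $j'\in\{0\}\cup[m]$ are Lebesgue-null. Consequently, outside a null set each $t$ attains equality for at most one type $j$, and strict inequality holds for all others. The only point needing care is this bookkeeping of ties.
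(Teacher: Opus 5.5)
Your proof is correct and follows the same route as the paper. The inequality comes from $(\cdot)^+\ge\cdot$ together with the fact that the maximum dominates each term, and the equivalence comes from matching tightness of those two steps to the two conditions defining $\mathcal T_{i,j}(\bm\eta^*)$. Your remark that the equivalence actually holds pointwise is accurate and slightly sharper than the paper's ``a.e.''\ phrasing; \Cref{assm:null_boundary} is needed only to make the equality case unique in $j$ almost everywhere.
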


Intuitively, $p_i^*$ irons out temporal cost so that a type-$j$ agent’s best utility at station $i$ becomes $r-\delta(\bm x,\bm y_i)-\eta_{i,j}^*$ and is simultaneously achieved on $\mathcal T_{i,j}(\bm\eta^*)$. This property, as formalized in \Cref{lemma:pricing_inequality}, then leads to the following proposition of envy-freeness and individual rationality.

\begin{proposition}\label{prop:EF_implementation}
Consider finite demand types and let $\bm{\eta}^*$ be an optimal solution of \Cref{eq:STB_D}. Under \Cref{assm:null_boundary}, the posted prices $\bm p^*$ in \eqref{eq:incentive_price} along with the optimal matching $\{\mathcal{C}_{i,j}(\bm\eta^*), \mathcal{T}_{i,j}(\bm\eta^*)\}_{i \in [n], j \in [m]}$ are envy-free and individually rational.
\end{proposition}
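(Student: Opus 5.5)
The argument rests on \Cref{lemma:pricing_inequality}. It reduces each agent's choice over $(i,t)$ to a choice over stations alone, with effective cost $\delta(\bm x,\bm y_i)+\eta^*_{i,j}$. That station-level choice is exactly the generalized Laguerre cell structure with threshold $r$.

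Fix a type $j$ and a location $\bm x$. For any station $i'$ and time $t'$, \Cref{lemma:pricing_inequality} gives $\ell_j(t')+p^*_{i'}(t')\ge \eta^*_{i',j}$, hence
\[
U_j(\bm x,i',t';\bm p^*)\le r-\delta(\bm x,\bm y_{i'})-\eta^*_{i',j}.
\]
Equality holds precisely when $t'\in\mathcal T_{i',j}(\bm\eta^*)$, up to a null set. So the best utility a type-$j$ agent at $\bm x$ can obtain at station $i'$ is $r-\delta(\bm x,\bm y_{i'})-\eta^*_{i',j}$, and it is attained on $\mathcal T_{i',j}(\bm\eta^*)$. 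If this set is empty or null, the bound still holds and simply may not be attained, which does not affect the argument below.

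First take $\bm x\in\mathcal C_{i,j}(\bm\eta^*)=\mathcal C^r_i(\bm\eta^*_{\cdot,j})$ and $t\in\mathcal T_{i,j}(\bm\eta^*)$. By the equality case, $U_j(\bm x,i,t;\bm p^*)=r-\delta(\bm x,\bm y_i)-\eta^*_{i,j}$.

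\Cref{def:GLC} gives two facts about this cell. The first is $\delta(\bm x,\bm y_i)+\eta^*_{i,j}\le r$, which yields individual rationality, $U_j\ge 0$. The second is $\delta(\bm x,\bm y_i)+\eta^*_{i,j}\le \delta(\bm x,\bm y_{i'})+\eta^*_{i',j}$ for all $i'$. Combined with the displayed upper bound, this gives $U_j(\bm x,i,t)\ge U_j(\bm x,i',t')$ for every $i'$ and $t'\in\mathcal T$, which is envy-freeness for matched agents.

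Next take $\bm x\notin\cup_i\mathcal C_{i,j}(\bm\eta^*)$. Then $\bm x\in\mathcal C^r_0$, so $r\le \delta(\bm x,\bm y_{i'})+\eta^*_{i',j}$ for all $i'$. The upper bound then gives $U_j(\bm x,i',t')\le 0$, which is envy-freeness for unmatched agents.

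The main care point is the phrase ``up to a null set'' in \Cref{lemma:pricing_inequality}. The equality $\ell_j(t)+p^*_i(t)=\eta^*_{i,j}$ should hold pointwise on $\mathcal T_{i,j}$. I would verify this directly from the definitions. For $t\in\mathcal T_{i,j}$ we have $\eta^*_{i,j}-\ell_j(t)\ge 0$, and this term is maximal among all $\eta^*_{i,\tilde j}-\ell_{\tilde j}(t)$. Therefore $p^*_i(t)=\eta^*_{i,j}-\ell_j(t)$ exactly.

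The only real use of \Cref{assm:null_boundary} is that the cells $\mathcal C_{i,j}$ and $\mathcal T_{i,j}$ overlap only on null sets. This makes the matching well defined: ties on boundaries can be assigned arbitrarily without affecting welfare. The first-order condition \eqref{eq:first_order_condition} makes it feasible, since the demand mass in each $\mathcal C_{i,j}$ equals the capacity over $\mathcal T_{i,j}$. Envy-freeness and individual rationality themselves hold pointwise, including at boundary points, because all the inequalities above are weak.
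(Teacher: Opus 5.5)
Your proposal is correct and follows essentially the same argument as the paper: use \Cref{lemma:pricing_inequality} to bound the utility at each station $i'$ by $r-\delta(\bm x,\bm y_{i'})-\eta^*_{i',j}$, with equality on $\mathcal T_{i',j}(\bm\eta^*)$, and then read off individual rationality and envy-freeness from the defining inequalities of the generalized Laguerre cells $\mathcal C^r_i$ and the residual $\mathcal C^r_0$. Your handling is slightly tighter than the paper's wording in two places: you note that only the upper bound (not its attainment) is needed for the envy comparisons, and you check that equality holds pointwise on $\mathcal T_{i,j}(\bm\eta^*)$, whereas the paper asserts that the maximal utility ``equals'' this value even when $\mathcal T_{i',j}(\bm\eta^*)$ might be empty.
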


\subsubsection{Finite time-slots.}
In practice, a continuous pricing policy may be impractical. Therefore, we demonstrate that the service provider can still achieve optimal matching by offering a discrete set of time slots (interval) at varying prices. Agents can select from these time slots at a fixed price, and the service time will be randomly assigned according to a specified distribution. Without loss of generality, we assume that when multiple time slots yield the same maximum utility for agents of a given type, they distribute themselves proportionally among these indifferent options, filling available capacities. To characterize the size of time slots, we require the following definition.

Our result is closely connected to the length of the longest Davenport–Schinzel sequence, denoted by $\lambda_s(n)$. This function has been fully characterized for small values of $s$: specifically, $\lambda_0(n)=1$, $\lambda_1(n)=n$, and $\lambda_2(n)=2n-1$. A more detailed discussion of Davenport–Schinzel sequences, the definition of $\lambda_s(n)$, and its upper bounds for larger $s$ is provided in \Cref{sec:D_S_sequence}. Drawing on the function $\lambda_s(n)$, we next formulate the following proposition, which extends this framework to our setting.

\begin{proposition}\label{prop:finite_slots} Consider finite demand types and let $\bm{\eta}^*$ be an optimal solution of \Cref{eq:STB_D}. 
Under \Cref{assm:null_boundary}, suppose each $\ell_j$ is quasi-convex, and for any $j\neq j'$ the functions $\eta^*_{i,j}-\ell_j(t)$ and $\eta^*_{i,j'}-\ell_{j'}(t)$ intersect at most $s$ times on $\mathcal T$. Then there exist disjoint open intervals $\{I_{i,k}\}_{k=1}^{K_i}\subset \mathcal T$ with $K_i\le \lambda_s(m)$ such that, on each $I_{i,k}$, the maximizer $j(i,k)=\arg\max_{j\in[m]} \{\eta^*_{i,j}-\ell_j(t)\}$ is unique and constant. The planner can post the slot price
\[
p^*_{i,k} = \frac{\int_{I_{i,k}} \left(\eta^{*}_{i,j(i,k)} -\ell_{j(i,k)}(t)\right)c_i(t)\mathrm{d}t}{\int_{I_{i,k}}c_i(t)\mathrm{d}t}.
\]
and allocates service time within $I_{i,k}$ according to the probability density $c_i(t)\big/\int_{I_{i,k}} c_i(u) \mathrm{d}u$. This pricing scheme implements the optimal matching.
\end{proposition}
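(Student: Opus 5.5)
The argument has two parts. First we bound the number of maximizer intervals by a lower-envelope argument. Then we show that slot prices preserve the incentive structure of \Cref{prop:EF_implementation}.

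\emph{Counting intervals.} Fix a station $i$ and set $f_j(t)=\eta^*_{i,j}-\ell_j(t)$. The price $p^*_i(t)$ in \eqref{eq:incentive_price} is the positive part of the upper envelope $\max_j f_j$. The set of $t$ where the maximizer is not unique is, by \Cref{assm:null_boundary}, contained in a null set of crossing points. Since any two $f_j,f_{j'}$ cross at most $s$ times, the sequence of maximizing indices read left to right along $\mathcal T$ has no immediate repetition and no alternating subsequence $a\,b\,a\,b\cdots$ of length $s+2$. It is therefore a Davenport--Schinzel sequence of order $s$ on $m$ symbols, so it has at most $\lambda_s(m)$ maximal intervals of constant maximizer.

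We keep only the intervals on which the maximum is positive, i.e.\ the intervals lying in $\bigcup_{j}\mathcal T_{i,j}(\bm\eta^*)$. Quasi-convexity of $\ell_j$ makes $\{t: f_j(t)>0\}$ an interval. I expect this to be used to ensure that intersecting with the positive region does not split a constant-maximizer interval into two, so the count stays at most $\lambda_s(m)$. The resulting open intervals $I_{i,k}$ are disjoint, have unique maximizer $j(i,k)$, and their union equals $\bigcup_j\mathcal T_{i,j}$ up to a null set. On $\mathcal T\setminus\bigcup_k I_{i,k}$ we may post no slot, or equivalently a slot with price $0$ that nobody strictly prefers.

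\emph{Implementation.} A type-$j$ agent at $\bm x$ choosing slot $(i,k)$ gets expected utility
\[
r-\delta(\bm x,\bm y_i)-\mathbb E_k[\ell_j(t)]-p^*_{i,k},
\]
with expectation under density $c_i/\int_{I_{i,k}}c_i$. Writing $p^*_{i,k}=\mathbb E_k[f_{j(i,k)}(t)]=\mathbb E_k[p_i^*(t)]$, this equals $\mathbb E_k[U_j(\bm x,i,t;\bm p^*)]$. By \Cref{lemma:pricing_inequality}, $\ell_j+p^*_i\ge\eta^*_{i,j}$ everywhere, with equality a.e.\ on $\mathcal T_{i,j}$. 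Hence
\begin{itemize}
\item every slot yields expected utility at most $r-\delta(\bm x,\bm y_i)-\eta^*_{i,j}$;
\item slots with $j(i,k)=j$ attain this bound exactly;
\item slots with $j(i,k)\neq j$ fall strictly short, because on $I_{i,k}$ we have $f_j<f_{j(i,k)}$, so the inequality is strict on a set of positive $c_i$-measure.
\end{itemize}
Consequently each type-$j$ agent's best option across stations is to pick a station maximizing $r-\delta(\bm x,\bm y_i)-\eta^*_{i,j}$ and one of its $j$-slots, or to stay out if this value is negative. That is exactly membership in $\mathcal C_{i,j}(\bm\eta^*)$ and in the residual cell, following the argument of \Cref{prop:EF_implementation}.

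\emph{Feasibility.} Agents of type $j$ in $\mathcal C_{i,j}$ are indifferent among the $j$-slots and, by the proportional-splitting convention, split according to $\int_{I_{i,k}}c_i$. The first-order condition \eqref{eq:first_order_condition} makes the total mass equal $\int_{\mathcal T_{i,j}}c_i$. Each slot is therefore exactly filled, and within a slot times are drawn with density proportional to $c_i$, so the realized service time density at station $i$ is $c_i(t)$ on $\mathcal T_{i,j}$. This reproduces the optimal plan and hence its welfare.

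The main obstacle is the counting step. We must justify carefully that the thresholding at $0$ (the $(\cdot)^+$) and the null-boundary assumption do not increase the number of intervals beyond $\lambda_s(m)$, and this is where quasi-convexity of $\ell_j$ has to be used. If needed, I would treat the zero function as an extra envelope piece only when it does not add alternations, arguing that quasi-concavity of each $f_j$ means each $f_j$ is positive on a single interval.
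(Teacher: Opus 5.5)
Your proposal is correct and follows essentially the same route as the paper. You bound the envelope pieces at station $i$ by $\lambda_s(m)$ via the Davenport--Schinzel/lower-envelope argument, use quasi-convexity of $\ell_j$ so that a piece $J$ with maximizer $j$ is only trimmed to $J\cap\{t:\ell_j(t)<\eta^*_{i,j}\}$, and then note that each slot price is a $c_i$-weighted average of $p_i^*$, so \Cref{lemma:pricing_inequality} and the argument of \Cref{prop:EF_implementation} carry over. The step you flag as the main obstacle is already settled by your own observation: that trimmed set is an intersection of two intervals, so it is never split, and the zero-function fallback is unnecessary. Your strict-suboptimality check for off-type slots and the slot-filling argument via \eqref{eq:first_order_condition} make explicit what the paper's proof leaves implicit.
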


Intuitively, the price of each slot is set to iron out the slot’s expected temporal cost for its target demand type such that the expected net utility for that type remains unchanged and optimal. As a result, the discrete slot mechanism preserves the same allocation and incentive properties as the continuous-pricing scheme \eqref{eq:incentive_price}. Under \Cref{assm:Hom_Pref_Het_Sens}, where $s = 2$, each station needs to offer at most $\lambda_{2}(m) = 2m - 1$ time slots. In contrast, under \Cref{assm:Hom_Sens_Het_Pref}, if $\ell$ is strictly convex, then $s = 1$, and each station needs to offer only $\lambda_{1}(m) = m$ time slots, each tailored to a specific demand type.

\subsection{Capacity Allocation}\label{sec:capacity_allocation}
Having reduced the service matching problem to the finite convex program \eqref{eq:STB_D}, we now turn to the question of how best to allocate total capacity across the $n$ stations. Let $\bar c_i(t)$ denote the normalized capacity profile of station $i$, which captures the temporal shape of its service capability, and let $a_i\ge0$ be the capacity scale decision so that $
c_i(t)=a_i\bar c_i(t), t\in\mathcal T,\ i\in[n]
$. We assume that allocating capacity incurs linear cost with weights $\xi_i>0$ under a total budget $B>0$:
$
\sum_{i=1}^n \xi_i a_i \le B
$. To illustrate, consider a service system with multiple stations. The scale decision $a_i$ represents the level of resource input at station $i$, such as labor or equipment. The baseline profile $\bar c_i(t)$ describes the service capacity generated by one unit of resource input at station $i$; differences in $\bar c_i(t)$ may reflect different operating schedules (e.g., daytime operation or 24-hour service). Finally, $\xi_i$ denotes the local cost of one unit of resource input at station $i$, which may vary across stations.

Given \(\bm a=(a_1,\dots,a_n)\), the induced welfare (equivalently, the optimal objective of \eqref{eq:STB_D}) is
$$
\min_{\bm\eta\in\mathbb{R}^{n\times m}}
\sum_{i=1}^{n} a_i\underbrace{\int_{\mathcal T}
\Big(\max_{j\in[m]}[\eta_{i,j}-\ell_j(t)]\Big)^+\bar c_i(t)\mathrm{d}t}_{\displaystyle Z_i(\bm\eta)}
+
\underbrace{\sum_{j=1}^m \int_{\mathcal X}
\Big(\max_{i\in[n]}[r-\eta_{i,j}-\delta(\bm x,\bm y_i)]\Big)^+\mu_j(\bm x)\mathrm{d}\bm x}_{ \displaystyle\Psi(\bm\eta)}.
$$
Here $Z_i(\bm\eta)$ and $\Psi(\bm\eta)$ are convex functions of $\bm{\eta}$. The capacity allocation problem chooses \(\bm a\) to maximize welfare after optimal matching:
\begin{equation}\label{eq:EA}
\max_{\bm a\ge0}\min_{\bm\eta\in\mathbb{R}^{n\times m}}
\left\{
\sum_{i=1}^n a_i Z_i(\bm\eta) + \Psi(\bm\eta)
:\ \sum_i \xi_i a_i \le B
\right\}.
\end{equation}
Note that this formulation not only determines the optimal capacity allocation $\bm{a}$ but also jointly identifies the optimal matching through $\bm{\eta}$. Thanks to \eqref{eq:STB_D}, problem \eqref{eq:EA} can be reformulated as a convex minimization problem with respect to $\bm{\eta}$ by a standard minimax argument (Sion's minimax theorem).
\begin{equation}\tag{CA}\label{eq:EA_reduced}
\min_{\bm\eta\in\mathbb{R}^{n\times m}}\ \ \Psi(\bm\eta)+B\max_{i\in[n]}\frac{Z_i(\bm\eta)}{\xi_i}.
\end{equation}

The budget \(B\) controls the largest normalized \(Z_i(\bm\eta)\) across stations. Since \(Z_i(\bm\eta)\)'s are convex functions for all $i$, the term $\max_{i\in[n]}\frac{Z_i(\bm\eta)}{\xi_i}$ is also convex and hence the problem \eqref{eq:EA_reduced} is still a convex optimization of $\bm\eta$.

In addition to identifying optimal capacity allocation through continuous decision variables, our reformulation also provides a practical framework for comparing system configurations. This capability is particularly valuable in practical settings, where fine-tuning may be infeasible and a pre-existing configuration is already in place. For instance, one might ask whether eliminating a station and reallocating its capacity to another could enhance overall performance, a scenario that exemplifies the classic trade-off between dispersion and concentration plans. We illustrate this with two simple examples in \Cref{sec:D_or_C}.

\subsection{Homogeneous Preference with Linear Cost}
In this section, we introduce a simplified yet practically relevant structure for temporal costs and service capacity. Specifically, we consider the finite-type demand framework with the assumption of homogeneous time preference (see \Cref{assm:Hom_Pref_Het_Sens}). In this setting, we further assume that the service capacity is constant and the temporal cost incurred by agents varies linearly with their deviation from a common reference time. This framework is not only analytically convenient but also grounded in practical considerations. The assumption of constant capacity is often appropriate in systems with rigid processing capabilities or consistent staffing levels. Linear temporal costs, meanwhile, capture asymmetries in time sensitivity, such as penalties for early scheduling or service delays. Formally, we make the following assumption.

\begin{assumption}[{\sc Slack Capacity and Two-Piece Linear Temporal Costs}]\label{assm:long-horizon-linear-cost} ~\\
\leavevmode
    {1. (Constant capacity)} Each station \(i \in [n]\) operates at a constant service rate over time: \(c_i(t) \equiv c_i>0\).\\
    {2. (Sufficiently long horizon)} The scheduling horizon is \(\mathcal{T}=[-T,T]\) and is long enough that any single station could, in principle, process all demand:
    \(
      T \min_{i\in[n]} c_i \;\ge\; \sum_{j=1}^{m} \int_{\mathcal{X}} \mu_j(\bm{x})\mathrm{d}\bm{x}
    \).\\
    {3. (Two-piece linear costs)} A type-\(j\) agent incurs temporal cost
    \(
      \ell(t,j) \coloneqq s_j\big(b(-t)^{+} + h t^{+}\big)
    \),
    where \(b,h\ge 0\) are the cost slopes for early/late and \(s_1>s_2>\cdots>s_m>0\) are the sensitivities of demand type $1,2,\dots,m$.
\end{assumption}

Under these structural assumptions, the temporal component of the allocation problem becomes analytically tractable. As a result, the problem \eqref{eq:STB_D} admits a simpler reduction:

\begin{proposition}\label{prop:finite_linear}
In the finite demand type model, suppose \Cref{assm:long-horizon-linear-cost} holds. Then the problem \eqref{eq:STB_D} reduces to
\begin{equation}\label{eq:finite_linear_Dual}
\min_{\bm\eta\in\mathbb{R}^{n\times m}}\quad
\sum_{i=1}^n\frac{c_i}{\beta}\bm\eta_i^\top \bm{A}^{-1}\bm\eta_i+
\sum_{j=1}^m
\int_{\mathcal{X}}\left(\max_{i\in[n]}\left\{r-\delta(\bm{x},\bm{y}_i)-\eta_{i,j}\right\}\right)^+\mu_j(\bm{x})\mathrm{d}\bm{x}.
\end{equation}
where 
\(
\beta = \frac{2bh}{b+h}\) and
\(A_{ij} = s_{\max\{i,j\}}\), for \(i,j=1,\dots,m
\).

\end{proposition}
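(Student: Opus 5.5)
The plan is to evaluate the temporal term of \eqref{eq:STB_D} in closed form under \Cref{assm:long-horizon-linear-cost}, station by station. Then we check that replacing it by the quadratic does not change the optimal value.

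\textbf{Step 1: reduce the time integral to one dimension.} Write $g(t)=b(-t)^+ + h t^+$, so that $\ell_j(t)=s_j g(t)$. Substitute $u=g(t)$. Each level $u>0$ is reached at $t=u/h$ and at $t=-u/b$, with total Jacobian $1/h+1/b=2/\beta$. Since $c_i$ is constant, for $\bm\eta_i=(\eta_{i,1},\dots,\eta_{i,m})$ we get
\[
\int_{-T}^{T}\Big(\max_j[\eta_{i,j}-s_jg(t)]\Big)^+c_i\,\mathrm dt=\frac{2c_i}{\beta}\int_0^{\infty}f_i(u)^+\,\mathrm du,\qquad f_i(u)=\max_j[\eta_{i,j}-s_ju].
\]
This identity needs the envelope to become negative before $u$ reaches $g(\pm T)$. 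I will justify that from the long-horizon condition in item 2 of the assumption, applied at the relevant $\bm\eta$. The point is that the mass served at station $i$, which is at most the total demand, fits inside $[-T,T]$.

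\textbf{Step 2: compute the integral on the "all types active" cone.} Suppose the upper envelope of the lines $\eta_{i,j}-s_ju$ uses types $1,2,\dots,m$ in order, which is natural since $s_1>\dots>s_m$. The breakpoints are then $u_j=(\eta_{i,j}-\eta_{i,j+1})/(s_j-s_{j+1})$ for $j<m$ and $u_m=\eta_{i,m}/s_m$, with $0\le u_1\le\dots\le u_m$. Integrating by parts gives $\int_0^{u_m}f_i\,\mathrm du=\int u(-f_i'(u))\,\mathrm du$. On the $j$-th segment $-f_i'=s_j$, so
\[
\int_0^{u_m} f_i\,\mathrm du=\tfrac12\sum_j s_j(u_j^2-u_{j-1}^2)=\tfrac12\sum_{j=1}^m(s_j-s_{j+1})u_j^2=\tfrac12\sum_{j=1}^m\frac{(\eta_{i,j}-\eta_{i,j+1})^2}{s_j-s_{j+1}},
\]
with the conventions $u_0=0$, $s_{m+1}=0$ and $\eta_{i,m+1}=0$. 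This is $\tfrac12\bm\eta_i^\top L\bm\eta_i$ for a tridiagonal $L$. A direct check shows $L=\bm A^{-1}$ for $A_{jk}=s_{\max\{j,k\}}$, which is the standard inverse of a "min/max-type" matrix; I will verify $L\bm A=I$ by hand. Multiplying by $2c_i/\beta$ yields the term $\frac{c_i}{\beta}\bm\eta_i^\top\bm A^{-1}\bm\eta_i$. As a sanity check, $m=1$ gives $c\eta^2/(\beta s)$.

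\textbf{Step 3 (main obstacle): handle $\bm\eta$ outside the cone.} Off the cone the two expressions differ, so I will argue via conjugate duality rather than pointwise equality.

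1. The temporal integral, as a function of $\bm\eta_i$, is the convex conjugate of the optimal temporal cost $OT_{\mathrm{time}}$ as a function of $q_i\ge0$; this is how \eqref{eq:STB_D} arises from \eqref{eq:STB} in \Cref{Thm:finite_type}.
2. By \Cref{prop:sensitive_priority}, the optimal schedule fills the slots nearest $0$ with the most sensitive types first, which makes that temporal cost an explicit convex quadratic $\frac{\beta}{4c_i}q_i^\top\bm A q_i$ on $q_i\ge0$.
3. The conjugate of this quadratic taken over all of $\mathbb R^m$ is exactly $\frac{c_i}{\beta}\bm\eta_i^\top\bm A^{-1}\bm\eta_i$. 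This dominates the conjugate over $q_i\ge0$ and coincides with it precisely when the maximizer $q_i=\frac{2c_i}{\beta}\bm A^{-1}\bm\eta_i$ is nonnegative, which is the cone of Step 2.
4. Write both minimization problems as Fenchel duals. Each is $\max_{\bm q}\,[-\text{temporal}(\bm q)-\Psi^*(-\bm q)]$, where $\Psi$ denotes the spatial term of \eqref{eq:STB_D}; the two differ only in whether $\bm q$ is restricted to be nonnegative.
5. The conjugate $\Psi^*(-\bm q)$ is $+\infty$ unless $\bm q\ge0$, because the spatial term is nonincreasing in each $\eta_{i,j}$. So the restriction is automatic, and the two optimal values coincide.

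The delicate points are the exact form of the quadratic temporal cost with the $(b,h)$ split between early and late slots, and confirming that truncation at $\pm T$ is inactive. I would verify both first on $m=1$ and $m=2$.
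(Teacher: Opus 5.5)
Your argument is correct, but its key step is organized differently from the paper's. The paper also starts from the closed form $OT_{\mathrm{time}}=\frac{\beta}{4c_i}\bm q_i^\top\bm A\bm q_i$, obtained from sensitivity priority and the optimal early/late split. It then stays on the primal side. It substitutes this and the dual of $OT_{\mathrm{space}}$ into \eqref{eq:STB}, exchanges $\sup$ and $\inf$ by Sion, and drops all constraints on $\bm q$ to get an upper bound equal to the objective of \eqref{eq:finite_linear_Dual}. It closes the gap with an explicit saddle point: at a minimizer $\bm\eta^*$, the first-order condition identifies the unconstrained maximizer $\frac{2c_i}{\beta}\bm A^{-1}\bm\eta^*_i$ with the masses of the Laguerre cells $\mathcal C_{i,j}(\bm\eta^*)$, which are nonnegative and within demand.

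Your Fenchel--Rockafellar argument encodes the same fact, namely that $-\partial\Psi$ lies in the nonnegative orthant, through the domain of $\Psi^*(-\cdot)$. It needs neither differentiability of $\Psi$ nor existence of a primal minimizer, only that both convex functions are finite on all of $\mathbb R^{n\times m}$. In exchange, the paper's route delivers the optimal $\bm q^*$ and its Laguerre-cell interpretation directly. Your Steps 1--2 are not needed by Step 3 beyond illustrating the cone.

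Two details need tightening.

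First, the quadratic formula for $OT_{\mathrm{time}}$ is guaranteed only while the optimal schedule fits in $[-T,T]$, i.e.\ for $\sum_j q_{i,j}\le Tc_i(b+h)/\max\{b,h\}$, not on the whole orthant. So in item 5 you should also record that $\Psi^*(-\bm q)=+\infty$ whenever $\sum_i q_{i,j}>\varrho(j)$ for some $j$; to see this, set $\eta_{i,j}=-M$ for all $i$ and let $M\to\infty$. Together with item 2 of \Cref{assm:long-horizon-linear-cost}, this confines the effective dual domain to where the two conjugates agree.

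Second, to pass from equal optimal values to a genuine reduction, observe that a finite horizon can only raise the scheduling cost relative to the unconstrained schedule. Hence $OT_{\mathrm{time}}$, extended by $+\infty$ off its feasible set, dominates $\frac{\beta}{4c_i}\bm q_i^\top\bm A\bm q_i$ everywhere. Conjugation reverses this inequality, so the temporal term of \eqref{eq:STB_D} is pointwise at most $\frac{c_i}{\beta}\bm\eta_i^\top\bm A^{-1}\bm\eta_i$. It follows that every minimizer of \eqref{eq:finite_linear_Dual} also minimizes \eqref{eq:STB_D}.
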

The parameter $\beta$, which is the harmonic mean of $b$ and $h$, scales the temporal cost in this system. We note that when $b$ or $h$ is $+\infty$, $\beta$ is defined as $2h$ or $2b$, respectively, which captures the cases in which only one side (early or delay) matching is available. 

Problem \eqref{eq:finite_linear_Dual} forms a finite size ($n\times m$) convex problem which is computationally tractable. In addition, since $\bm{A}$ is a positive matrix and $\bm{A}^{-1}$ is a Z-matrix with nonpositive off-diagonals (see \Cref{Sec:inverse_A}), according to \cite{plemmons1977m}, $\bm{A}^{-1}$ is an M-matrix which is positive definite, so the objective is strictly convex. 

Under \Cref{assm:long-horizon-linear-cost}, the envy-free pricing $p^*$ in \eqref{eq:incentive_price} is piecewise linear, reaching its maximum at time $t=0$ and decreasing as $t$ moves away from zero in either direction. Moreover, as established in \Cref{prop:finite_slots}, it is sufficient to offer $2m-1$ time slots with different prices. Both the continuous and slot-based prices for a certain station are illustrated in \Cref{fig:Pricing policy}.

\begin{figure}
\caption{Illustration of Envy-Free Pricing Policy.}\label{fig:Pricing policy}
\centering
\includegraphics[width=0.5\linewidth]{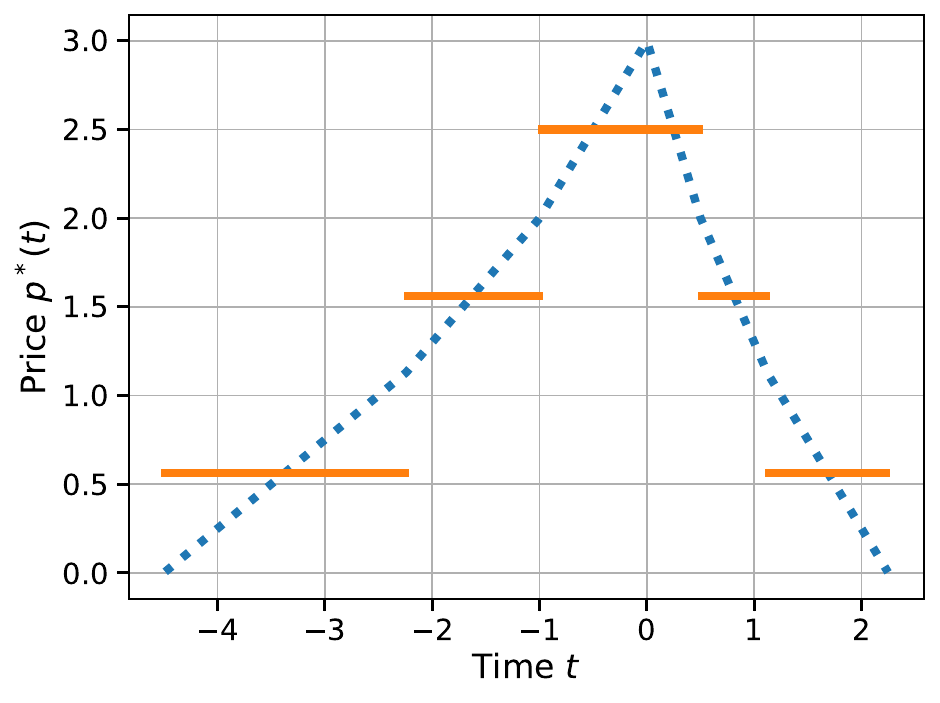}
\par\vspace{0.5ex}
\begin{minipage}{0.95\linewidth}
\footnotesize\emph{Note.} In this example, we set $b=2$ and $h=1$. There are three demand types with sensitivities $1$, $0.7$, and $0.5$. The station has a capacity of $8$ and is assigned volumes of $12$, $15$, and $19$ for the three types, respectively. The blue dotted line represents the continuous envy-free price $p^*$. The orange segments represent the equivalent implementation using a finite-slot mechanism, which requires five slots to coordinate three demand types.
\end{minipage}
\end{figure}

\subsubsection{Capacity allocation with a single demand type.}
When time sensitivity is also homogeneous (i.e., there is only one type of demand), we further find that the capacity allocation \eqref{eq:EA} is closely related to the traditional uncapacitated OT problem (i.e., without temporal costs and processing capacities). Next, we focus on the “no-exclusion” regime by taking $r$ sufficiently large to ensure that all demand is satisfied. This allows us to isolate the effects of capacity variation on social welfare. In particular, we aim to capture the impact of capacity on spatiotemporal costs while abstracting away reward considerations. This setting is also practically relevant. For example, many public service systems operate under a regime of universal access, where no demand shall be excluded. Moreover, in various commercial contexts, services are sold in advance at a given price, and providers commit to fulfilling orders before customers reveal detailed information. In such cases, the planner seeks to minimize the realized transportation and scheduling frictions.

\begin{proposition}\label{Prop:Opt_c}
Under \Cref{assm:long-horizon-linear-cost}, assume that there is only one demand type and the reward $r$ is sufficiently large such that it is optimal to serve all demand. Further assume identical unit capacity costs across stations, so that the budget constraint reduces to $\sum_{i=1}^{n}c_i\leq B$. Then the optimal $c_i$ (denoted as $c^*_i$) is given by
\begin{equation*}
c^*_i = \frac{\int_{\mathcal{C}^{+\infty}_i(\bm{0})}\mu(\bm{x})\mathrm{d}\bm{x}}{\int_{\mathcal{X}}\mu(\bm{x})\mathrm{d}\bm{x}}B.
\end{equation*}
\end{proposition}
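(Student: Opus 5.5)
The plan is to work on the primal side through the barycenter reformulation of \Cref{Thm:STB}, rather than through the dual \eqref{eq:finite_linear_Dual}. With one demand type and $r$ large enough that all demand is served, \eqref{eq:STB} becomes a minimization of total cost. The decision is the vector of station loads $\bm q=(q_1,\dots,q_n)$ with $\sum_i q_i=M\coloneqq\int_{\mathcal X}\mu(\bm x)\mathrm d\bm x$. The cost is $OT_{\mathrm{time}}(\bm c,\bm q)+OT_{\mathrm{space}}(\bm q,\mu)$, and the capacity allocation problem \eqref{eq:EA} then becomes a joint minimization over $(\bm q,\bm c)$ subject to $\sum_i c_i\le B$. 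The idea is to compute $OT_{\mathrm{time}}$ in closed form, optimize over $\bm c$ for fixed $\bm q$, and observe that the remaining problem in $\bm q$ is uncapacitated.

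\emph{Step 1 (closed-form temporal cost).} Under \Cref{assm:long-horizon-linear-cost} with one type of sensitivity $s$, a station with constant rate $c_i$ that must serve mass $q_i$ optimally fills a contiguous window $[-t_1,t_2]$ around $0$, with $c_i(t_1+t_2)=q_i$. This follows because $\ell$ is decreasing on $\mathcal T^-$ and increasing on $\mathcal T^+$, so any unused time closer to $0$ can be swapped in profitably. The resulting cost is $\tfrac{s c_i}{2}(b t_1^2+h t_2^2)$. Minimizing over the split gives $b t_1=h t_2$, and hence
\[
OT_{\mathrm{time}}(\bm c,\bm q)=\sum_{i=1}^n \frac{s\beta}{4}\,\frac{q_i^2}{c_i},\qquad \beta=\frac{2bh}{b+h}.
\]
As a consistency check, this is the convex conjugate of the quadratic term $c_i\eta_i^2/(\beta s)$ in \Cref{prop:finite_linear}. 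The window must fit inside $[-T,T]$. Under the optimal allocation below each window has length $q_i/c_i=M/B$, so I will note that the long-horizon condition is needed only in the form $T\ge M/B$.

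\emph{Step 2 (optimize capacity for fixed loads).} For fixed $\bm q$, minimize $\sum_i q_i^2/c_i$ subject to $\sum_i c_i\le B$ and $c_i\ge0$. By Cauchy--Schwarz, $\big(\sum_i q_i\big)^2\le \big(\sum_i q_i^2/c_i\big)\big(\sum_i c_i\big)$. This gives the lower bound $M^2/B$, attained exactly when $c_i=Bq_i/M$. The optimal temporal cost is therefore $s\beta M^2/(4B)$, \emph{independent of} $\bm q$.

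\emph{Step 3 (remaining spatial problem and conclusion).} What remains is to minimize $OT_{\mathrm{space}}(\bm q,\mu)$ over loads $\bm q$ with $\sum_i q_i=M$ and no capacity restriction. This is the classical uncapacitated assignment problem: each $\bm x$ is sent to the station minimizing $\delta(\bm x,\bm y_i)$, that is, to the cell $\mathcal C^{+\infty}_i(\bm 0)$, whose boundaries are null by \Cref{lemma:null_space}. Hence $q_i^*=\int_{\mathcal C^{+\infty}_i(\bm 0)}\mu$. Since the joint minimum over $(\bm q,\bm c)$ separates in this way, the optimal capacities are $c_i^*=Bq_i^*/M$, which is the claimed formula.

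The main obstacle is rigor in Step 1: showing that the contiguous-window schedule is optimal within $OT_{\mathrm{time}}$, which requires an exchange argument, and confirming that the minimization in \eqref{eq:EA} can legitimately be carried out on the primal $(\bm q,\bm c)$ side. For the latter I would invoke the equivalence of \Cref{Thm:STB} and \Cref{Thm:finite_type} for each fixed $\bm c$; swapping the order of the minimizations over $\bm c$ and $\bm q$ is then harmless because both are minimizations.
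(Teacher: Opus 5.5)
Your proof is correct. It takes the primal route, whereas the paper argues on the dual side.

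The paper writes the allocation problem as $\max_{\bm c}\min_{\bm\eta}$ of the objective in \eqref{eq:finite_linear_Dual} with $m=1$, dropping the reward as a constant. It then just verifies a saddle point: $c_i^*$ as stated, together with the constant vector $\eta_i^*=s\beta M/(2B)$.
For fixed $\bm c^*$, equal weights make the Laguerre cells coincide with $\mathcal C_i^{+\infty}(\bm 0)$, so the first-order condition $2c_i^*\eta_i^*/(s\beta)=\int_{\mathcal C_i^{+\infty}(\bm 0)}\mu$ holds. For fixed $\bm\eta^*$, the objective is linear in $\bm c$ with equal coefficients, so every budget-exhausting $\bm c$ is optimal.

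The two arguments are dual to each other. Your Cauchy--Schwarz equality condition $c_i\propto q_i$ says exactly that the marginal temporal cost $\partial_{q_i}OT_{\mathrm{time}}=s\beta q_i/(2c_i)$ is equalized at $s\beta M/(2B)$, which is the paper's common $\eta^*$.

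What your route buys:
\begin{itemize}
\item It avoids the minimax exchange and the first-order condition for the nonsmooth integral term.
\item It explains \emph{why} the Voronoi partition emerges: once $\bm c$ is optimized, the temporal cost $s\beta M^2/(4B)$ no longer depends on the loads.
\item It gives uniqueness of $\bm c^*$ from the equality cases.
\item It isolates the horizon requirement $T\ge M/B$, which is cleaner than imposing $T\min_i c_i\ge M$ on what is now a decision variable.
\end{itemize}

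What the paper's route buys is brevity, and it sits inside the general dual framework \eqref{eq:EA_reduced}. Your decoupling is special to one type. For $m\ge 2$ the station cost has the form $Q_i(\bm q_i)/c_i$ with $Q_i$ a quadratic form. The minimum over $\bm c$ is then $\big(\sum_i\sqrt{Q_i(\bm q_i)}\big)^2/B$, which still depends on $\bm q$.

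Step 1 is not a real obstacle. It is the bathtub principle: the optimal single-type schedule is $c_i\mathbbm{1}(\ell\le\tau)$, a window around $0$. When a window would not fit in $[-T,T]$, your closed form is still a lower bound on $OT_{\mathrm{time}}$. That is all your lower-bound-then-attain argument needs.
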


\Cref{Prop:Opt_c} provides clear guidelines for allocating capacities across service stations within a fixed budget. Recall that in \Cref{def:GLC}, $\bm\omega=(\omega_i)_{i\in[n]}$ denotes the vector of cell weights associated with the stations. By setting these weights to zero and letting $r\rightarrow+\infty$, the partition $\mathcal{C}^{+\infty}_i(\bm{0}), i\in[n]$ in \Cref{def:GLC} is exactly the unweighted Laguerre cells without residual (also known as Voronoi diagrams). Since a lower spatial cost corresponds to a shorter distance in our setting, the optimal allocation strategy in \Cref{Prop:Opt_c} is straightforward: first, partition the service area of each station solely based on distance, and then allocate capacities $c^*_i$ to each station in proportion to the demand within its coverage area.

\section{Spatiotemporal Hotelling Model}\label{Sec:Hotelling}
While the previous sections focused on the methodology and solvability of the general model, this section turns our attention to the qualitative behavior of the optimal matching. By analyzing more stylized models, we aim to uncover how optimal service regions and schedules may evolve with changes in reward and capacity, highlighting phenomena unique to the spatiotemporal context.

Specifically, we consider an extension of the classical Hotelling model, also known as the “Linear City.” We assume that both demand and stations are located on the interval $[0,1]$. For simplicity, we adopt the cost and capacity assumptions specified in \Cref{assm:long-horizon-linear-cost}. Furthermore, we focus on a one-sided temporal cost structure by setting $l(t,\alpha)=\alpha w t$, where $w$ represents the waiting cost and the demand type $\alpha$ directly captures time sensitivity.

\subsection{Hotelling Model with Homogeneous Demand}
We first consider the setting in which all demand has homogeneous time sensitivity, $\alpha=1$.

\begin{example}[Hotelling Model with Homogeneous Sensitivity]\label{Example:Hotelling_Homogeneous}
The demand follows a uniform distribution on the interval $[0,1]$ with density $1$. Two service stations, indexed by $i=1$ and $2$, are located at $0$ and $1$, respectively, and operate with constant capacity levels $c_1$ and $c_2$. 
\end{example}

In this example, the optimal partition can be determined exactly. Since all demands share the same level of time sensitivity, the scheduling of service times within a given station is arbitrary, as the total waiting cost at that station remains unchanged regardless of the schedule. We analyze the optimal partition in two scenarios: partial service, in which some demand is excluded, and complete service, in which all demand is met.

\begin{proposition}\label{prop:Hotelling_Homogeneous}
In \Cref{Example:Hotelling_Homogeneous}, the optimal partition of the demand interval can be expressed as a disjoint union
$
[0,1]=\mathcal{E}_0 \cup \mathcal{E}_1 \cup \mathcal{E}_2,
$ where demand in $\mathcal{E}_1$ (resp., $\mathcal{E}_2$) is served by station $1$ (resp., $2$), and demand in $\mathcal{E}_0$ is not served. The sets depend on $r$ as follows.

\textbf{Partial service:} If $r<\frac{(w+c_1)(w+c_2)}{2 c_1 c_2 + (c_1 + c_2) w}$, the service regions are
$
\mathcal{E}_1 = \left[0,\frac{rc_1}{c_1+w}\right]$,\quad
$\mathcal{E}_2 = \left[1-\frac{rc_2}{c_2+w},1\right]$, and $\mathcal{E}_0 = \left(\frac{rc_1}{c_1+w},1-\frac{rc_2}{c_2+w}\right)$.

\textbf{Complete service:} Otherwise, the service regions are
$\mathcal{E}_1 
=\left[0,\frac{1}{2}+\frac{(c_1-c_2)w}{2 c_1 c_2 + (c_1 + c_2) w}\right]$, $
\mathcal{E}_2 = [0,1]\setminus \mathcal{E}_1$ and $
\mathcal{E}_0 = \emptyset$.

\end{proposition}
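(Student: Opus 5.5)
The plan is to use the barycenter reformulation \eqref{eq:STB} (equivalently, \Cref{prop:finite_linear} with $m=1$). This reduces the problem to choosing two scalar masses $q_1,q_2$, together with the spatial and temporal plans that realize them. Throughout, $\delta(x,y)=|x-y|$, $\ell(t)=wt$ on $\mathcal T^+$, and capacities $c_1,c_2$ are constant, as in \Cref{assm:long-horizon-linear-cost}. I will not rely on the finiteness of $T$, because the long-horizon assumption guarantees that the horizon never binds.

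\emph{Step 1: the temporal subproblem.} With a single type and cost increasing in $t$, $OT_{\mathrm{time}}$ at station $i$ with load $q_i$ is solved by filling capacity greedily from $t=0$. The served window is therefore $[0,q_i/c_i]$, and
\[
OT_{\mathrm{time}}=\sum_i \frac{w\,q_i^2}{2c_i}.
\]
Any other feasible schedule moves mass to later times and only raises the cost.

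\emph{Step 2: the spatial subproblem.} By \Cref{prop:Laguerre_cell}, station $1$ serves a cell of the form $[0,a]$ and station $2$ serves $[1-b,1]$, with $a+b\le 1$ and $q_1=a$, $q_2=b$. The reason is that the cells are sublevel sets of $|x-y_i|+\omega_i$ on the line, intersected with the threshold condition. For given $(a,b)$ this choice also minimizes the spatial cost, since a shortest-distance rearrangement within the served mass cannot do better. The spatial cost is then $a^2/2+b^2/2$. The welfare becomes the separable concave function
\[
W(a,b)=\Big(ra-\tfrac{a^2}{2}\big(1+\tfrac{w}{c_1}\big)\Big)+\Big(rb-\tfrac{b^2}{2}\big(1+\tfrac{w}{c_2}\big)\Big),
\]
maximized subject to $a,b\ge0$ and $a+b\le1$.

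\emph{Step 3: KKT case split.} Ignoring the coupling constraint, the unconstrained maximizers are $a^\circ=rc_1/(c_1+w)$ and $b^\circ=rc_2/(c_2+w)$. The condition $a^\circ+b^\circ<1$ is equivalent to
\[
r<\frac{(w+c_1)(w+c_2)}{2c_1c_2+(c_1+c_2)w}.
\]
In that case these values are optimal and give the stated partial-service regions, with $\mathcal E_0$ the gap between them.

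Otherwise, by concavity the constraint $a+b=1$ binds. The KKT condition equalizes the marginal values,
\[
a\big(1+\tfrac{w}{c_1}\big)=(1-a)\big(1+\tfrac{w}{c_2}\big),
\]
which gives
\[
a=\frac{c_1(c_2+w)}{2c_1c_2+(c_1+c_2)w}.
\]
The common multiplier $r-a(1+w/c_1)$ is nonnegative exactly above the threshold, so full service is consistent. Simplifying $a-\tfrac12$ should yield the stated boundary. I expect the main bookkeeping obstacle to be matching the normalization of the constant in that boundary expression. Concretely, I get $a-\tfrac12=(c_1-c_2)w/\big(2(2c_1c_2+(c_1+c_2)w)\big)$. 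So I will double-check the convention for $\delta$ and $\ell$ against the statement before finalizing.

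The remaining conceptual point is to justify that restricting to contiguous intervals anchored at the stations loses no optimality. \Cref{prop:Laguerre_cell} together with \Cref{lemma:null_space} handles this, and it also shows that $\mathcal E_0,\mathcal E_1,\mathcal E_2$ are disjoint up to null boundaries.
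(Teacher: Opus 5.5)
Your proof is correct and is essentially the paper's argument: you use the Laguerre-cell structure to reduce to the two-variable concave program $\max\,(a+b)r-\frac{a^2}{2}(1+\frac{w}{c_1})-\frac{b^2}{2}(1+\frac{w}{c_2})$ subject to $a,b\ge 0$ and $a+b\le 1$, then split on whether the coupling constraint binds. Your boundary $a=\frac{c_1(c_2+w)}{2c_1c_2+(c_1+c_2)w}$, i.e.\ $a-\frac12=\frac{(c_1-c_2)w}{2(2c_1c_2+(c_1+c_2)w)}$, is right; it matches the value in the paper's own proof and the coverage formula in the main text, so the displayed statement is simply missing a factor $2$ (as written, its boundary tends to $\frac32>1$ as $c_2\to 0$), and you should not change the conventions for $\delta$ or $\ell$ to force agreement with it.
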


The optimal total social welfare under the partial service and complete service regimes is given by
$
\frac{c_1r^2}{2(c_1+w)} + \frac{c_2r^2}{2(c_2+w)}$ and $r - \frac{(c_2+w)(c_1+w)}{2(2 c_1 c_2 + (c_1 + c_2) w)}
$
respectively.

Under optimal planning, the more efficient station, with higher capacity, is allocated a larger share of the service region and always completes its assigned demand earlier, with the ratio of its service durations given by $\frac{c_2+w}{c_1+w}$. With the capacity of station 2 fixed, the service coverage of station 1, $\min\{\frac{rc_1}{c_1+w},\frac{c_1c_2+c_1w}{2 c_1 c_2 + (c_1 + c_2) w}\}$, concavely increases as $c_1$ rises, implying diminishing marginal improvements in service coverage from capacity expansion. Furthermore, by taking the derivative of the optimal social welfare with respect to the capacity, we find that the marginal gain from capacity expansion is larger at the less efficient station (i.e., with lower processing capacity) than at the more efficient one. This suggests that balancing capacity across stations yields a higher return than further expanding capacity at the station that already has more capacity.

\subsection{Hotelling Model with Uniform Sensitivity}
While the homogeneous‐demand setting offers analytical tractability and clear insight into the spatial allocation of service regions, it assumes that all demand is equally sensitive to waiting time. In practice, however, customers may vary considerably in their willingness to wait: some may tolerate longer delays for lower travel costs, while others may strongly prefer faster service even at a greater distance. To capture this heterogeneity, we extend the model to a setting in which time sensitivity is uniformly distributed across the population. This richer framework allows us to investigate how differences in patience influence the optimal spatial partition and the utilization of station capacities.

\begin{example}[Hotelling Model with Uniform Time Sensitivity]\label{Example:Hotelling_Uniform}
Assume that the demand is uniformly distributed on $\mathcal{X}\times\mathcal{A}=[0,1]\times[0,1]$ with density $1$. Two service stations, indexed by $i=1,2$, are situated at positions $0$ and $1$, and operate with constant capacity levels $c_1$ and $c_2$, respectively.
\end{example}

We present the optimal matching across three scenarios: partial service (disjoint), partial service (adjacent), and complete service. In the first two cases, a portion of the demand is excluded from service, while in the third case, all demand is eventually assigned to a station. In the partial service (disjoint) case, the two stations independently determine their service areas. In contrast, the partial service (adjacent) case introduces a boundary between service regions, particularly for low-sensitivity demand.

\begin{proposition}\label{prop:Hotelling_Uniform}
In \Cref{Example:Hotelling_Uniform}, the optimal scheduling can be expressed as a disjoint union $[0,1]^2 = \mathcal{E}_1\cup \mathcal{E}_2\cup \mathcal{E}_0$ where demand in $\mathcal{E}_1$ (resp., $\mathcal{E}_2$) is served by station $1$ (resp., $2$), with service scheduled in order of time sensitivity, while demand in $\mathcal{E}_0$ is not served. The sets are determined by two boundary functions $f_1$ and $f_2$:
\begin{equation*}
\begin{aligned}
\mathcal{E}_1 = \left\{(x,\alpha): 0 \leq x \leq  f_1(\alpha) 
\right\},\quad
\mathcal{E}_2 = \left\{(x,\alpha): 1- f_2(\alpha)  \leq x \leq 1
\right\},\quad
\mathcal{E}_0 = [0,1]^2\setminus (\mathcal{E}_1\cup \mathcal{E}_2).
\end{aligned}
\end{equation*}
The functions $f_i$ depend on $r$ as follows.\\
\textbf{Case 1: Partial service (disjoint).}
If $r\leq 0.5$, the boundaries are given by
\begin{equation*}
f_i(\alpha) = r f^*(\alpha,c_i,w), \quad f^*(\alpha,c,w) =\frac{\cosh\left((1-\alpha)\sqrt{\frac{w}{c}}\right)}{\cosh\left(\sqrt{\frac{w}{c}}\right)}. 
\end{equation*}

\textbf{Case 2: Partial service (adjacent).} 
Let $ \lambda = \sqrt{\frac{w(c_1+c_2)}{2c_1c_2}}$. If $$0.5<r<
\frac{c_1^2+c_2^2}{(c_1+c_2)^2} +\frac{w}{2(c_1+c_2)} -\frac{(c_1-c_2)^2}{2(c_1+c_2)^2\cosh(\lambda)}\coloneqq r_{\mathrm{c}}(c_1,c_2,w),$$ 
there exists a threshold $\hat{\alpha}$ such that: the two stations’ service regions touch ($f_1(\alpha)+f_2(\alpha)=1$) when $\alpha\leq\hat{\alpha}$, but remain separated ($f_1(\alpha)+f_2(\alpha)<1$) when $\alpha>\hat{\alpha}$. The details of $\hat{\alpha}$ and $f_i$ are provided in \Cref{sec:proof_Hotelling_Uniform}.

\textbf{Case 3: Complete service.} 
If $r\geq r_{\mathrm{c}}(c_1,c_2,w)$, the expressions of $f_i$, $i=1,2,$ are
\begin{equation*}
f_1(\alpha) = \frac{c_2-c_1}{2(c_1 + c_2)}\frac{\cosh(\lambda(1-\alpha))}{\cosh(\lambda)} + \frac{c_1}{c_1 + c_2}
,\quad f_2(\alpha)=1-f_1(\alpha),\quad \lambda = \sqrt{\frac{w(c_1+c_2)}{2c_1c_2}}.
\end{equation*}
\end{proposition}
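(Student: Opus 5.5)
The plan is to use \Cref{prop:sensitive_priority} to turn the problem into a one-dimensional calculus-of-variations problem in $\alpha$, and then solve it regime by regime. Throughout I take $\delta(x,y)=|x-y|$ on the line.

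\textbf{Reduction.} By \Cref{prop:Laguerre_cell} and \Cref{prop:exclusion}, for each sensitivity $\alpha$ station $1$ serves an interval $[0,f_1(\alpha)]$ and station $2$ serves $[1-f_2(\alpha),1]$. The constraints are $f_i\ge 0$ and $f_1+f_2\le 1$. By \Cref{prop:sensitive_priority}, with one-sided cost $\alpha w t$ and constant capacity $c_i$, station $i$ processes types in decreasing order of $\alpha$ starting at $t=0$.

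Let $F_i(\alpha)=\int_\alpha^1 f_i(a)\,\mathrm{d}a$, so $F_i'=-f_i$ and $F_i(1)=0$. A type-$\alpha$ unit at station $i$ is then served at time $F_i(\alpha)/c_i$. The temporal cost is $\int_0^1 \alpha w f_i F_i/c_i\,\mathrm{d}\alpha=-\frac{w}{2c_i}\int_0^1\alpha (F_i^2)'\,\mathrm{d}\alpha$. Integrating by parts, and using $F_i(1)=0$ together with the vanishing factor $\alpha$ at $0$, this equals $\frac{w}{2c_i}\int_0^1F_i^2\,\mathrm{d}\alpha$.

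The spatial cost of station $i$ is $\int f_i^2/2$. Hence the welfare is
\[
W(F_1,F_2)=\sum_{i=1}^2\int_0^1\Big(-rF_i'-\tfrac12 (F_i')^2-\tfrac{w}{2c_i}F_i^2\Big)\mathrm{d}\alpha .
\]
This is a strictly concave quadratic functional of $(F_1,F_2)$, maximized over the convex set defined by $F_i'\le 0$ and $-F_1'-F_2'\le 1$. Because of concavity, the Euler–Lagrange equations plus the natural boundary conditions and complementary-slackness conditions for the pointwise constraint $f_1+f_2\le1$ are sufficient for optimality. So in each regime it suffices to exhibit a candidate and verify these conditions.

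\textbf{Case 1.} If the coupling constraint is slack, the two stations decouple. The Euler–Lagrange equation is $F_i''=\frac{w}{c_i}F_i$, i.e. $f_i''=\frac{w}{c_i}f_i$. The end condition $F_i(1)=0$ gives $f_i'(1)=0$ via $F_i''=-f_i'$. The natural condition at $\alpha=0$, namely $\partial L/\partial F_i'=-r+f_i(0)=0$, gives $f_i(0)=r$. Together these yield $f_i=rf^*(\alpha,c_i,w)$.

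Since $f^*$ is decreasing in $\alpha$ with $f^*(0,\cdot,\cdot)=1$, the sum $f_1+f_2$ peaks at $\alpha=0$ with value $2r$. So the slack assumption holds exactly when $r\le 1/2$.

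\textbf{Case 3.} Impose $f_2=1-f_1$, i.e. $F_2=(1-\alpha)-F_1$, and vary $F_1$ only. The Euler–Lagrange equation becomes the linear ODE $f_1''=\lambda^2\big(f_1-\frac{c_1}{c_1+c_2}\big)$ with $\lambda^2=\frac{w(c_1+c_2)}{2c_1c_2}$. The boundary conditions are $f_1'(1)=0$ and, from the natural condition at $\alpha=0$ (the spatial costs of the two stations balance at the boundary point), $f_1(0)=\tfrac12$. This gives the stated cosh formula.

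For optimality one also needs a nonnegative multiplier $\kappa(\alpha)$ for $f_1+f_2\le1$, and $f_1\in[0,1]$. The multiplier is the marginal welfare of the boundary demand: $r$ minus its travel cost $f_1(\alpha)$ minus its induced temporal cost at station $1$. Here the induced temporal cost is its own waiting cost $\alpha wF_1(\alpha)/c_1$ plus the extra delay $\frac{w}{c_1}\int_0^\alpha aF_1'(a)\ldots$ it imposes on less sensitive types, which can be written as $\frac{w}{c_1}\int_0^\alpha F_1(a)\,\mathrm{d}a$ after the same integration by parts.

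I expect $\kappa$ to be monotone in $\alpha$, with its minimum at one endpoint, presumably $\alpha=1$, where the marginal congestion cost of the most sensitive boundary demand is largest. Setting $\kappa=0$ at that point should reproduce $r_{\mathrm c}(c_1,c_2,w)$. Evaluating the explicit cosh solution there produces the $\cosh(\lambda)$ term in $r_{\mathrm c}$.

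\textbf{Case 2 and the main obstacle.} The hardest part is the intermediate regime $\tfrac12<r<r_{\mathrm c}$. There the constraint binds on $[0,\hat\alpha]$ and is slack on $(\hat\alpha,1]$. On $[0,\hat\alpha]$ I will solve the coupled ODE above; on $(\hat\alpha,1]$, the two decoupled ODEs $f_i''=\frac{w}{c_i}f_i$ with $f_i'(1)=0$. The pieces are glued by continuity of $f_i$ and $F_i$ at $\hat\alpha$, together with the condition that the multiplier $\kappa$ vanishes at $\hat\alpha$, i.e. each station's boundary demand earns exactly zero marginal welfare. This yields a transcendental system for $\hat\alpha$ and the constants.

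I will show that this system has a unique solution by monotonicity in $r$. At $r=\tfrac12$ it should give $\hat\alpha=0$, matching Case 1, and $\hat\alpha\uparrow 1$ as $r\uparrow r_{\mathrm c}$, matching Case 3. I will then verify $\kappa\ge0$ on $[0,\hat\alpha]$ and $f_1+f_2<1$ on $(\hat\alpha,1]$.

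Establishing the single-crossing structure, i.e. that the binding set is an interval $[0,\hat\alpha]$, is the delicate step. I would try to obtain it from \Cref{prop:exclusion}: coverage shrinks as $\alpha$ increases, so if the constraint binds at some $\alpha$, it binds at every smaller $\alpha$.
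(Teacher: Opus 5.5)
Your proposal is correct and follows essentially the same route as the paper. Both reduce, via the Laguerre-cell and sensitivity-priority structure, to the strictly concave functional $\sum_i\int_0^1\bigl(rf_i-\tfrac12 f_i^2-\tfrac{w}{2c_i}F_i^2\bigr)\,\mathrm{d}\alpha$, solve the Euler--Lagrange ODEs in each regime with $f_i'(1)=0$ and $f(0)=r$ or $f_1(0)=\tfrac12$, and certify optimality through the multiplier $r-f_i(\alpha)-\frac{w}{c_i}\int_0^\alpha F_i$. Your Case 3 expectation is right: this multiplier is decreasing in $\alpha$ and equals $r-r_{\mathrm c}$ at $\alpha=1$. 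The paper works in Pontryagin form and uses an intermediate-value argument for $\hat\alpha$, so you need no a priori single-crossing argument: under strict concavity, verifying your candidate makes the interval structure $[0,\hat\alpha]$ a consequence.
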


\Cref{fig:hotelling_all} displays the optimal demand partitions corresponding to the three cases analyzed in \Cref{Example:Hotelling_Uniform}, illustrating how the service regions change with the service reward~$r$. In the low-reward regime, as shown in (a), the stations’ service regions are disjoint: each serves only the demand that provides the highest contribution to social welfare, leaving a central set of demand types unassigned. As the reward increases to an intermediate level, as shown in (b), the potential service regions expand and begin to touch, sharing a boundary of measure zero. This necessitates a partition boundary, which forms for demand with lower time sensitivity. In this “partial service (adjacent)” regime, some highly sensitive but spatially intermediate demand may still be excluded. Finally, in the high-reward regime, as shown in (c), a complete service policy is optimal: all demand is served, and the entire location-sensitivity space is partitioned by a continuous boundary separating the two stations’ regions.

\begin{figure}[htbp!]
\centering
\caption{Illustration of the Optimal Demand Partitioning.}
\label{fig:hotelling_all}
\subfigure[Partial Service (Disjoint)]{
\begin{minipage}[b]{0.35\textwidth}\centering
    \includegraphics[width=1\textwidth]{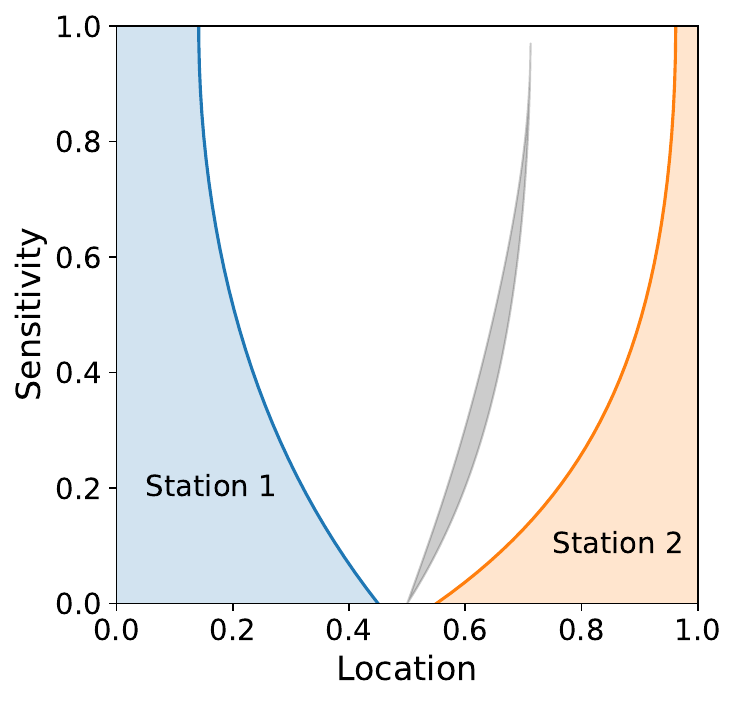}
    \end{minipage}
    }
    \subfigure[Partial Service (Adjacent)]{
\begin{minipage}[b]{0.35\textwidth}\centering
    \includegraphics[width=1\textwidth]{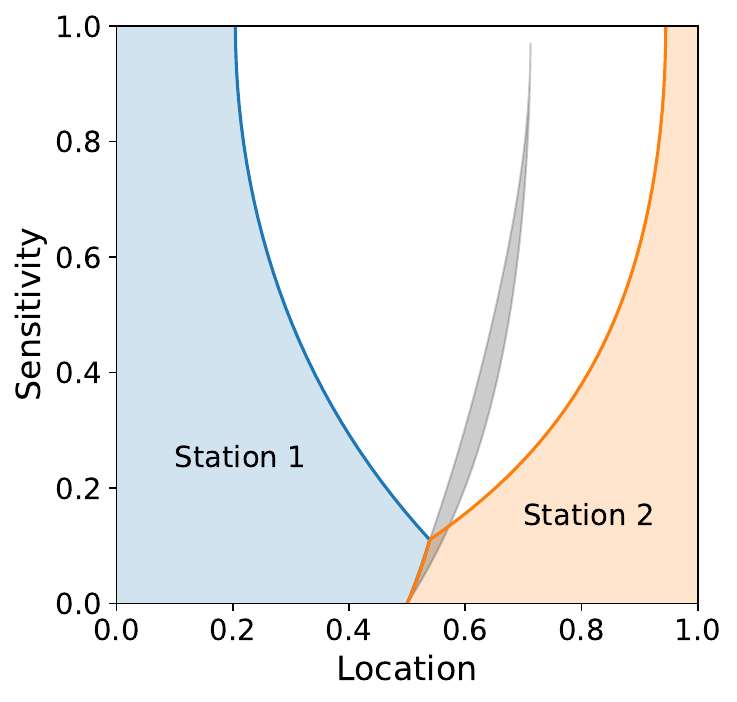}
    \end{minipage}
    }\\
\subfigure[Complete Service]{
\begin{minipage}[b]{0.35\textwidth}\centering
    \includegraphics[width=1\textwidth]{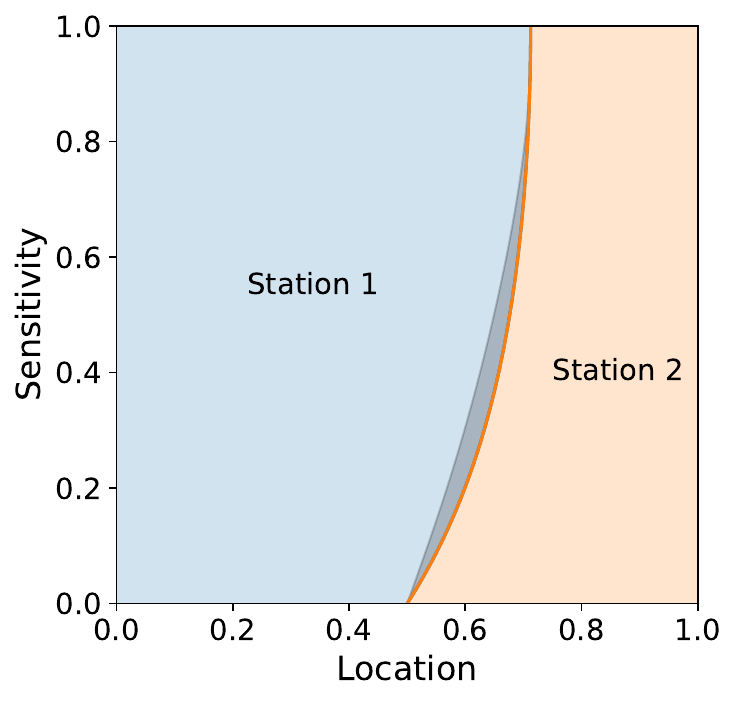}
    \end{minipage}
    }
\subfigure[Complete Service with Varying $w$]{
\begin{minipage}[b]{0.35\textwidth}\centering
    \includegraphics[width=1\textwidth]{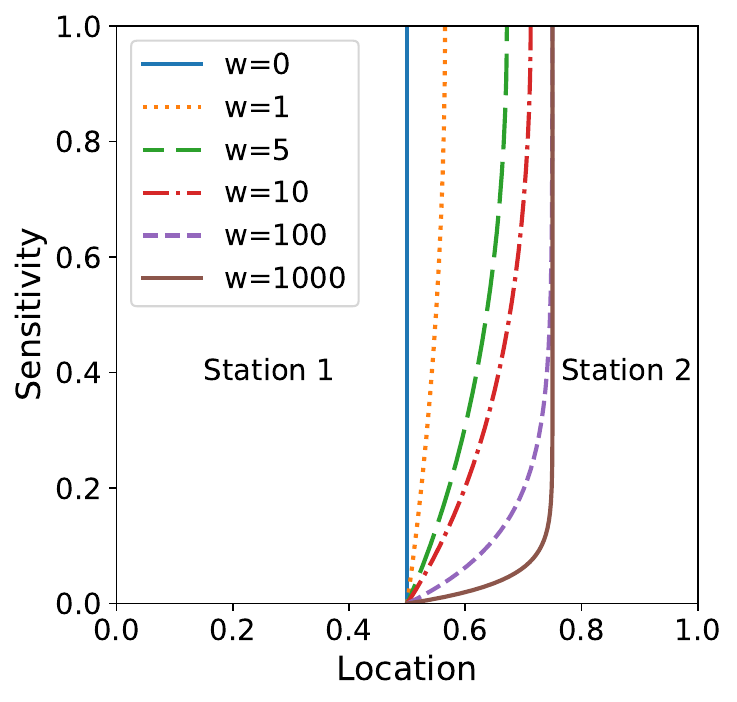}
    \end{minipage}
    }
\par\vspace{0.5ex}
\begin{minipage}{0.95\linewidth}
\footnotesize\emph{Note.}  
In all three panels, $(c_1,c_2)$ are fixed at $(3,1)$. In panels (a)-(c), the waiting cost $w$ is set to be $10$ while the reward $r$ takes values $0.45$, $0.65$ and $2$, respectively. In panel (d), the reward $r$ is taken a sufficiently large value and waiting cost $w$ varies from $0$ to $1000$. 
Panels (a)-(c) show the transition from disjoint partial service to complete service as reward $r$ increases.  The gray region indicates those demand types whose optimal assigned station varies with $r$. Panel (d) illustrates how the boundary shifts to balance station workloads as the waiting cost $w$ becomes dominant. 
\end{minipage}
\end{figure}

One notable phenomenon in \cref{Example:Hotelling_Uniform} distinct from \cref{Example:Hotelling_Homogeneous} is the allocation switching. Under the optimal matching, as the reward $r$ increases from $0$ to infinity, each demand type in \cref{Example:Hotelling_Homogeneous} either remains not served or becomes deterministically allocated to a specific station once $r$ exceeds a certain threshold. In contrast, in \cref{Example:Hotelling_Uniform}, where time sensitivity is heterogeneous, there exist demand types whose assigned stations vary with $r$. The Grey area in \Cref{fig:hotelling_all} illustrates such a phenomenon. When the service reward is low, a customer located closer to station 2 lacks the incentive to incur travel or waiting costs for service. As the service reward reaches an intermediate level, demand is assigned to the nearest station (station 2). However, as the service reward increases further, more high-time-sensitivity customers residing near station 2 are allocated to that station. Since the optimal matching prioritizes customers with higher time sensitivity, the allocation will switch to the more efficient station (station 1) to avoid longer waiting times. The gray region in \Cref{fig:hotelling_all} highlights these swing demand types. In a setting without capacity constraints, once demand at location $\mathbf{x}$ is matched to station 1, increasing or decreasing the reward does not change the relative cost difference between stations. Hence, there is no incentive for the planner to reassign it to station 2, which was initially more costly to travel to. In contrast, in our framework, the temporal cost is not fixed but endogenously shaped by the matching plan. As the global reward parameter changes, the optimal allocation shifts, thereby altering temporal costs and potentially making previously dominated destinations attractive.

The impact of waiting costs on the optimal matching is most clearly illustrated in the complete-service regime. When $r$ is sufficiently large such that any exclusion is suboptimal, Case 3 in \Cref{prop:Hotelling_Uniform} shows that the boundary relies on $w$ for given $c_i$, $i=1,2$. Specifically, when the scale of waiting cost $w$ increases from $0$ to $\infty$ (here $r$ will also go to infinity to ensure no exclusion), the boundary monotonically shifts from $x=1/2$ and approaches $x=c_1/(c_1+c_2)$. In the extreme case where waiting costs are negligible, the temporal dimension of the problem vanishes. The optimal partition is determined solely by minimizing transportation costs, yielding a boundary at the midpoint. Conversely, when waiting costs are dominant, they outweigh transportation considerations. In this case, the planner seeks to minimize total system waiting time by allocating demand in proportion to each station's service capacity to balance workloads.

\section{Numerical Study}\label{sec:numerical}
In the numerical study, we consider a vaccination-planning problem motivated by the Toronto COVID-19 rollout. Demand is distributed across the city, and individuals are partitioned into four age-based urgency classes. A set of capacitated vaccination sites, including city-operated clinics and hospital clinics, provides service over time. The planner jointly assigns each demand unit to a service location and a service time to minimize the total social cost, defined as the sum of travel and temporal costs. We assume that total capacity is sufficient to serve the entire population and that the vaccination reward is sufficiently large so that all individuals are scheduled for vaccination. Under this assumption, maximizing social welfare is equivalent to minimizing the total social cost, defined as the sum of travel and temporal costs.

Temporal cost captures the social burden of delayed protection. A later appointment leaves an individual unvaccinated for a longer period, thereby increasing the risk of adverse health outcomes. We model this cost in reduced form as increasing linearly with waiting time, with the slope depending on the individual’s urgency class. In this sense, temporal cost reflects the idea that delaying vaccination is more harmful for higher-risk populations.

Spatial cost captures the burden of accessing the assigned vaccination site. In addition to standard transportation inconvenience, longer trips may increase exposure during travel and may also reduce the effective vaccination rate by making attendance more difficult. We measure travel cost by Euclidean distance (in km) between the individual’s location and the assigned vaccination site.

To construct demand, we use 2021 Canadian Census data \citep{statcan_marital_status_9810012901} to approximate the spatial population distribution in Toronto. The population is partitioned into four age-based risk groups: highest risk (65+), higher risk (40--64), lower risk (20--39), and lowest risk (0--19). We use these groups as proxies for urgency classes. To calibrate the temporal-cost weights, we adopt the age-specific COVID-19 mortality estimates reported in \cite{levin2020assessing}. Specifically, for the four groups, we evaluate the fitted mortality at ages 75, 52, 30, and 10, respectively. The temporal cost is defined as $
l_j(t) = 0.1\left(1 + 2\times\mathrm{IFR}_j\right)t
$, where $\mathrm{IFR}_j\%$ is the infection fatality rate (mortality) for each group $j$, and the constant captures the additional general health expenditure associated with unvaccinated populations. Note that we intentionally assign a relatively larger weight to temporal cost than to spatial cost so that the model places greater emphasis on earlier protection, especially for high-risk populations.  
On the supply side, we consider two categories of vaccination sites, city-operated immunization clinics and hospital immunization clinics, with locations obtained from \cite{toronto_covid_immunization_clinics}. This dataset is not fully comprehensive and does not include several vaccination channels used in practice, such as pharmacies, mobile clinics, and drive-through clinics. Our analysis should therefore be interpreted as a stylized comparison based on the major fixed-site channels captured in the available location data.

We assume that all sites of the same type share the same capacity profile, and that total capacity within each site type is allocated equally across locations of that type. We focus on first-dose scheduling. The total demand assigned to the modeled system is 1.65 million individuals (with the remaining population assumed to be vaccinated through other channels), with an age composition proportional to the census distribution. Aggregate capacity over the planning horizon is set to 2.19 million doses over 300 days, providing some slack relative to total demand. The time profile of capacity is calibrated from Toronto vaccination statistics\footnote{https://public.tableau.com/app/profile/tphseu/viz/COVID-19VaccinationsinToronto/COVID-19VaccinationsinToronto-Public} using a smoothed fit.

The resulting first-dose capacity is hump-shaped over time. Early in the rollout, capacity rises as sites expand staffing, logistics, and operating capability. After roughly two months, however, second-dose appointments begin to consume a growing share of site resources, reducing the capacity available for first-dose service. This generates a natural increase-then-decrease pattern in first-dose capacity over the planning horizon.

To evaluate the benefit of joint spatial and temporal coordination, we compare the proposed spatiotemporal matching with several benchmark rules that impose simpler operational structures. As benchmarks, we consider simpler rules that decouple the spatial and temporal decisions. On the spatial side, we consider two assignment rules. Under the \emph{Capacity} rule, individuals are assigned across sites using semi-definite optimal transport to minimize spatial costs, subject to each site's total capacity over the planning horizon, without considering individual urgency. Under the \emph{Distance} rule, individuals are assigned to their nearest site without considering capacity, which induces a Voronoi-type spatial partition of the city. On the temporal side, we consider two scheduling rules for the assigned population. Under \emph{Random} scheduling, individuals assigned to a site are scheduled uniformly at random over the available service opportunities. Under \emph{Priority} scheduling, individuals are scheduled in decreasing order of urgency, so that higher-risk groups receive earlier service. For each combination, we report the average spatial cost, average temporal cost, and average total social cost among vaccinated individuals, together with the uncovered rate. 
\begin{table}[t]
\centering
\caption{Comparison of per-capita social cost under different assignment rules.}
\label{tab:percapita_cost_comparison}
\begin{tabular}{lcccr}
\toprule
Rule 
& Spacial Cost
& Temporal Cost
& Total Cost
& Uncovered Rate \\
\midrule
Spatiotemporal 
& 2.99
& \textbf{21.47}
& \textbf{24.46}
& 0.00\% \\
Capacity\_Priority
& 2.79
& 22.49
& 25.28
& 0.00\% \\
Capacity\_Random 
& 2.79
& 31.48
& 34.27
& 0.00\% \\
Distance\_Priority 
& \textbf{2.46}
& 25.32
& 27.78
& 16.66\% \, (8.16,\,8.00,\,0.50,\,0.00) \\
Distance\_Random
& \textbf{2.46}
& 30.81
& 33.27
& 16.66\% \, (3.15,\,5.28,\,5.49,\,2.74) \\
\bottomrule
\end{tabular}

\vspace{0.5ex}
\parbox{0.96\linewidth}{\footnotesize
Notes: The costs are the average cost among vaccinated individuals. For the uncovered rate, the numbers in parentheses indicate the breakdown of the total by risk group, ordered from highest to lowest risk. }
\end{table}

\Cref{tab:percapita_cost_comparison} summarizes the results and yields three main insights. First, the spatiotemporal matching achieves the lowest total social cost, outperforming the second-best result by 3.24\%. Its gain comes mainly from reducing temporal cost, while spatial cost remains close to that under the capacity-based rules. This shows the value of jointly optimizing location and service time. To illustrate the magnitude, a 3.24\% improvement can translate into meaningful gains at the scale of social services and healthcare systems. 
Second, among the full-coverage benchmarks, Capacity\_Priority performs close to the spatiotemporal optimum, whereas Capacity\_Random and Distance\_Random perform substantially worse. Under our calibration, delay is relatively costly and capacity slack is limited, so the temporal schedule remains a first-order determinant of performance. Third, the distance-based rules attain the lowest spatial cost but perform poorly overall. They incur much higher temporal costs and leave 16.66\% of the population uncovered, indicating that assigning individuals solely by distance can overload nearby sites and fail to use system capacity efficiently.

\begin{figure}[p]
\centering
\caption{Illustration of Vaccination Scheduling.} \label{fig:vac}
\subfigure[Spatial partition for lowest risk group]{
\begin{minipage}[b]{0.45\textwidth}
    \includegraphics[width=1\textwidth]{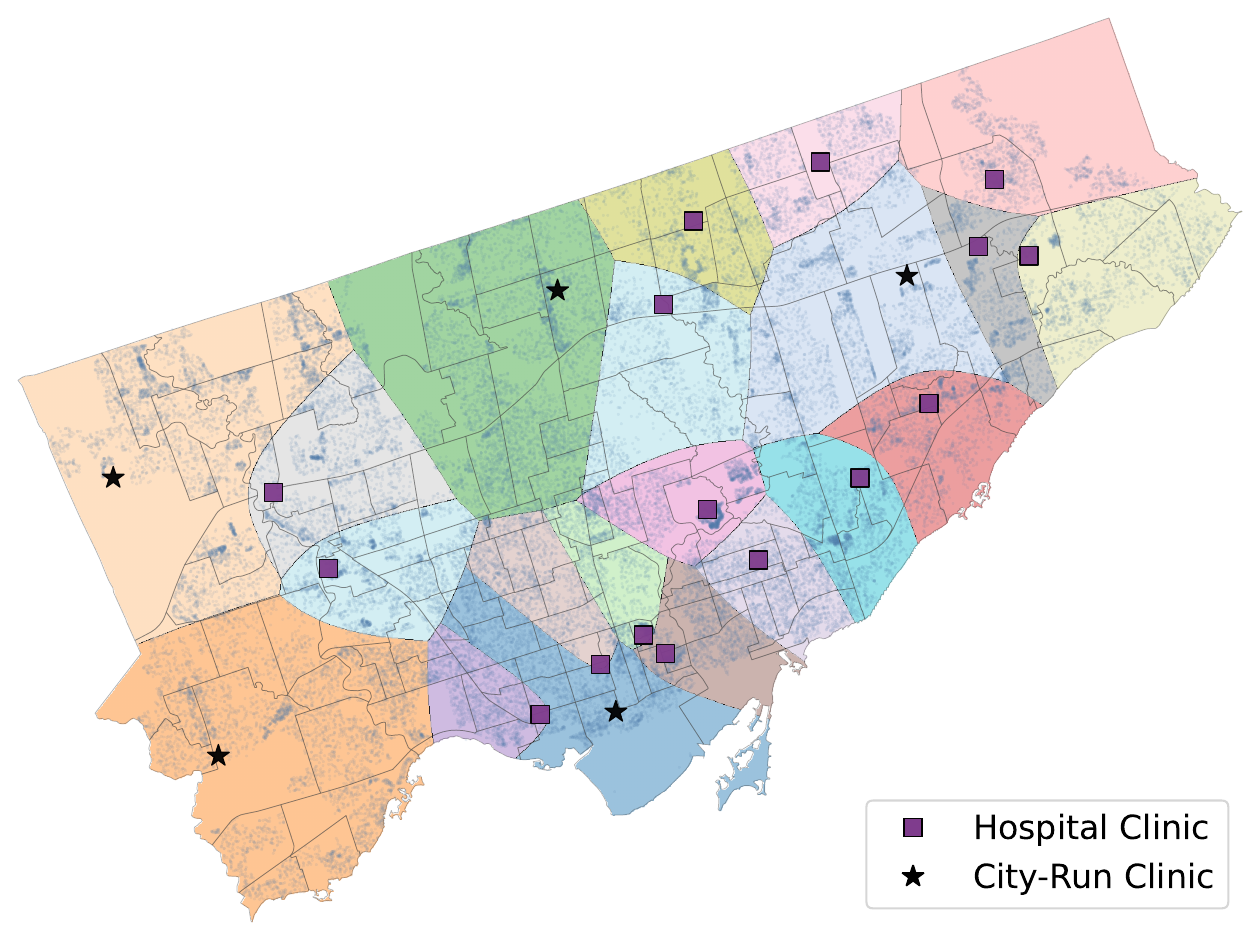}
\end{minipage}
    }    
\subfigure[Spatial partition for lowest risk group]{
\begin{minipage}[b]{0.45\textwidth}
    \includegraphics[width=1\textwidth]{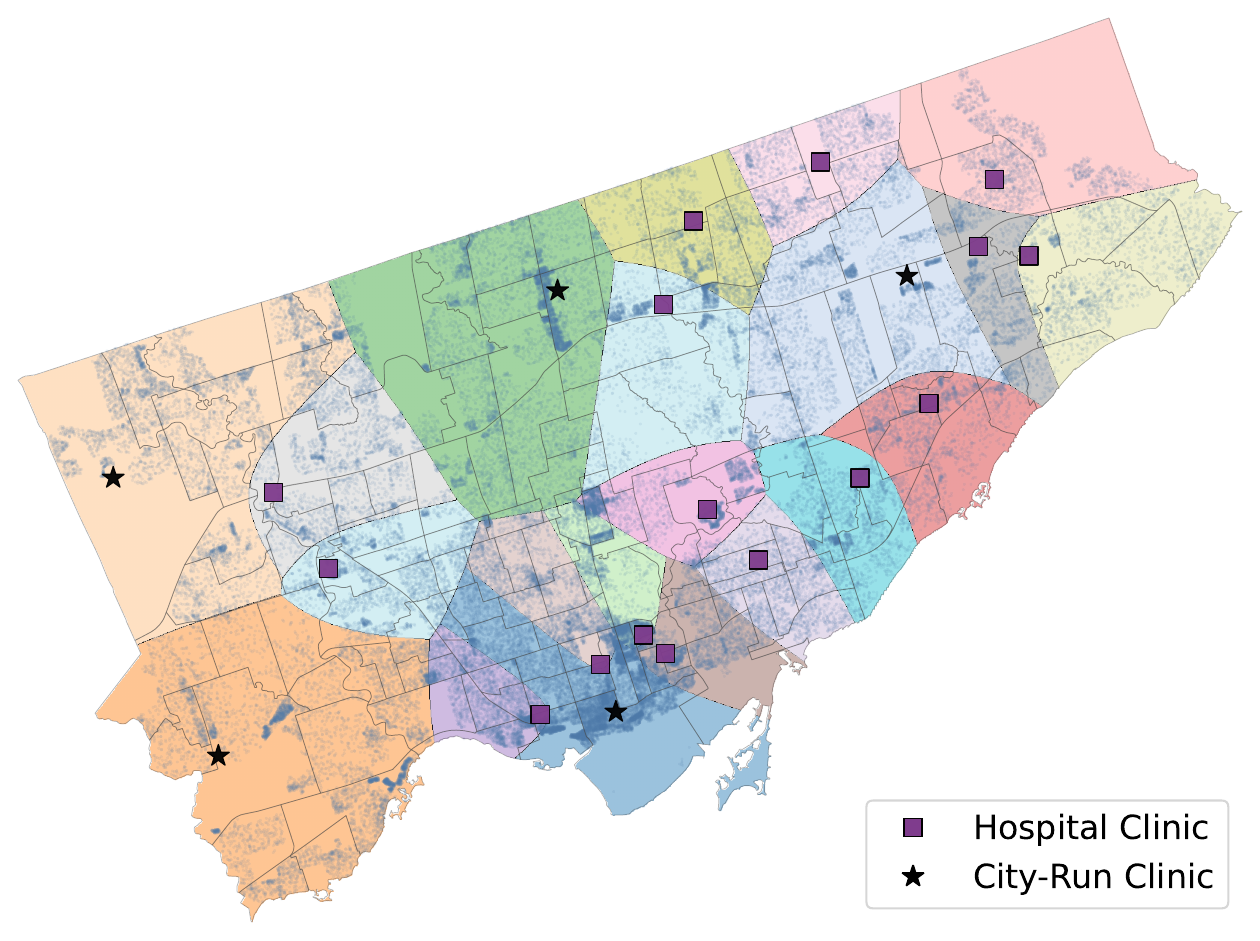}
\end{minipage}
    }  \\
\subfigure[Spatial partition for higher risk group]{
\begin{minipage}[b]{0.45\textwidth}
    \includegraphics[width=1\textwidth]{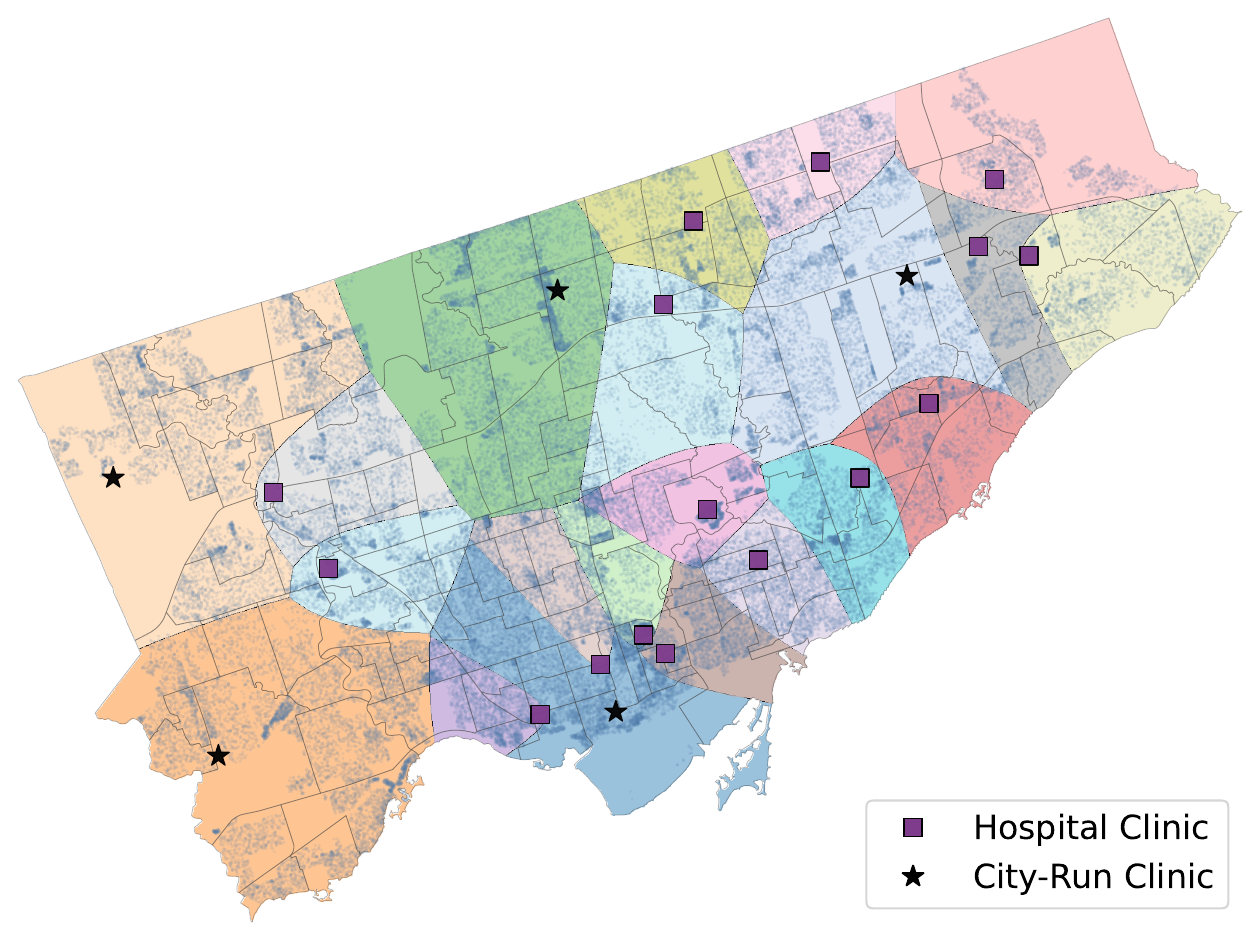}
\end{minipage}
    }    
\subfigure[Spatial partition for highest risk group]{
\begin{minipage}[b]{0.45\textwidth}
    \includegraphics[width=1.05\textwidth]{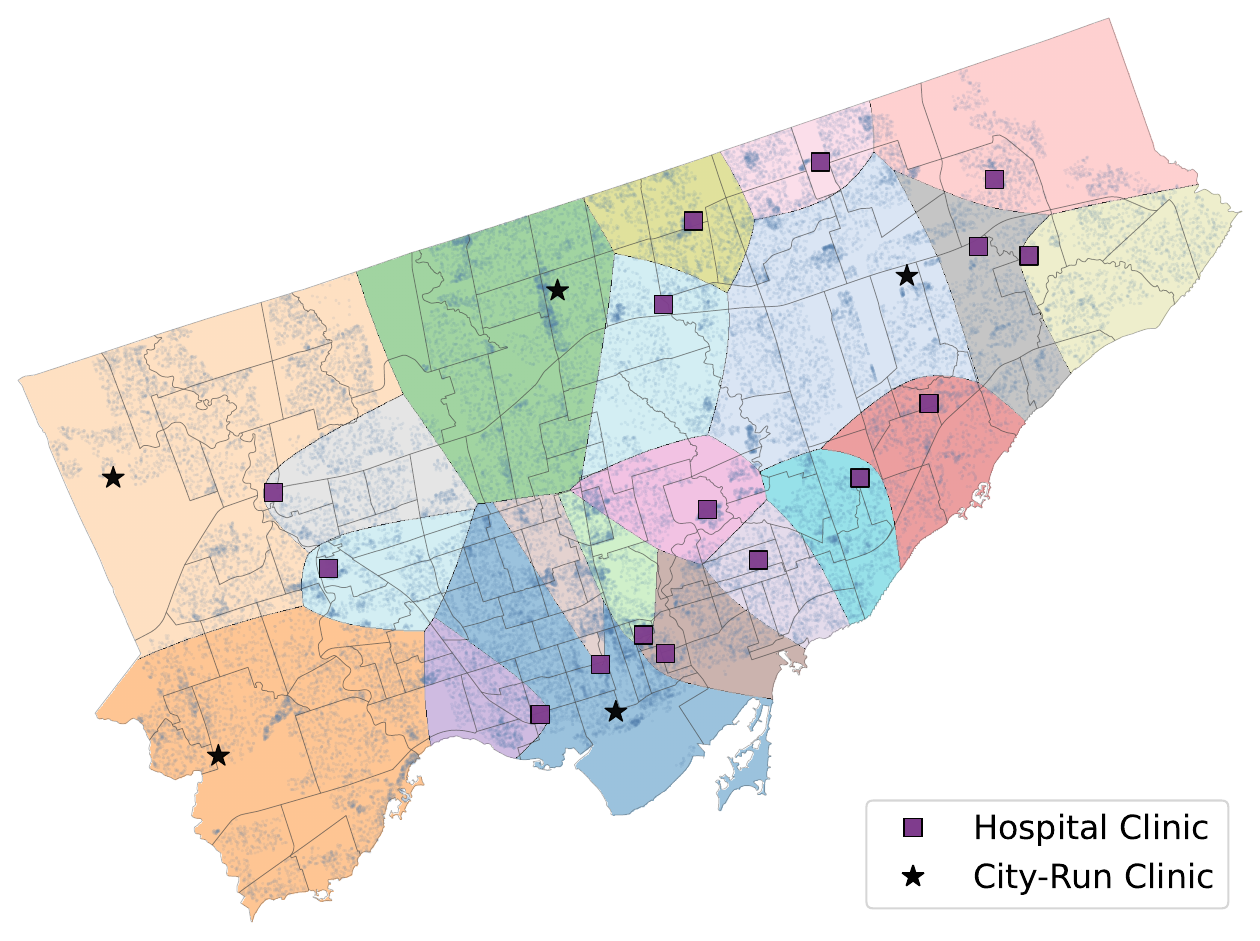}
\end{minipage}
    } \\
\subfigure[Metro Toronto Convention Centre]{
\begin{minipage}[b]{0.45\textwidth}
    \includegraphics[width=1.05\textwidth]{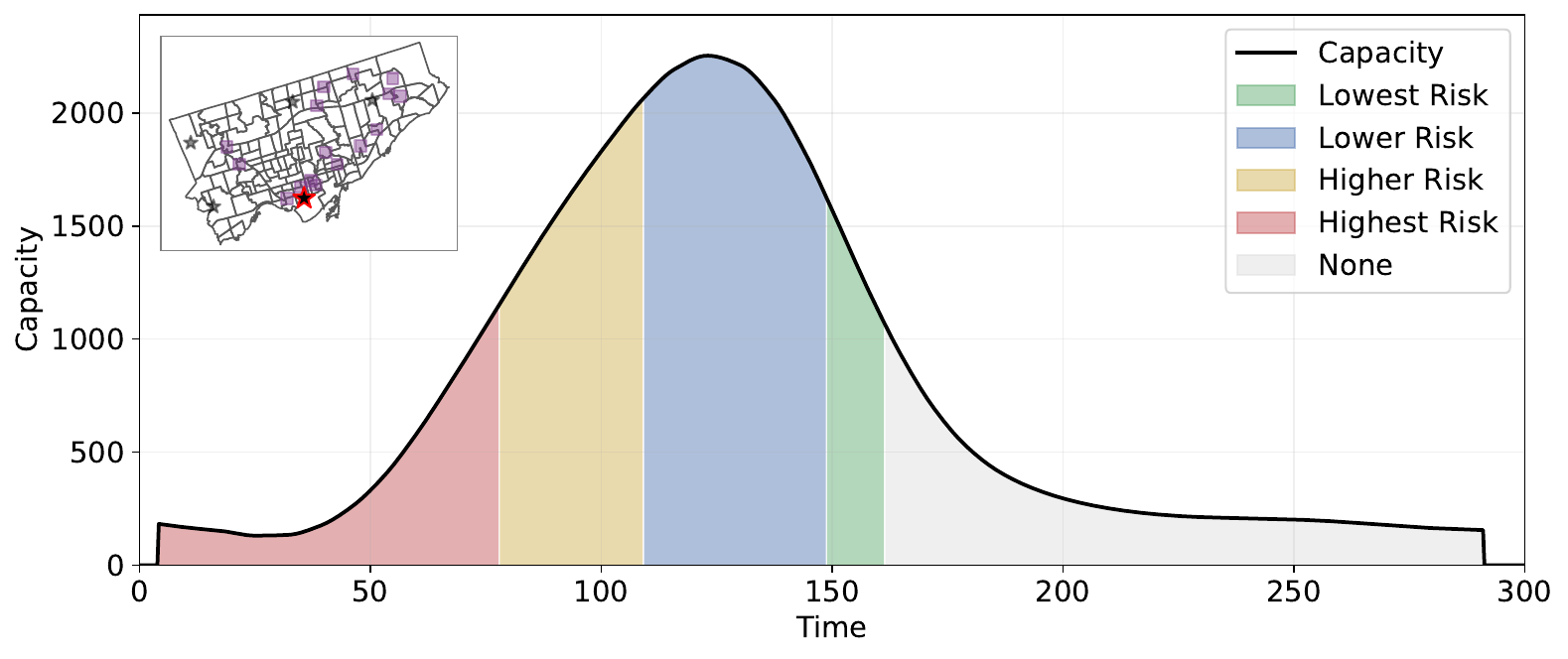}
\end{minipage}
    }    
\subfigure[Michael Garron Hospital]{
\begin{minipage}[b]{0.45\textwidth}
    \includegraphics[width=1.05\textwidth]{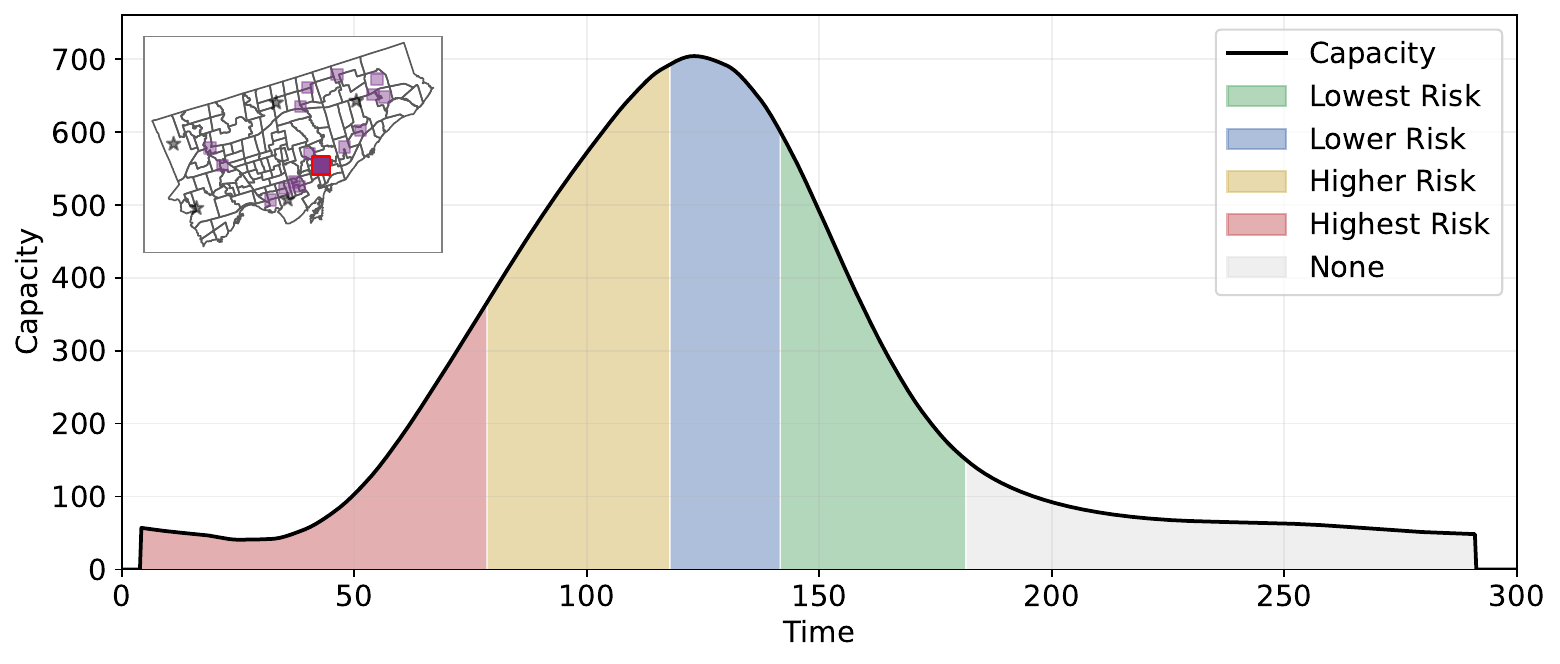}
\end{minipage}
    } \\
\subfigure[Cloverdale Mall]{
\begin{minipage}[b]{0.45\textwidth}
    \includegraphics[width=1.05\textwidth]{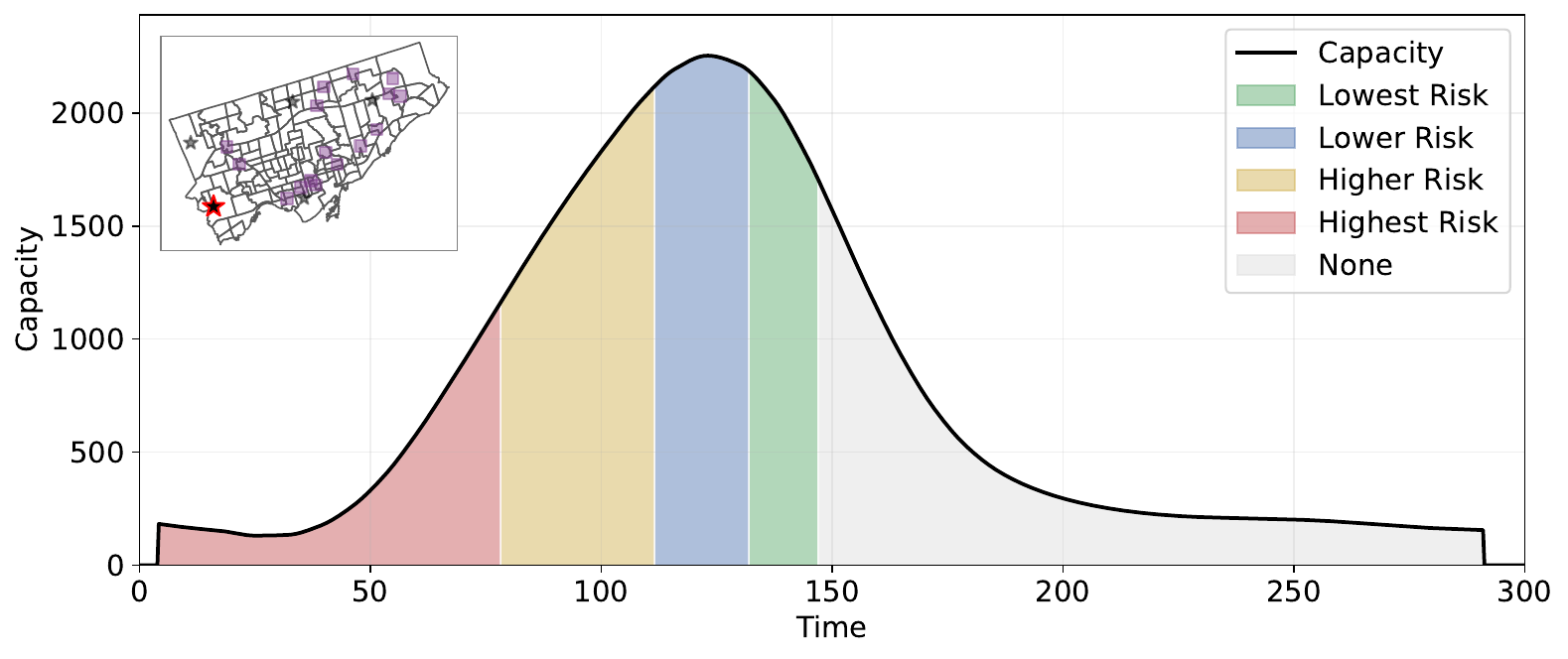}
\end{minipage}
    }    
\subfigure[SHN Centenary Clinic]{
\begin{minipage}[b]{0.45\textwidth}
    \includegraphics[width=1.05\textwidth]{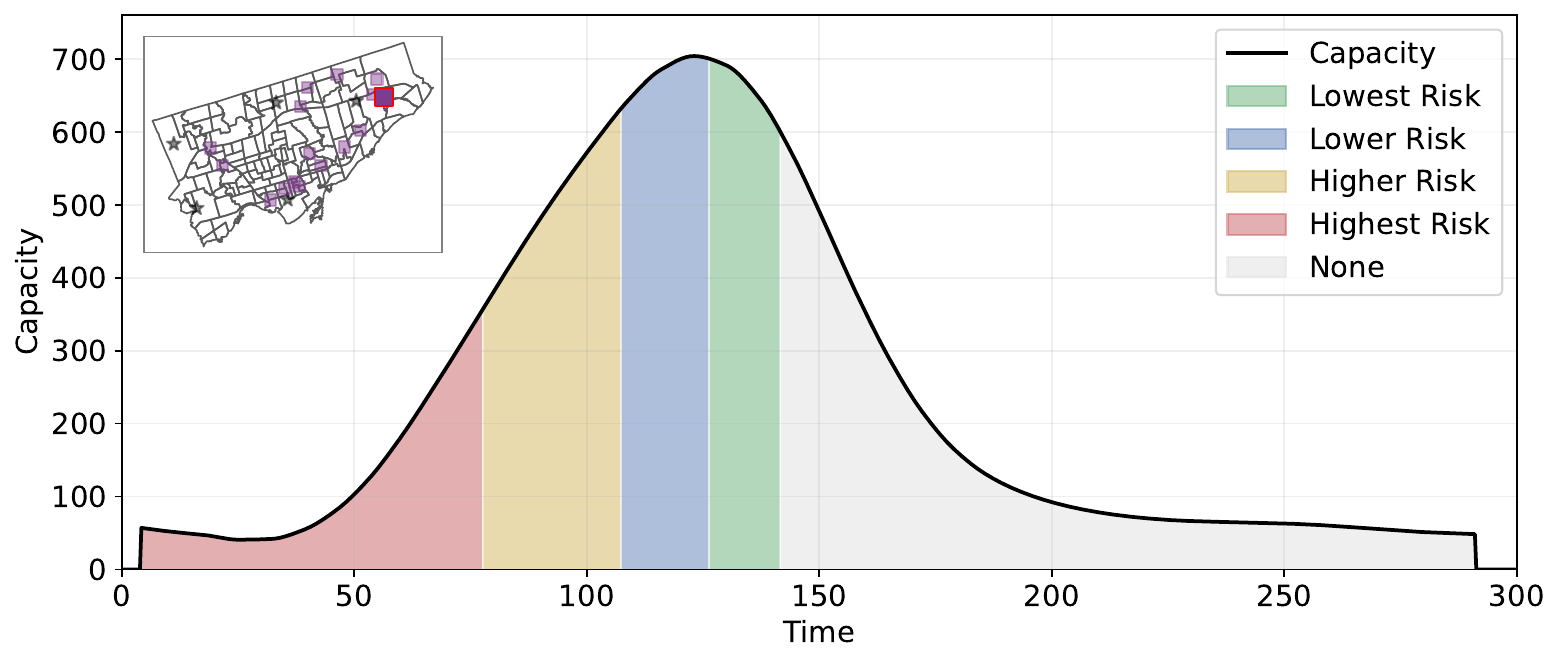}
\end{minipage}
    } 
\begin{minipage}{0.95\linewidth}
\footnotesize \emph{Note.} Panels (a)–(d) display the optimal spatial partitions for the four risk groups. Panels (e)–(h) display the corresponding temporal schedules at four representative sites. The solid curve represents the site’s capacity profile, colored areas represent allocations to different risk groups, and the gray area represents unused capacity. 
\end{minipage}
\end{figure}

\Cref{fig:vac} visualizes the optimal spatiotemporal vaccination plan. As established in \Cref{prop:sensitive_priority}, service follows a clear priority order across all four representative sites. At the same time, the four spatial partitions are not identical: although assignments remain broadly local, the boundaries move across groups, especially in contested middle regions. For example, the downtown city-operated site serves a noticeably larger region for the highest-risk group because its larger capacity makes it particularly valuable for serving more urgent individuals earlier. The site-level schedules also differ in slackness: some sites are fully used within their effective capacity window, whereas others retain slack in the late period. Notably, the highest-risk group is served over nearly the same time window across the four sites, while the service windows for lower-risk groups vary much more across sites, suggesting that the optimal policy tightly coordinates early capacity for the most urgent populations, whereas lower-risk groups are scheduled more flexibly depending on each site’s location and remaining capacity.

\section{Conclusion}\label{sec:conclusion}
In this paper, we investigate the capacitated spatiotemporal matching problem, in which a planner must decide both ``where" and ``when" to match. We develop a theoretical framework based on OT theory to model the allocation of heterogeneous demand across a set of capacitated service stations, with the objective of maximizing social welfare by balancing service rewards against spatial and temporal costs. For the practical case of a finite number of demand types, we show that this complex, infinite-dimensional problem can be reduced to a tractable, finite-dimensional convex optimization problem. This allows the implementation of the optimal matching via an envy-free pricing mechanism, which can be implemented in a finite number of discrete time slots. 

Our analysis reveals several key structural properties of the optimal matching policy. We demonstrate that service regions are not static partitions but are defined by generalized Laguerre cells that adjust to demand characteristics such as time sensitivity. For scheduling, a subtle tension arises when demands have different sensitivities but the same preferred times: while more time-sensitive demands are prioritized with service times closer to their ideal, the system may prefer to admit less-sensitive demands to maximize overall efficiency. Furthermore, when demands have different preferred times but the same sensitivity, the optimal schedule preserves the chronological order of their preferences. Finally, by extending our analysis to a spatiotemporal Hotelling model, we uncover allocation switching: a phenomenon where a demand's assigned station can change as the service reward varies. This finding, along with the others, reveals structural behaviors that do not arise in classical optimal transport problems, underscoring the distinctive nature of spatiotemporal matching problems under capacity constraints. Our numerical study demonstrates the practical relevance of the framework. 

Despite these contributions, this study has limitations that suggest avenues for future research. To isolate the structural properties of optimal plans under capacity constraints, we consider an offline deterministic setting. These choices yield clear analytical insights, but they also point to several important directions for future research, including dynamic online settings with stochastic demand arrivals or random service times, as well as endogenous capacity under uncertainty (e.g., supply disruptions) and dynamic capacity redeployment across stations.

\bibliographystyle{informs2014}
\bibliography{reference}
\ECSwitch
\input{EC}
\end{document}

%% file: EC.tex
\ECHead{E-Companion for ``Capacitated Spatiotemporal Matching''}
\section{Davenport–Schinzel Sequences}\label{sec:D_S_sequence}
In this section, we introduce the definition of Davenport-Schinzel sequence and a few classical facts and bounds on the function $\lambda_s(n)$. 

\begin{definition}[Davenport-Schinzel sequence]\label{def:D_S_sequence}
Given the set $[n]$ and an integer $s\ge 1$, a finite sequence over $[n]$ is said to be a Davenport–Schinzel sequence of order $s$ if it satisfies:
\begin{itemize}
    \item No two consecutive numbers are the same;
    \item For any $i,j\in[n]$ be two distinct values occurring in the sequence, then the sequence does not contain a subsequence $... i, ... j, ..., i, ..., j, ...$ consisting of $s + 2$ values alternating between $i$ and $j$.
\end{itemize}
\end{definition}

The function $\lambda_s(n)$ denotes the length of the longest (n,s)-Davenport-Schinzel sequence. The first few values of $\lambda_s(n)$ are well known:
\[
\lambda_0(n)=1, \qquad 
\lambda_1(n)=n, \qquad 
\lambda_2(n)=2n-1.
\]
Thus, $\lambda_s(n)$ is linear for $s \le 2$.  
For $s=3$, the growth becomes slightly superlinear and involves the inverse Ackermann function $\texttt{a}(n)$, which is defined as follows.

\begin{definition}
The \emph{Ackermann function} is a classic example of a computable function that grows faster than any primitive recursive function. It is defined recursively as follows:
\[
Ac(0, n) = n+1, \qquad 
Ac(m+1, 0) = Ac(m,1), \qquad 
Ac(m+1, n+1) = Ac\bigl(m, Ac(m+1,n)\bigr).
\]
\end{definition}

Even for small arguments, $Ac(m,n)$ grows extremely rapidly. 
For example, $Ac(2,2)=7$, $Ac(3,3)=61$ and $Ac(4,4)=2^{2^{2^{65536}}}-3$. Because of this explosive growth, the Ackermann function serves as a useful benchmark for describing ``slowly growing'' functions defined via its inverse, which is defined as follows:  
\begin{definition}
The \emph{inverse Ackermann function}, denoted by $\texttt{a}(n)$, is defined as
\[
\texttt{a}(n) = \min\{\,k \ge 1 : Ac(k,k) \ge n\,\}.
\]

Intuitively, $\texttt{a}(n)$ measures how many times the Ackermann function must be iterated to exceed a given number~$n$. Due to the fast growth of $Ac(n,n)$, its inverse $\texttt{a}(n)$ grows extremely slowly so that for all practical input sizes, $\texttt{a}(n) \le 4$.     
\end{definition}

Based on $\texttt{a}(n)$, we have the following bounds for $\lambda$ (see, e.g. \citealtec{pettie2015sharp}).
\[
\lambda_3(n)=\Theta(n\,\texttt{a}(n)),\qquad\lambda_4(n)=\Theta\left(n\,2^{\texttt{a}(n)}\right).
\]
Hence, for small $s$, $\lambda_s(n)$ remains nearly linear in $n$, with extremely slow superlinear growth. 

Davenport--Schinzel sequences play a central role in computational geometry, where $\lambda_s(n)$ bounds the combinatorial complexity of lower envelopes of $n$ univariate functions whose pairwise intersections are bounded by~$s$. For example, when each pair of functions intersects at most $s$ times, the number of intervals in the lower envelope is at most $\lambda_s(n)$.
\begin{lemma}[\citealtec{agarwal2000davenport}]\label{lemma:lower_envelope_DS} 
Let $f_1, f_2, \dots, f_n : \mathbb{R} \to \mathbb{R}$ be continuous univariate functions such that any two distinct functions $f_i$ and $f_j$ intersect at most $s$ times. Then, the combinatorial complexity of the lower envelope
\[
E(x) = \min_{1 \le i \le n} f_i(x)
\]
is bounded above by $\lambda_s(n)$, where $\lambda_s(n)$. Equivalently, as $x$ ranges over $\mathbb{R}$, the number of distinct intervals on which a single function equivalent to the envelope is at most~$\lambda_s(n)$.
\end{lemma}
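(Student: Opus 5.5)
This lemma is the classical Davenport--Schinzel bound for lower envelopes. The plan is to build a combinatorial sequence from the envelope and show it is a DS sequence of order $s$.

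\textbf{Step 1: define the envelope sequence.} Consider the lower envelope $E(x)=\min_i f_i(x)$ over $\mathbb{R}$. Partition $\mathbb{R}$ into maximal intervals on each of which a single index attains the minimum. Record the indices of the successive intervals from left to right, giving a sequence $U=(u_1,u_2,\dots,u_K)$.

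For this partition to be finite and well defined, we use the hypothesis that each pair of functions crosses at most $s$ times. The set where two distinct functions coincide is then finite. Hence the set of points where the minimizer is non-unique is finite, at most $s\binom n2$ points. Between consecutive such points the minimizer is unique. By continuity, a unique minimizer cannot change without passing through a tie, so the minimizer is constant on each open gap. Therefore $K$ is finite.

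After merging adjacent intervals carrying the same index, no two consecutive entries of $U$ are equal. This gives the first condition of \Cref{def:D_S_sequence}.

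\textbf{Step 2: the alternation condition.} Suppose, for contradiction, that $U$ contains an alternating subsequence $a,b,a,b,\dots$ of length $s+2$ with $a\neq b$. Pick points $x_1<x_2<\dots<x_{s+2}$ lying in the corresponding envelope intervals.

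At $x_k$ the index $a$ (respectively $b$) is the unique minimizer. So the sign of $g=f_a-f_b$ at $x_k$ is strictly negative where $a$ is attained and strictly positive where $b$ is attained. Consequently $g$ changes sign $s+1$ times along $x_1,\dots,x_{s+2}$.

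By the intermediate value theorem, $g$ has a zero in each of the $s+1$ disjoint open intervals $(x_k,x_{k+1})$. This gives at least $s+1$ intersections of $f_a$ and $f_b$, contradicting the hypothesis. Hence $U$ is a DS sequence of order $s$ over $[n]$, and $K\le\lambda_s(n)$ by definition.

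\textbf{Main obstacle.} The delicate point is making Step 1 rigorous. A tie between two functions need not be a transversal crossing; the curves can touch without crossing. Also, the interval endpoints must be chosen so that the sample points $x_k$ have a strict unique minimizer.

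This is handled by choosing each $x_k$ in the interior of its maximal interval while avoiding the finite tie set. That is always possible, because each interval is a nondegenerate open interval minus finitely many points.

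Touching points cause no difficulty. They can only split an interval whose index continues on both sides, and that split is undone by the merging step. So touching points do not add entries to $U$, and the counting of sign changes in Step 2 is unaffected.
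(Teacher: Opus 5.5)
Your proof is correct. The paper gives no argument of its own and simply cites Agarwal and Sharir, where the bound is proved this same way: the envelope sequence is shown to be a Davenport--Schinzel sequence of order $s$, because an $a,b$ alternation of length $s+2$ forces $s+1$ zeros of $f_a-f_b$ by the intermediate value theorem. Your convention of counting only positive-length pieces and merging across isolated tie points is the right one; for example, with $s=1$, $f_a\equiv 0$ and $f_b(x)=x^2$, recording $f_b$'s single contact point as a separate piece would give $a,b,a$ and exceed $\lambda_1(2)=2$.
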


This lemma is instrumental in establishing the practical implementability of our optimal matching policy via finite time slots (\Cref{prop:finite_slots}). By characterizing the combinatorial complexity of the lower envelope, this lemma guarantees that the maximization problem defining our optimal pricing schedule yields a trajectory with a finite, bounded number of switching points rather than arbitrary swings. Therefore, it bridges the gap between theoretical optimality and managerial feasibility.
\section{An Illustration of Spatiotemporal Matching}\label{sec:Fig_STMatching}
In \Cref{fig:STMatching}, we present an example of spatiotemporal matching with synthetic data. In this example, each demand type has its own time preference and temporal cost, and the transportation cost is modeled as the Euclidean distance.
\begin{figure}[p]
    \centering
    \caption{An Illustration of Spatiotemporal Matching.}\label{fig:STMatching}
    \label{fig:space_type_1}
    \subfigure[Piecewise-linear temporal costs $\ell_j(t)$.]{
\begin{minipage}[b]{0.45\textwidth}
    \centering
    \includegraphics[width=1\textwidth]{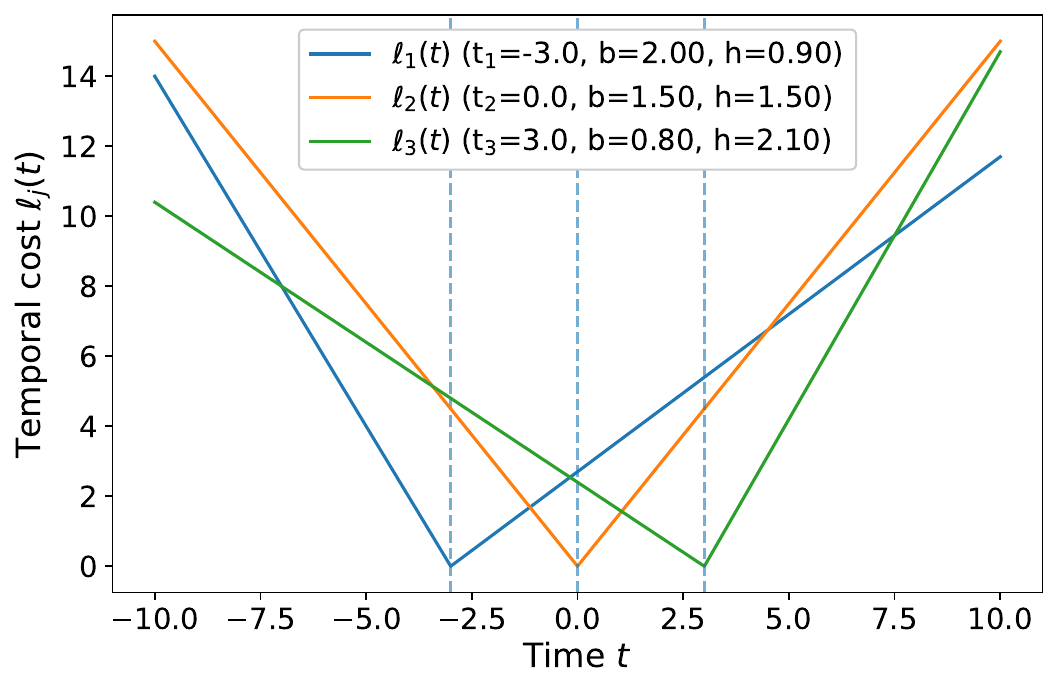}
    \end{minipage}
    }    \\
    \subfigure[Density of type-$1$ demand.]{
\begin{minipage}[b]{0.3\textwidth}
    \centering
    \includegraphics[width=1.05\textwidth]{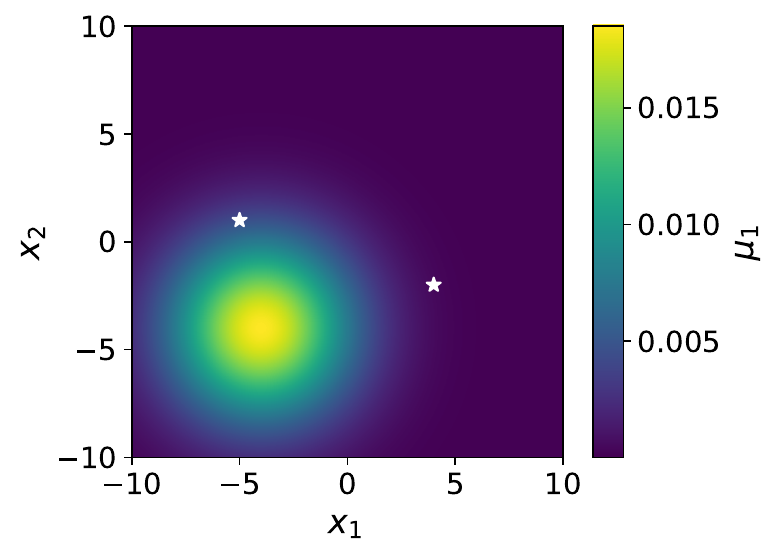}
    \end{minipage}
    }
    \subfigure[Density of type-$2$ demand.]{
\begin{minipage}[b]{0.3\textwidth}
    \centering
    \includegraphics[width=1.05\textwidth]{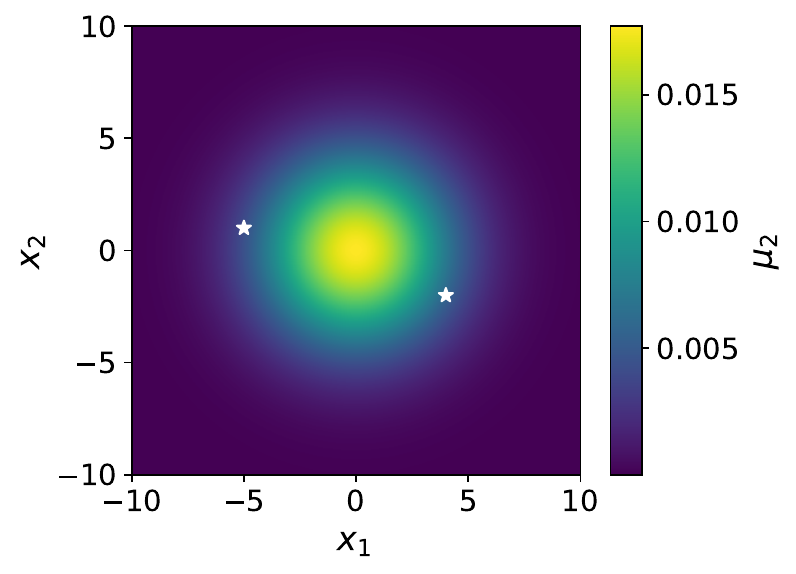}
    \end{minipage}
    }
    \subfigure[Density of type-$3$ demand.]{
\begin{minipage}[b]{0.3\textwidth}
    \centering
    \includegraphics[width=1.05\textwidth]{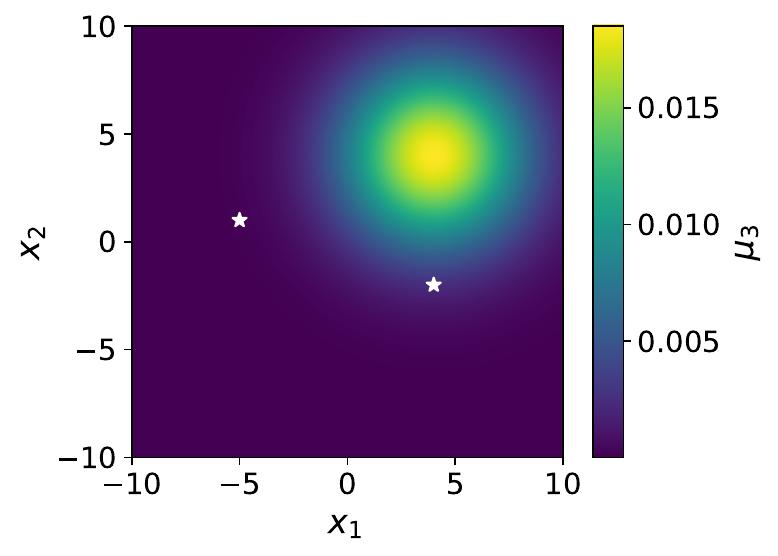}
    \end{minipage}
    }\\
    \subfigure[Space partition (type-$1$).]{
\begin{minipage}[b]{0.3\textwidth}
    \centering
    \includegraphics[width=1\textwidth]{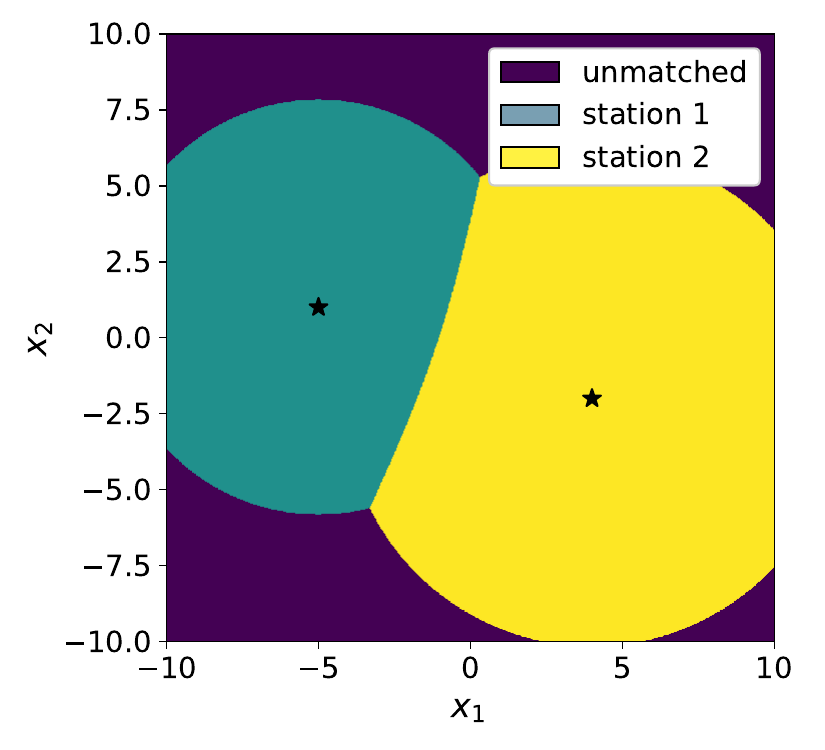}
    \end{minipage}
    }
    \subfigure[Space partition (type-$2$).]{
\begin{minipage}[b]{0.3\textwidth}
    \centering
    \includegraphics[width=1\textwidth]{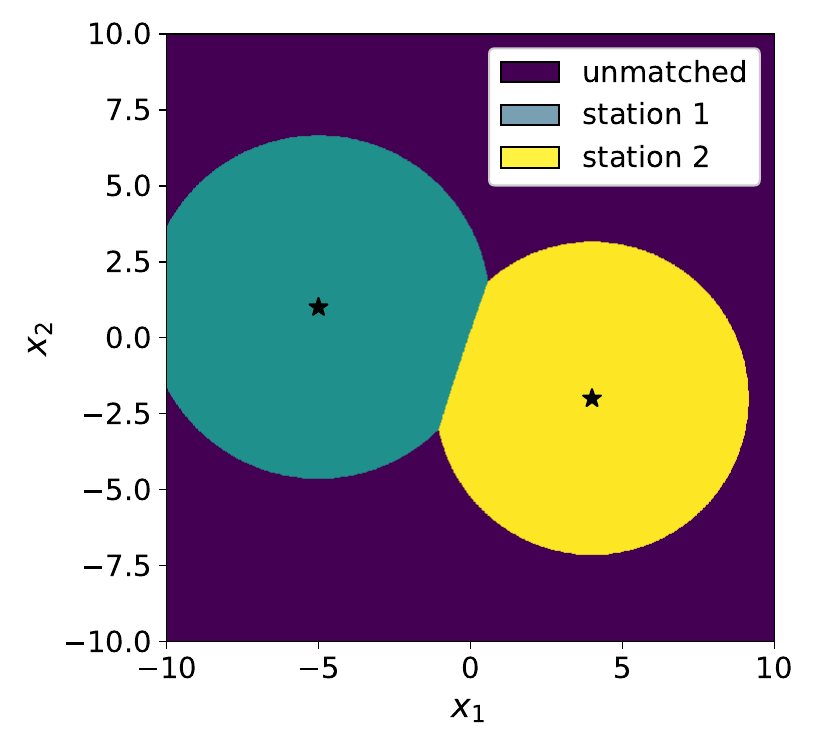}
    \end{minipage}
    }
    \subfigure[Space partition (type-$3$).]{
\begin{minipage}[b]{0.3\textwidth}
    \centering
    \includegraphics[width=1\textwidth]{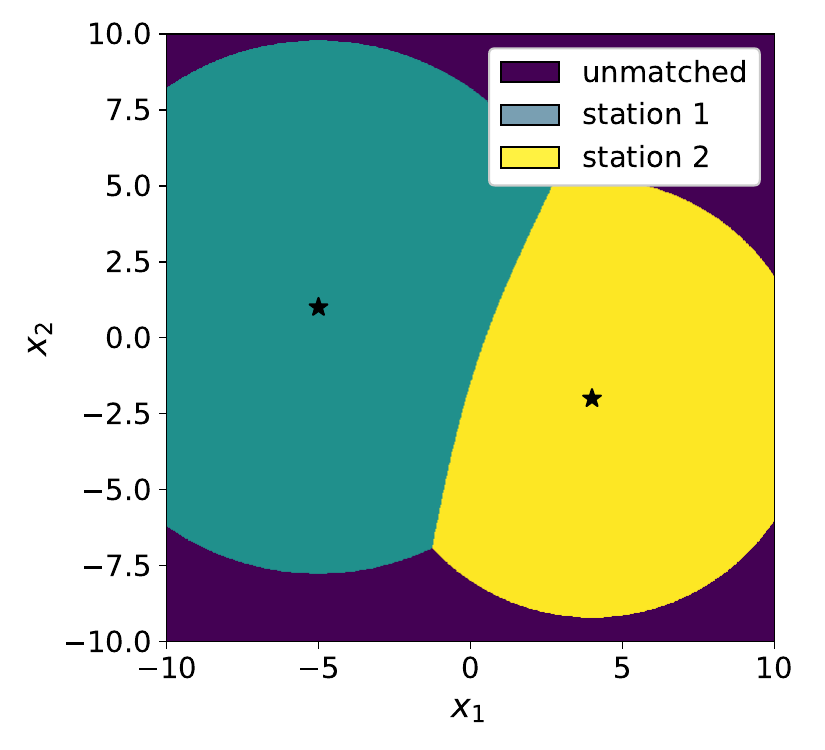}
    \end{minipage}
    }\\
    \subfigure[Time scheduling (station $1$).]{
\begin{minipage}[b]{0.45\textwidth}
    \centering
    \includegraphics[width=1\textwidth]{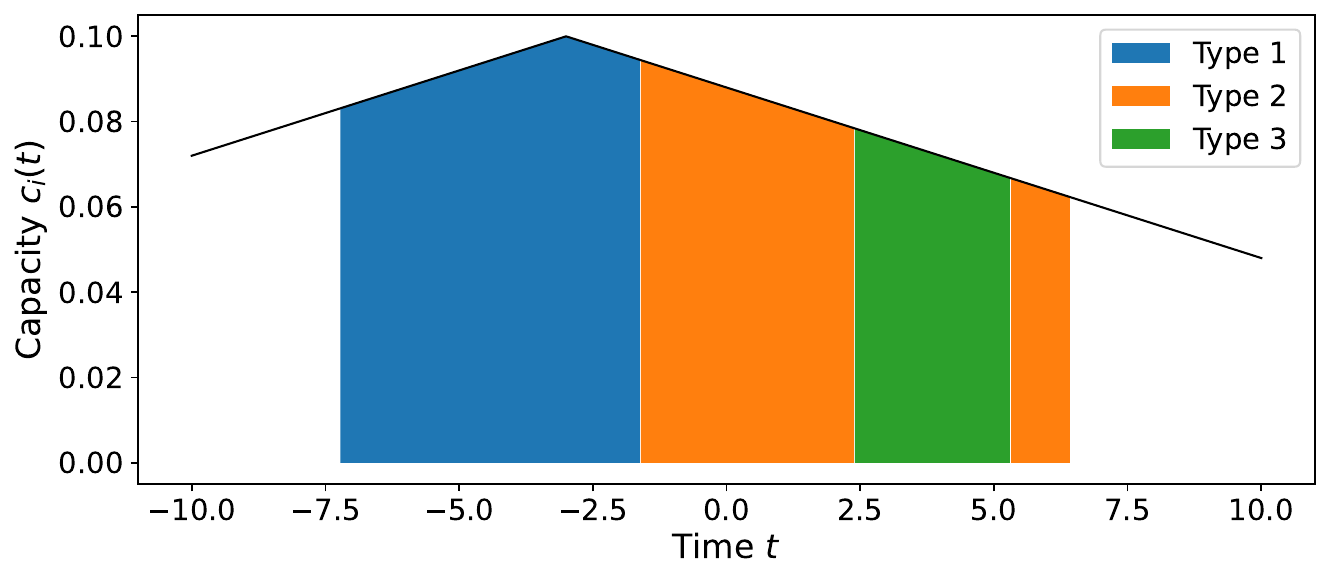}
    \end{minipage}
    }
    \subfigure[Time scheduling (station $2$).]{
\begin{minipage}[b]{0.45\textwidth}
    \centering
    \includegraphics[width=1\textwidth]{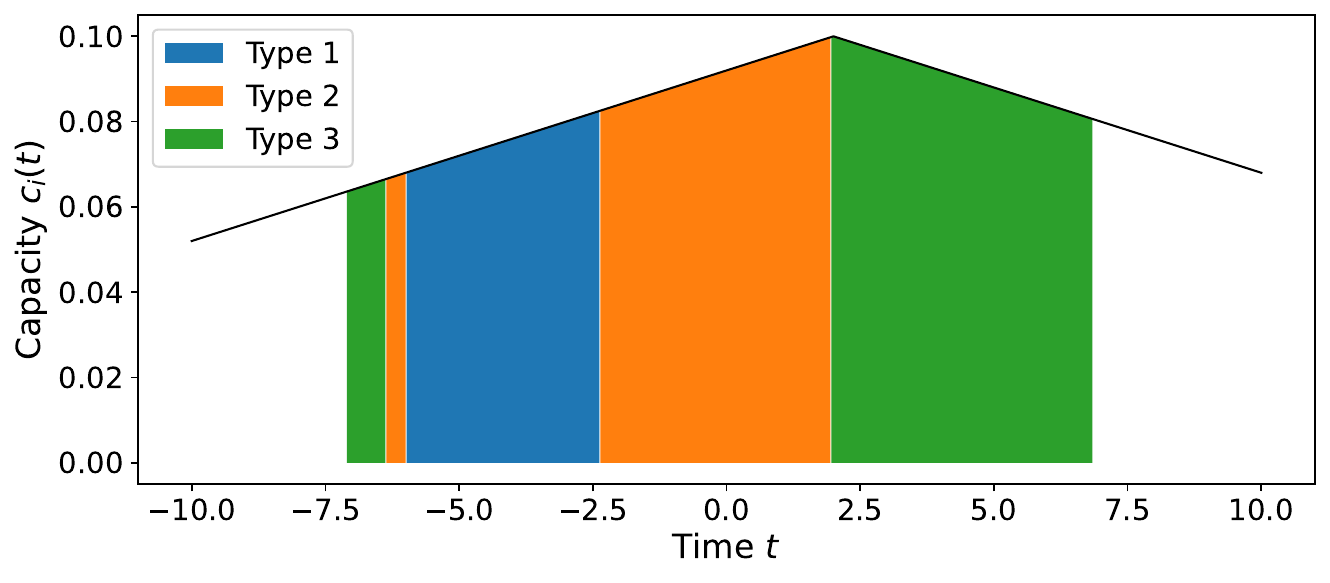}
    \end{minipage}
    }
\begin{minipage}{0.95\linewidth}
\footnotesize \emph{Note.} There are three demand types and two service stations, with Station 1 on the left and Station 2 on the right, both marked with ``$\bm\star$". Panel (a) shows the temporal cost associated with each demand type. Panels (b)–(d) illustrate the demand density functions $\mu_j(\bm{x})$ for $j = 1, 2, 3$. Panels (e)–(g) depict the spatial service regions $\mathcal{C}_{i,j}$. Finally, panels (h)–(i) present the capacity functions $c_i(t)$ and the corresponding temporal partitions $\mathcal{T}_{i,j}$ for Stations $i = 1, 2$.
\end{minipage}
\end{figure}
\section{The Inverse Matrix of $A$ in \Cref{prop:finite_linear}}\label{Sec:inverse_A}
A direct computation gives that:
$$A^{-1}_{ij} =
\begin{cases}
\displaystyle \frac{1}{s_1-s_2}, & i=j=1, \\
\displaystyle \frac{s_{i-1}-s_{i+1}}{(s_{i-1}-s_i)(s_i-s_{i+1})}, & i=j, i\ge 2, \\
\displaystyle -\frac{1}{s_i-s_{i+1}}, & j=i+1, \\
\displaystyle -\frac{1}{s_{i-1}-s_i}, & j=i-1, \\
0, & \text{otherwise}.
\end{cases} \quad i,j=1,\dots,m.
$$
\section{Capacity Configuration: Dispersion or Concentration}\label{sec:D_or_C}
In many service systems, planners choose between operating multiple stations or pooling capacity at a single hub. We study this choice when all demand is served, so the trade-off is clear: dispersion shortens travel times, whereas concentration reduces efficiency imbalances across locations. We compare the two designs and provide simple conditions under which each wins, along with a Hotelling-style threshold that serves as a practical rule of thumb for consolidation.

\begin{example}[Complete Service with Two Stations]\label{Example:Complete Service with Two Stations} 
Suppose \Cref{assm:long-horizon-linear-cost} holds and there is only one demand type (with sensitivity $s$). There are two stations ($n=2$), and $r$ is sufficiently large so that all demand is served. There are two possible system designs: \emph{dispersion} with capacity $c_1$ and $c_2$ at stations 1 and 2, respectively, and \emph{concentration}, with capacity $c_1+c_2$ and $0$ at stations 1 and 2, respectively.
\end{example}

\begin{proposition}\label{prop:dispersion_concentration}
Let $\eta^*_1$ and $\eta^*_2$ be the optimal solution in \cref{prop:finite_linear} for dispersion case and assume $|\delta(\bm{x},\bm{y}_1)-\delta(\bm{x},\bm{y}_2)|\leq D$. Then we have
\begin{itemize}
    \item If $\eta^*_2\geq \max\{\eta^*_1,\frac{(\eta^*_2-\eta^*_1)c_1}{2(c_1+c_2)}\}$, dispersion is better than concentration in terms of social welfare.
    \item If $\eta^*_2\leq \frac{(\eta^*_1-\eta^*_2)^2c_1}{2D(c_1+c_2)}$, concentration is better than dispersion in terms of social welfare.
\end{itemize}   
\end{proposition}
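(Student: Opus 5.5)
The plan is to compare the two designs through their optimal welfare values. Throughout I use the single-type version of \eqref{eq:finite_linear_Dual}, where $\bm A=s$ and I write $\kappa=\beta s$. I take $r$ large enough that the positive part in the spatial term is never active. By \Cref{prop:general_Dual} and \Cref{Thm:finite_type}, the dispersion welfare is
\[
W_D=\min_{\eta_1,\eta_2}\ \frac{c_1\eta_1^2+c_2\eta_2^2}{\kappa}+\int_{\mathcal X}\max_{i}\{r-\delta(\bm x,\bm y_i)-\eta_i\}\mu(\bm x)\mathrm d\bm x .
\]
For concentration, station~2 has zero capacity, so its quadratic term vanishes and sending $\eta_2\to+\infty$ removes it from the max. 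This gives
\[
W_C=\min_\eta \frac{(c_1+c_2)\eta^2}{\kappa}+\int_{\mathcal X}(r-\delta(\bm x,\bm y_1)-\eta)\mu(\bm x)\mathrm d\bm x .
\]
Because each welfare is a minimum, evaluating one objective at a trial point gives an upper bound on that welfare. A primal-feasible plan gives a lower bound.

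\textbf{Step 1: interpret the optimal dual variables.} Let $\mathcal C_i=\mathcal C^{+\infty}_i(\bm\eta^*)$ be the dispersion cells and $M_i=\int_{\mathcal C_i}\mu$ their masses. The first-order condition \eqref{eq:first_order_condition} reads $M_i=2c_i\eta_i^*/\kappa$. Under \Cref{assm:long-horizon-linear-cost}, serving mass $M$ at constant rate $c$ optimally costs $\kappa M^2/(4c)$ in temporal terms, since it fills a window around $0$ split in ratio $h:b$. Hence the dispersion temporal cost is $\sum_i c_i(\eta_i^*)^2/\kappa$.

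\textbf{Step 2: second bullet, concentration better.} I would build a primal-feasible concentration plan from the dispersion plan. It keeps the same served set and reroutes $\mathcal C_2$ to station~1, scheduled optimally there.
\begin{itemize}
\item The extra spatial cost is at most $DM_2=2Dc_2\eta_2^*/\kappa$.
\item The temporal saving is $\frac{\kappa}{4}\big[\frac{M_1^2}{c_1}+\frac{M_2^2}{c_2}-\frac{(M_1+M_2)^2}{c_1+c_2}\big]=\frac{\kappa(M_1c_2-M_2c_1)^2}{4c_1c_2(c_1+c_2)}$.
\item Substituting $M_i=2c_i\eta_i^*/\kappa$, the saving becomes $\frac{c_1c_2(\eta_1^*-\eta_2^*)^2}{\kappa(c_1+c_2)}$.
\end{itemize}
The saving is at least the extra spatial cost exactly when $\eta_2^*\le \frac{(\eta_1^*-\eta_2^*)^2c_1}{2D(c_1+c_2)}$, which is the stated condition. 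In that case $W_C\ge$ the welfare of this plan $\ge W_D$.

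\textbf{Step 3: first bullet, dispersion better.} I would plug the trial point $\eta=\eta_1^*$ into the concentration objective and subtract the dispersion objective at $\bm\eta^*$. The quadratic parts contribute $c_2((\eta_1^*)^2-(\eta_2^*)^2)/\kappa$. On $\mathcal C_1$ the integrands agree. On $\mathcal C_2$ the integrand difference is $(\delta_2+\eta_2^*)-(\delta_1+\eta_1^*)\le 0$. Hence $W_C-W_D\le c_2((\eta_1^*)^2-(\eta_2^*)^2)/\kappa$, which is nonpositive once $\eta_2^*\ge|\eta_1^*|$. Nonnegativity of $\eta_i^*=\kappa M_i/(2c_i)$ reduces this to $\eta_2^*\ge\eta_1^*$.

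\textbf{Main obstacle.} The hard step is matching the exact form of the first bullet, in particular the additional term $\frac{(\eta_2^*-\eta_1^*)c_1}{2(c_1+c_2)}$ inside the max. I expect it to come from a better trial point, such as the concentration optimizer $\eta_C=\kappa(M_1+M_2)/(2(c_1+c_2))$, which is a convex combination of $\eta_1^*$ and $\eta_2^*$. It may also come from bounding the $\mathcal C_2$ integral more carefully using the mass identity. I would first redo Step 3 at $\eta_C$ and expand the quadratic gap in $\eta_2^*-\eta_1^*$, checking that the resulting condition is implied by the stated max.
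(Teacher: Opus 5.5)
Your proof is correct and complete. The ``main obstacle'' you flag is not actually an obstacle.

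\textbf{Second bullet.} Step~2 is the paper's own argument: reroute the station-2 cell to station~1, bound the added spatial cost by $DM_2$, and compare with the pooling saving. Your bookkeeping with general $\kappa=s\beta$ also confirms that $\kappa$ cancels from both stated conditions. The paper's proof implicitly normalizes so that $M_i=c_i\eta_i^*$.

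\textbf{First bullet.} The hypothesis $\eta_2^*\ge\max\{\eta_1^*,\frac{(\eta_2^*-\eta_1^*)c_1}{2(c_1+c_2)}\}$ already gives $\eta_2^*\ge\eta_1^*$. That is all Step~3 uses, so you are done: proving the conclusion under a weaker hypothesis proves it under the stated one. In fact the second entry of the max is vacuous, because $\eta_1^*,\eta_2^*\ge 0$ give $\frac{(\eta_2^*-\eta_1^*)c_1}{2(c_1+c_2)}\le \eta_2^*/2$.

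\textbf{How your route differs.} The paper proves the first bullet on the primal side. For the rerouted plan it writes
$W_D-W_C=\int_{\mathcal X_2}\bigl(\delta(\bm x,\bm y_1)-\delta(\bm x,\bm y_2)\bigr)\mu(\bm x)\,\mathrm d\bm x-\frac{c_1c_2(\eta_1^*-\eta_2^*)^2}{\kappa(c_1+c_2)}$.
It then lower-bounds the integral by $(\eta_2^*-\eta_1^*)M_2$ using the cell inequality on $\mathcal X_2$, and divides by $\eta_2^*-\eta_1^*$. That division is where the max-form condition comes from.

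Your trial point $\eta=\eta_1^*$ in the concentration dual uses the same cell inequality but skips the pooled-scheduling computation. Because $\eta_1^*$ is not the concentration optimizer, your objective gap exceeds the exact $W_C-W_D$ by $\frac{c_2^2(\eta_1^*-\eta_2^*)^2}{\kappa(c_1+c_2)}$. It still lands directly on the condition $\eta_2^*\ge\eta_1^*$. Given $\eta_i^*\ge 0$, that is exactly what the paper's condition reduces to as well.

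Re-running Step~3 at $\eta_C$, as you planned, would only reproduce the paper's exact identity, so it is unnecessary.

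\textbf{One small tidy-up.} Justify the concentration dual as the single-station instance of \eqref{eq:finite_linear_Dual} with capacity $c_1+c_2$. A zero-capacity station is not literally covered by \Cref{assm:long-horizon-linear-cost}, which requires $c_i>0$ for every station.
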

\begin{proof}{Proof.}
When there is only one demand type, the problem \eqref{eq:finite_linear_Dual} can be simplifies as
\begin{equation*}
\min_{\eta_{1},\eta_{2}\in\mathbb{R}}\quad
\frac{ s}{\beta} (c_1\eta_{1}^2+c_2\eta_{2}^2)+
\int_{\mathcal{X}}\left(\max_{i\in[2]}\left\{r-\delta(\bm{x},\bm{y}_i)-\eta_{i}\right\}\right)^+\mu(\bm{x})\mathrm{d}\bm{x}.
\end{equation*}
Since $r$ is large enough, we know that the term in the integral is always positive, and we can remove $r$ from the problem, that is
\begin{equation}\label{eq:min_eta_12}
\min_{\eta_{1},\eta_{2}\in\mathbb{R}}\quad
\frac{ s}{\beta} (c_1\eta_{1}^2+c_2\eta_{2}^2) -
\int_{\mathcal{X}}\max_{i\in[2]}\left\{\delta(\bm{x},\bm{y}_i)+\eta_{i}\right\}\mu(\bm{x})\mathrm{d}\bm{x}.
\end{equation}

According to the first order condition, the amount of demand that assigned to the station 1 and 2 are $c_1\eta^*_1$ and $c_2\eta^*_2$ respectively and the total amount of demand is then $c_1\eta^*_1 + c_2\eta^*_2$. By concentrating all processing capacity at station 1, the temporal cost decreases
\begin{equation*}
\frac{c_1^2\eta^{*2}_1}{2c_1}+ \frac{c_2^2\eta^{*2}_2}{2c_2} -\frac{(c_1\eta^*_1 + c_2\eta^*_2)^2}{2(c_1+c_2)} = \frac{c_1 c_2 (\eta_1^*-\eta_2^*)^2}{2(c_1+c_2)}.
\end{equation*}

Now switch to spatial cost. Let $\mathcal{X}_1$ be the area where $\delta(\bm{x},\bm{y}_1)+\eta^*_{1}\leq \delta(\bm{x},\bm{y}_2)+\eta^*_{2}$. similarly, define $\mathcal{X}_2$ analogously. $\mathcal{X}_1$ and $\mathcal{X}_2$ represent the region that is covered by station 1 or 2. All demand in $\mathcal{X}_2$ has now been reassigned to station 1 after concentrating. The increment of spatial cost is at least $\int_{\mathcal{X}_2}(\delta(\bm{x},\bm{y}_1)-\delta(\bm{x},\bm{y}_2))\mu(\bm x)\mathrm{d}\bm x \geq (\eta^*_2-\eta^*_1)c_2\eta^*_{2}$. Then, separation yields better social welfare if

\begin{equation*}
(\eta^*_2-\eta^*_1)c_2\eta^*_{2}\geq \frac{c_1 c_2 (\eta_2^*-\eta_1^*)^2}{2(c_1+c_2)} \Leftarrow \eta^*_2\geq \max\left\{\eta^*_1,\frac{(\eta^*_2-\eta^*_1)c_1}{2(c_1+c_2)}\right\}.
\end{equation*}

On the other hand, the increase of spatial cost is at most $D c_2\eta^*_2$, that is, the upper bound of the spatial cost increment of reassigning all demand to station 1. Then, concentration yields better social welfare if

\begin{equation*}
D c_2\eta^*_2 \leq \frac{c_1 c_2 (\eta_1^*-\eta_2^*)^2}{2(c_1+c_2)} \Leftarrow \eta^*_2\leq \frac{(\eta^*_1-\eta^*_2)^2c_1}{2D(c_1+c_2)}.
\end{equation*}
Therefore, we proved the two statements in the proposition.\hfill\qedsymbol
\end{proof}

The condition $|\delta(\bm{x},\bm{y}_1)-\delta(\bm{x},\bm{y}_1)|\leq D$ is often guaranteed by the triangle inequality in practical settings. For instance, if the transportation cost is Lipschitz continuous with respect to distance, then for any location $\bm{x}$, the absolute difference in transportation costs between stations 1 and 2 cannot exceed a fraction of the distance between these two stations. As previously discussed, $c_2 \eta^*_2$ is proportional to the volume of demand served at station 2, while $\eta^*_2$ is proportional to the operating duration of station 2. When $\eta^*_1 = \eta^*_2$, \Cref{Prop:Opt_c} shows that social welfare is maximized by evenly dispersing capacity, in agreement with the first statement of \Cref{Example:Complete Service with Two Stations}. Conversely, if $\eta^*_1 \neq \eta^*_2$, this example reveals that the classic ``dispersion versus concentration" decision is governed by the interplay between spatial and temporal costs. When station 2's demand volume is robust, its role in reducing transportation costs justifies its operation. However, when its volume falls below a threshold point, it imposes a temporal cost on the entire system that outweighs its spatial benefits. In such cases, consolidating capacity into a single, highly efficient hub emerges as the better strategy for maximizing social welfare.

\begin{example}[Dispersion or Concentration]\label{Example:D_or_C}
Consider the setting described in \Cref{Example:Hotelling_Homogeneous}, where the reward is sufficiently large to ensure that all demand must be served. Suppose the principal encounters a ``concentration'' opportunity, wherein station $2$ can be canceled, and its capacity reallocated to station $1$, along with an additional bonus capacity $c_0$. As a result, the efficiencies of stations $1$ and $2$ become $c_1 + c_2 + c_0$ and $0$, respectively. 
\end{example}

\begin{proposition}\label{prop:Hotel_D_or_C}
In \Cref{Example:D_or_C}, the preferable scheme that yields higher social welfare under optimal matching is determined as follows:
\begin{itemize}
    \item If $c_1 c_2\geq w^2$, dispersion always better than concentration irrelative of $c_0$.
    \item If $c_1 c_2< w^2$, concentration is better than dispersion if and only if 
    $$
    c_0 > \frac{c_1+3w}{w^2-c_1c_2}c_1c_2 .
    $$
\end{itemize}    
\end{proposition}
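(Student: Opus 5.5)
The plan is to compute the optimal social welfare of both designs in closed form and compare them. For the dispersion design, all demand is served, so I take the complete-service welfare stated after \Cref{prop:Hotelling_Homogeneous} directly:
\[
W_{\mathrm{D}} = r - \frac{(c_1+w)(c_2+w)}{2\left(2c_1c_2+(c_1+c_2)w\right)}.
\]
For the concentration design there is a single station at $0$ with capacity $C \coloneqq c_1+c_2+c_0$. Serving everyone is optimal for large $r$, so the welfare is $r$ minus the spatial and temporal costs:
\begin{itemize}
\item the spatial cost is $\int_0^1 x\,\mathrm{d}x = 1/2$, using the linear transportation cost of the Hotelling example;
\item the temporal cost comes from serving unit mass at rate $C$ over $[0,1/C]$, which gives $\int_0^{1/C} wCt\,\mathrm{d}t = w/(2C)$. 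Since the sensitivity is homogeneous, the order of service within the station is irrelevant.
\end{itemize}
Hence $W_{\mathrm{C}} = r - \tfrac12 - \tfrac{w}{2C}$. As a consistency check, letting $c_2\to 0$ in $W_{\mathrm{D}}$ gives $r - \tfrac12 - \tfrac{w}{2c_1}$, which agrees with $W_{\mathrm{C}}$ at $C=c_1$. This confirms that the cost normalization matches the one used in \Cref{prop:Hotelling_Homogeneous}.

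Next I reduce the comparison to a sign condition. Write $\Delta \coloneqq 2c_1c_2+(c_1+c_2)w$. A direct expansion gives
\[
\frac{(c_1+w)(c_2+w)}{2\Delta} - \frac12 = \frac{w^2 - c_1c_2}{2\Delta},
\]
so concentration is strictly better if and only if $\frac{w}{C} < \frac{w^2-c_1c_2}{\Delta}$. The left side is positive for every $c_0\ge 0$. If $c_1c_2\ge w^2$, the right side is nonpositive, so dispersion is always (weakly) better regardless of $c_0$; this is the first bullet. If $c_1c_2<w^2$, the condition is equivalent to $C > w\Delta/(w^2-c_1c_2)$. Subtracting $c_1+c_2$ from both sides gives the threshold on $c_0$.

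The main obstacle is the final algebra. My own simplification gives
\[
\frac{w\Delta}{w^2-c_1c_2} - (c_1+c_2) = \frac{c_1c_2\,(c_1+c_2+2w)}{w^2-c_1c_2}.
\]
This does not match the stated $\frac{(c_1+3w)c_1c_2}{w^2-c_1c_2}$ unless the costs are normalized differently, for example if the cost at the removed station's end of the line, or the one-sided temporal cost, enters differently from what I assumed. Before finalizing, I would therefore recheck three things against the proof of \Cref{prop:Hotelling_Homogeneous}:
\begin{itemize}
\item the exact temporal cost convention, i.e.\ whether $\ell=wt$ or the two-piece form with $\beta$ from \Cref{assm:long-horizon-linear-cost};
\item the resulting dispersion cost formula;
\item whether the concentrated station is at $0$ or its location is also re-optimized.
\end{itemize}
The monotonicity structure of the argument is unaffected by this check: $W_{\mathrm{C}}$ is increasing in $c_0$, and the sign of $w^2-c_1c_2$ decides whether a finite threshold exists. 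Only the constant in the threshold depends on getting the normalization right.
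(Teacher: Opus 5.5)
This is correct and is the paper's own argument: you compare the same closed-form costs, $\frac{(c_1+w)(c_2+w)}{2(2c_1c_2+(c_1+c_2)w)}$ for dispersion and $\frac12+\frac{w}{2(c_1+c_2+c_0)}$ for concentration, which reduces to the sign of $c_1c_2(c_1+c_2+c_0)+2wc_1c_2-w^2c_0$. The paper's proof ends with exactly your threshold $c_0>\frac{c_1c_2(c_1+c_2)+2wc_1c_2}{w^2-c_1c_2}$, so the factor $(c_1+3w)$ in the displayed statement is a typo; your conventions ($\ell=wt$, linear distance, hub at $0$) match the paper's, so the rechecks you list are unnecessary, and your inequality in fact gives strict (not just weak) dominance of dispersion when $c_1c_2\ge w^2$.
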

\begin{proof}{Proof.}
Since $r$ is large enough such that all demand can not be excluded, the social welfare maximization is equivalent to cost (spatial and temporal) minimization. By result in \Cref{Example:Hotelling_Homogeneous}, the total cost of dispersion is
\begin{equation*}
\frac{(c_2+w)(c_1+w)}{2(2 c_1 c_2 + (c_1 + c_2) w)},
\end{equation*}
and the total cost of concentration is
\begin{equation*}
\frac{1}{2} + \frac{w}{2(c_1+c_2+c_0)},
\end{equation*}
where the first term $1/2$ is the transportation cost and the second term is the waiting cost.

A concentration having a smaller cost than dispersion is equivalent to 
\begin{equation}\label{ineq:DorC}
\frac{1}{2} + \frac{w}{2(c_1+c_2+c_0)}-\frac{(c_2+w)(c_1+w)}{2(2 c_1 c_2 + (c_1 + c_2) w)} = \frac{
    c_1 c_2 (c_1 + c_2 + c_0) + 2 w c_1 c_2 - w^2 c_0}{2 (c_1 + c_2 + c_0) \left( 2 c_1 c_2 + (c_1 + c_2) w \right)}<0.
\end{equation}

If $c_1c_2\geq w^2 $, we have $ c_1 c_2 (c_1 + c_2 + c_0) + 2 w c_1 c_2 - w^2 c_0 \geq  4w^3 >0$, which means the inequality \eqref{ineq:DorC} can not hold, thus dispersion is better.

If $c_1c_2< w^2 $, since the denominator of the right-hand side in \eqref{ineq:DorC} is strictly positive. \eqref{ineq:DorC} is equivalent to  $ c_1 c_2 (c_1 + c_2 + c_0) + 2 w c_1 c_2 - w^2 c_0 <0$ which can be further simplified as 
\begin{equation*}
c_0 > \frac{c_1 c_2 (c_1 + c_2) + 2 w c_1 c_2}{w^2 - c_1 c_2}.
\end{equation*}
\hfill\qedsymbol
\end{proof}
This example highlights a trade-off in service system design: whether to maintain multiple dispersed stations or to consolidate capacity into a single location. In this example, maintaining both stations is always preferred when the product of the initial efficiencies is sufficiently high relative to the temporal cost. However, when efficiencies are more limited, consolidating capacity into a single station is worthwhile only if the additional capacity gained from the concentration effect exceeds a specific threshold. 
\section{Omitted Proofs}

\subsection{Proof of \Cref{prop:general_Dual}}
\begin{proof}{Proof.}
Denote $v(\bm{x},\bm{y}_i,t,\alpha)= r-\delta(\bm{x},\bm{y}_i) - \ell(t,\alpha)$, be the matching value. 
We extend the set $\mathcal{X}$, $\mathcal{A}$ and $\mathcal{T}$ by attaching an isolated dummy point $\hat{\infty}$ respectively. That is: $\hat{\mathcal{X}} = \mathcal{X}\cup\{\hat{\infty}\}$, $\hat{\mathcal{A}} = \mathcal{A}\cup\{\hat{\infty}\}$, and  $\hat{\mathcal{T}} = \mathcal{T}\cup\{\hat{\infty}\}$. For the demand side, assign additional point mass on dummy demand $(\hat{\infty},\hat{\infty})$ with density $\sum_{i=1}^{n}\int_{\mathcal{T}}c_i(t)\mathrm{d}t$. on supply side, assign additional processing capability $\int_{\mathcal{X}\times\mathcal{A}}\mu(\bm{x},\alpha)\mathrm{d}t/n$ on dummy time $(\hat{\infty})$ for each station $i$. We define 

\begin{equation*}
\begin{aligned}
&\hat{v}(\hat{\infty},\bm{y}_i,t,\hat{\infty})= 0\quad\forall i\in[n], t\in\hat{\mathcal{T}},\\
&\hat{v}(\bm{x},\bm{y}_i,\hat{\infty},\alpha) = 0 \quad\forall i\in[n], (\bm{x},\alpha)\in\hat{\mathcal{X}}\times\hat{\mathcal{A}}.
\end{aligned}
\end{equation*}
That is, any matching involving dummy demand or dummy time has an exact net value of $0$, reflecting the case of no matching. Therefore (by construction), the standard complete optimal transportation problem with extended space and parameters gives exactly the same objective of problem \eqref{eq:STM} and hence the strong duality holds due to Monge–Kantorovich duality theorem (see, e.g., \citealtec{villani2008optimal}). Therefore, it suffices to show that its optimal dual objective coincides with \eqref{eq:STM_D}.

The dual problem of the extended primal problem is presented below.
\begin{equation}\label{eq:P_General_Extend_Dual}
\begin{aligned}
\inf_{\hat{\phi},\hat{\varphi}_i}&\quad
\int_{\mathcal{X}\times\mathcal{A}} \hat{\phi}(\bm{x},\alpha)\mu(\bm{x},\alpha)\mathrm{d}(\bm{x},\alpha) + \hat{\phi}(\hat{\infty},\hat{\infty})\sum_{i=1}^{n}\int_{\mathcal{T}}c_i(t)\mathrm{d}t \\
&\quad + \sum_{i=1}^{n}\int_{\mathcal{T}} \hat{\varphi}_i(t)c_i(t) \mathrm{d}t + \frac{\int_{\mathcal{X}\times\mathcal{A}}\mu(\bm{x},\alpha)\mathrm{d}(\bm{x},\alpha)}{n}\sum_{i=1}^{n}\hat{\varphi}_i(\hat{\infty})\\
\text{s.t.}&\quad 
\hat{\phi}(\bm{x},\alpha) + \hat{\varphi}_i(t) \geq \hat{v}(\bm{x},\bm{y}_i,t,\alpha),\quad\forall (\bm{x},\alpha,t)\in\hat{\mathcal{X}}\times\hat{\mathcal{A}}\times\hat{\mathcal{T}}. \\
\end{aligned}
\end{equation}

Since any translation $\hat{\phi} + C$ and $\hat{\varphi}_i - C$ is equivalent to problem \eqref{eq:P_General_Extend_Dual}, it is safe to assume $\hat{\phi}(\hat{\infty},\hat{\infty})=0$ in the remainder of the proof. Then the constraint for $\hat{\varphi}_i(\hat{\infty})$ is equivalent to
$$
\hat{\varphi}_i(\hat{\infty}) \geq \sup_{(\bm{x},\alpha)\in\hat{\mathcal{X}}\times\hat{\mathcal{A}}} \hat{v}(\bm{x},\bm{y}_i,\hat{\infty},\alpha)-\hat{\phi}(\bm{x},\alpha) =-\inf_{(\bm{x},\alpha)\in\hat{\mathcal{X}}\times\hat{\mathcal{A}}}\hat{\phi}(\bm{x},\alpha)\coloneqq -\hat{\phi}_{\min}.
$$

So all $\hat{\varphi}_i(\infty) $ will take value $-\hat{\phi}_{\min}$ and the objective of \eqref{eq:P_General_Extend_Dual} becomes:
\begin{equation}\label{eq:Extend_Dual_obj}
\int_{\mathcal{X}\times\mathcal{A}}\left( \hat{\phi}(\bm{x},\alpha)-\hat{\phi}_{\min}\right)\mu(\bm{x},\alpha)\mathrm{d}(\bm{x},\alpha) + \sum_{i=1}^{n}\int_{\mathcal{T}} \hat{\varphi}_i(t)c_i(t) \mathrm{d}t.
\end{equation}

Moreover, since $\hat{\varphi}_i(t)\geq \hat{v}(\hat{\infty},\bm{y}_i,t,\hat{\infty})-\hat{\phi}(\hat{\infty},\hat{\infty})=0$, it follows that any feasible $\hat{\varphi}_i$ is nonnegative. Therefore, given any feasible solution to \eqref{eq:P_General_Extend_Dual}(again, with $\hat{\phi}(\hat{\infty},\hat{\infty})$ selected to be $0$), one can construct a feasible solution for \eqref{eq:STM_D} by letting $\phi(\bm{x},\alpha) = \hat{\phi}(\bm{x},\alpha)-\hat{\phi}_{\min}\geq 0$ for $(\bm{x},\alpha)\in\mathcal{X}\times\mathcal{A}$ and $\varphi_i(t)=\hat{\varphi}_i(t)$ for $t\in\mathcal{T}$. The objective value of \eqref{eq:STM_D} then matches \eqref{eq:Extend_Dual_obj}, thus the dual gap is $0$. 
\hfill\qedsymbol
\end{proof}

\subsection{Proof of \Cref{Thm:STB}}
The proof contains two parts. We first show the equivalence between \eqref{eq:STM} and \eqref{eq:STB}. Then we show that \eqref{eq:STB} is a convex optimization problem.
\paragraph{Part I: Equivalence.}

First, each feasible $\pi$ induces feasible $q$ by taking marginal densities:
\begin{equation*}
\begin{aligned}
q_i(\alpha)= \int_{\mathcal{X}\times\mathcal{T}}\pi_{i}(\bm{x},\alpha,t)\mathrm{d}(\bm{x},t).
\end{aligned}
\end{equation*}
Moreover, let $\nu$ and $\theta$ in the subproblem $OT_{\mathrm{time}}$ and $OT_{\mathrm{space}}$ be
\begin{equation*}\label{eq:construction_pi}
\nu_{i}(t,\alpha) = \int_{\mathcal{X}}\pi_{i}(\bm{x},\alpha,t)\mathrm{d}\bm{x},
\quad\theta_{i}(\bm{x},\alpha) = \int_{\mathcal{T}}\pi_{i}(\bm{x},\alpha,t)\mathrm{d}t.
\end{equation*}
It is not difficult to see that such $\nu$ and $\theta$ are feasible. Then we have 
\begin{equation*}
\begin{aligned}
OT_{\mathrm{time}}(\bm{c},\bm{q})\leq & \sum_{i=1}^{n}\int_{\mathcal{T}\times\mathcal{A}}\ell(t,\alpha)\nu_{i}(t,\alpha)\mathrm{d}(t,\alpha) =    \sum_{i=1}^{n}\int_{\mathcal{X}\times\mathcal{T}\times\mathcal{A}}\ell(t,\alpha)\pi_{i}(\bm{x},\alpha,t)\mathrm{d}(t,\alpha),\\
OT_{\mathrm{space}}(\bm{q},\mu)\leq & \sum_{i=1}^{n}\int_{\mathcal{X}\times\mathcal{A}} \delta(\bm{x},\bm{y}_i)\theta_{i}(\bm{x},\alpha)\mathrm{d}(\bm{x},\alpha) =    \sum_{i=1}^{n}\int_{\mathcal{X}\times\mathcal{T}\times\mathcal{A}}\delta(\bm{x},\bm{y}_i)\pi_{i}(\bm{x},\alpha,t)\mathrm{d}(t,\alpha).    
\end{aligned}
\end{equation*}
Therefore, we show that 
\begin{equation*}
\begin{aligned}
&\sum_{i=1}^{n}\int_{\mathcal{A}}rq_i(\alpha)\mathrm{d}\alpha-OT_{\mathrm{time}}(\bm{c},\bm{q})-OT_{\mathrm{space}}(\bm{q},\mu)\\ &\geq \sum_{i=1}^{n}\int_{\mathcal{X}\times\mathcal{A}\times\mathcal{T}}
\left(r- \delta(\bm{x },\bm{y}_i) - \ell(t,\alpha)\right)\pi_{i}(\bm{x},\alpha,t)\mathrm{d}(\bm{x},\alpha,t).
\end{aligned}
\end{equation*}
Therefore \eqref{eq:STB} serves as an upper bound of \eqref{eq:STM}.

Conversely, given feasible $\hat{\bm q}$ and corresponding $\hat{\bm \nu}$ and $\hat{\bm \theta}$ in subproblems, one can construct $\pi$ as:
\begin{equation}
\hat\pi_{i}(\bm{x},\alpha,t) = 
\begin{cases}
\frac{\hat\nu_{i}(t,\alpha) \hat\theta_{i}(\bm{x},\alpha)}{\hat q_i(\alpha)} & \hat q_i(\alpha)>0,\\
0& \hat q_i(\alpha)=0.
\end{cases}
\end{equation}
Whenever $\hat q_i(\alpha)>0$, such construction ensures that
\begin{equation*}
\begin{aligned}
\int_{\mathcal{T}}\hat\pi_{i}(\bm{x},\alpha,t) = &\frac{\int_{\mathcal{T}}\hat\nu_{i}(t,\alpha) \mathrm{d}t \hat\theta_{i}(\bm{x},\alpha)}{\hat q_i(\alpha)}  = \hat\theta_{i}(\bm{x},\alpha),  \\
\int_{\mathcal{X}}\hat\pi_{i}(\bm{x},\alpha,t)= &\frac{\hat\nu_{i}(t,\alpha)  \int_{\mathcal{X}}\hat\theta_{i}(\bm{x},\alpha)\mathrm{d}\bm{x}}{\hat q_i(\alpha)} =\hat\nu_{i}(t,\alpha).
\end{aligned}
\end{equation*}
where both the second equations are due to the equality constraints in $OT_{\mathrm{time}}$ and $OT_{\mathrm{space}}$. Therefore we have 
\begin{equation*}
\begin{aligned}
&\sum_{i=1}^{n}\int_{\mathcal{A}}r\hat q_i(\alpha)\mathrm{d}\alpha-Obj_{\mathrm{time}}(\hat{\nu})-Obj_{\mathrm{space}}(\hat{\theta}) \\
&= \sum_{i=1}^{n}\int_{\mathcal{X}\times\mathcal{A}\times\mathcal{T}}
\left(r- \delta(\bm{x },\bm{y}_i) - \ell(t,\alpha)\right)\hat\pi_{i}(\bm{x},\alpha,t)\mathrm{d}(\bm{x},\alpha,t).
\end{aligned}
\end{equation*}
where $Obj_{\mathrm{time}}(\hat{\nu})$ and $Obj_{\mathrm{space}}(\hat{\theta})$ denote the objective functions in $OT_{\mathrm{time}}$ and $OT_{\mathrm{space}}$ respectively, evaluated at $\nu=\hat{\nu}$ and $\theta=\hat{\theta}$. Therefore \eqref{eq:STB} also serves as a lower bound of \eqref{eq:STM}. This demonstrates that the optimal objective in \eqref{eq:STM} coincides with that in \eqref{eq:STB}, and that any optimal solution to \eqref{eq:STB} recovers the optimal solution to \eqref{eq:STM} by taking $\hat{\nu}$ and $\hat{\theta}$ attain their optimum. 

\paragraph{Part II: Concavity.}
Let $Obj(\bm{q})$ denote the objective function of \eqref{eq:STB} evaluated at $\bm q$. Given two feasible $\bm{q}^{(1)}$ and $\bm{q}^{(2)}$ and a linear combination $\bm{q}^{(\lambda)}=\lambda \bm{q}^{(1)}+ (1-\lambda)\bm{q}^{(2)}$. Let $\nu^{(k)}$ and $\theta^{(k)}$ be any feasible solutions in the subproblem $OT_{\mathrm{time}}$ and $OT_{\mathrm{space}}$ attaining (or approximating) $Obj(\bm{q}^{(k)})$ for $k=1,2$. Consider the convex combination $\nu = \lambda \nu^{(1)}+ (1-\lambda)\nu^{(2)}$ and $\theta = \lambda \theta^{(1)}+ (1-\lambda)\theta^{(2)}$. It is straightforward that this pair of  $\nu$ and $\theta$ is feasible for $\bm{q}^{(\lambda)}$ since all constraints are linear respect to $\nu$, $\theta$ and $\bm{q}$. Moreover, the linearity of $Obj_{\mathrm{time}}$ and $Obj_{\mathrm{space}}$ in terms of $\nu$ and $\theta$ implies that $Obj(\bm{q}^{(\lambda)})\geq \sum_{i=1}^{n}\int_{\mathcal{A}}r q^{(\lambda)}_i(\alpha)\mathrm{d}\alpha-Obj_{\mathrm{time}}(\nu)-Obj_{\mathrm{space}}(\theta)= \lambda Obj(\bm{q}^{(1)})+(1-\lambda)Obj(\bm{q}^{(2)})$. Therefore $Obj(\bm{q})$ is concave. \hfill\qedsymbol

\subsection{Proofs in \Cref{sec:opt_structure}}
\subsubsection{Proof of \Cref{lemma:null_space}.}
Define $\mathcal{B}_{i,j} = \left\{\bm x: \delta(\bm{x},\bm{y}_i) -\delta(\bm{x},\bm{y}_j) = \omega_j-\omega_i\right\}$ and $\mathcal{B}_{i,0} = \left\{\bm x: \delta(\bm{x},\bm{y}_i)+\omega_i = \omega_0 \right\}$. It is sufficient to show that all $\mathcal{B}_{i,j}$ with $i\in[n]$ and $j\in[n]\cup\{0\}$ have Lebesgue measure $0$.

The statement for $\mathcal{B}_{i,0}$ is straightforward. Since we assume $\delta(\bm{x},\bm{y}_i) = \hat{\delta}(\|\bm{x}-\bm{y}_i\|_2)$ and $ \hat{\delta}$ is strictly increasing, $\mathcal{B}_{i,0}$ is either empty or the sphere of a ball and hence has Lebesgue measure $0$. For  $\mathcal{B}_{i,j}$ with $i,j\in[n]$, without loss of generality, assume that $\bm y_i=\bm 0$ and $\|\bm y_i-\bm y_j\|_2=1$. Hence $\bm y_j$ is the unit vector from $\bm{y}_i$ to $\bm{y}_j$. Then we have $\bm{x} = \xi \bm{y}_j +\bm{z}$ with $\bm{z}\perp\bm{y}_j$ and $\xi\in\mathbb{R}$. For fixed $\bm{z}$, $\bm{x}\in\mathcal{B}_{i,j}$ if and only if 
\begin{equation}\label{eq:proof_lemma_1}
q_{\bm{z}}(\xi) = \hat\delta\left(\sqrt{\|\bm z\|_2^2 +\xi^2}\right) - \hat\delta\left(\sqrt{\|\bm z\|_2^2 +\left(\xi-1\right)^2}\right) = \omega_j-\omega_i.
\end{equation}

One can verify that $q_{\bm{z}}(\xi)$ is (strictly) monotone in $\xi$. To see this, Define $h_z(\xi):=\hat\delta\!\left(\sqrt{\|z\|_2^2+\xi^2}\right)$. Then
$q_z(\xi)=h_z(\xi)-h_z(\xi-1)$.
When $z\neq 0$, since $\hat\delta$ is convex and strictly increasing and $\sqrt{\|z\|_2^2+\xi^2}$ is convex, $h_z$ is strictly convex in $\xi$.
Hence $h_z'$ is strictly increasing, and therefore
$q_z'(\xi)=h_z'(\xi)-h_z'(\xi-1)>0$ for all $\xi$. Thus $q_z$ is strictly increasing in $\xi$. Hence, for each fixed $\bm{z}$ (a $d-1$ dimensional orthogonal slice), the equation has at most one solution $\xi$. Then by Fubini or Tonelli, $\mathcal{B}_{i,j}$ has Lebesgue measure $0$.\hfill\qedsymbol

\subsubsection{Proof of \Cref{prop:Laguerre_cell}.}
Consider the subproblem $OT_{\mathrm{space}}$ in \eqref{eq:STB} given $\bm q$. Note that this problem is separable in $\alpha$; one may solve it conditioning on each $\alpha$. To indicate that $\alpha$ is fixed, we use $\theta_i^\alpha$ to denote $\theta_i(\cdot,\alpha)$, that is 
\begin{equation}\label{eq:Thm2_A3}
\begin{aligned}
\inf_{\bm\theta^\alpha}\quad& \sum_{i=1}^{n}\int_{\mathcal{X}} \delta(\bm{x},\bm{y}_i)\theta_{i}^\alpha(\bm{x})\mathrm{d}\bm{x}\\
\text{s.t.}\quad& \sum_{i=1}^{n}\theta^\alpha_{i}(\bm{x})\leq \mu(\bm{x},\alpha),\quad\forall \bm{x},\\
&\int_{\mathcal{X}}\theta_{i}^\alpha(\bm{x})\mathrm{d}\bm{x} = q_i(\alpha), \quad\forall i.
\end{aligned}
\end{equation}

By introducing a dummy station (indexed as $0$) with capacity $\int_{\mathcal{X}}\mu(\bm{x}, \alpha) \mathrm{d}\bm{x} -\sum_{i=1}^{n}q_i(\alpha)$ and defining $\delta(\bm{x},\bm{y}_0)=0$ for all $\bm{x}\in\mathcal{X}$. The problem then can be further reformulated as a standard optimal transportation problem
\begin{equation}\label{eq:OT_space_dummy}
\begin{aligned}
\inf_{\theta^{\alpha}}&\quad  \sum_{i=1}^{n}
\int_{\mathcal{X}}\delta(\bm{x },\bm{y}_i)\theta^{\alpha}_{i}(\bm{x})\mathrm{d}\bm{x},\\
\text{s.t.} 
&\int_{\mathcal{X}}\theta^{\alpha}_i(\bm{x}) \mathrm{d}\bm{x} = q_i(\alpha),\quad \forall i\in[n],\\
&\int_{\mathcal{X}}\theta^{\alpha}_0(\bm{x}) \mathrm{d}\bm{x} = \int_{\mathcal{X}}\mu(\bm{x}, \alpha) \mathrm{d}\bm{x} -\sum_{i=1}^{n}q_i(\alpha),\\
&\sum_{i=0}^{n}\theta^{\alpha}_i(\bm{x}) = \mu(\bm{x}, \alpha),\\
\end{aligned}
\end{equation}
in which demand has marginal density $\mu(\cdot,\alpha)$ and each service station (including $0$) has Dirac mass $q_i(\alpha)$ for $i\in[n]$ and $\int_{\mathcal{X}}\mu(\bm{x}, \alpha) \mathrm{d}\bm{x} -\sum_{i=1}^{n}q_i(\alpha)$ for $i=0$.

The duality of this problem is 
\begin{equation*}
\begin{aligned}
\sup_{\phi,\varphi}&\quad 
\int_{\mathcal{X}}\phi(\bm{x })\mu(\bm{x},\alpha)\mathrm{d}\bm{x}+
\sum_{i=0}^{n}\varphi_i q_i(\alpha)\\
\text{s.t.} 
&\quad 
\phi(\bm{x})+\varphi_i\leq \delta(\bm{x},\bm{y}_i),\quad \forall i\in\{0\}\cup[n],\\
\end{aligned}
\end{equation*}
and strong duality holds by the Kantorovich duality theorem. In this dual formula, $\phi(\bm{x})$ will take value 
$\min\left\{\delta(\bm{x},\bm{y}_i)-\varphi_i, i =0,1,\hdots,n\right\}$ in order to maximize the objective. Two observations complete our argument. First, the optimality condition of optimal transport requires the equality $\phi(\bm{x})+\varphi_i=\delta(\bm{x},\bm{y}_i)$ to hold almost everywhere when optimal $\rho^{\alpha}_i$ is positive (c.f. Theorem 5.10 in \citealtec{villani2008optimal}). Second, the equality can only hold when $\bm{x}\in\mathcal{C}^{\varphi_0}_{i}(\bm{\varphi})$ and since the boundaries between different cells are negligible, we must have 
$\theta^{\alpha}_i(\bm{x}) = \mu(\bm{x},\alpha) \mathbbm{1}\left(\bm{x}\in\mathcal{C}^{\varphi_0}_{i}(\bm{\varphi})\right)$ in order to satisfying the marginal constraint. Integrating such result for all $\alpha$ proves \Cref{prop:Laguerre_cell}.\hfill\qedsymbol

\subsubsection{Proof of \Cref{prop:sensitive_priority}.}



Consider the formulation \eqref{eq:STB}. We first focus on the subproblem $OT_{\mathrm{time}}$. Note that, $OT_{\mathrm{time}}$ decouples across the index $i$, $\mathcal{T}^-$ and $\mathcal{T}^+$, that is:
\begin{equation*}
\begin{aligned}
OT_{\mathrm{time}}(\bm{c},\bm{q}) = \inf_{\bm\nu^+,\bm\nu^-,\bm{q}^+,\bm{q}^-}\quad& \sum_{i=1}^{n}\int_{\mathcal{T}^+\times\mathcal{A}}\ell(t,\alpha)\nu^+_{i}(t,\alpha)\mathrm{d}(t,\alpha)+\int_{\mathcal{T}^-\times\mathcal{A}}\ell(t,\alpha)\nu^-_{i}(t,\alpha)\mathrm{d}(t,\alpha),\\
\text{s.t.}\quad& \int_\mathcal{A}\nu^\pm_{i}(t,\alpha)\leq c_i(t),\quad\forall i\in[n],t\in\mathcal{T}^\pm,\\
&\int_{\mathcal{T}^\pm}\nu^\pm_{i}(t,\alpha)\mathrm{d}t = q_i^\pm(\alpha), \quad\forall i\in[n],\alpha\in\mathcal{A},\\
&q_i^+(\alpha)+q_i^-(\alpha) = q_i(\alpha),\quad\forall i\in[n],\alpha\in\mathcal{A}.
\end{aligned}    
\end{equation*}

We assume that $q_i^+(\alpha)$ and $q_i^-(\alpha)$ are given. Taking $i$ and $\mathcal{T}^+$ as an example, we may consider the following subproblem:
\begin{equation}\label{eq:Thm2_proof}
\begin{aligned}
\inf_{\bm\nu^+_i}&\quad  
\int_{\mathcal{A}\times\mathcal{T}^+}\ell(t,\alpha)\nu^+_{i}(\alpha,t)\mathrm{d}(\alpha,t) ,\\
\text{s.t.}&\quad 
\int_{\mathcal{T}^+}\nu^+_i(\alpha, t)\mathrm{d}t= q_i^+(\alpha), \quad \forall \alpha\in\mathcal{A},\\
&\quad 
\int_{\mathcal{A}}\nu^+_i(\alpha, t)\mathrm{d}\alpha\leq c_i(t), \quad \forall t\in\mathcal{T}^+.
\end{aligned}
\end{equation}

In problem \eqref{eq:Thm2_proof}, since $\ell$ is increasing with respect to $t$, which means that any capacity idle is not optimal. Let $t_{q_i^+}$ satisfying $\int_{[0,t_{q_i^+}]}c_i(t)\mathrm{d}t = \int_{\mathcal{A}}q_i^+(\alpha)\mathrm{d}\alpha$. We can then restrict time scheduling on the interval $[0,t_{q_i^+}]$. 

Moreover, the temporal cost function $\ell$ is exactly super-modular on $\mathcal{T}^+$. To see this, let $\alpha_1<\alpha_2$ and $t_1<t_2$. The relation between partial derivative of $\ell(t,\alpha_1)$ and $\ell(t,\alpha_2)$ requires that

\begin{equation*}
\ell(t_2,\alpha_2)-\ell(t_1,\alpha_2) =\int_{t_1}^{t_2} \frac{\partial \ell(t,\alpha_2)}{\partial t} \geq \int_{t_1}^{t_2} \frac{\partial \ell(t,\alpha_1)}{\partial t}  \geq \ell(t_2, \alpha_1)-\ell(t_1, \alpha_1).
\end{equation*}
which is exactly the definition of super-modularity. Given the super-modularity of $\ell$, it is known that a counter-monotonic copula will minimize \eqref{eq:Thm2_proof} (see, e.g., \citealtec{tchen1980inequalities}). Since $t$ is continuous, counter-monotonicity guarantees the sensitivity prioritized property. The same argument applies to any other $i$ and $\mathcal{T}^-$, hence we complete the proof.
\hfill\qedsymbol



\subsubsection{Proof of \Cref{prop:exclusion}.}
This result is based on the monotonicity of $\ell$ with respect to $\alpha$. Let $\pi^*$ be an optimal plan in \eqref{eq:STM} and $\phi^*$, $\varphi^*$ be the corresponding Kantorovich potentials in \eqref{eq:STM_D}. According to the marginal property of $\pi^*$ (c.f., \citealp[Theorem 5.10]{villani2008optimal}), we have
\begin{equation*}
\phi^*(\bm{x},\alpha) + \varphi^*_i(t) = r-\delta(\bm{x},\bm{y}_i) - \ell(t,\alpha)
\end{equation*}
$\pi^*$-almost surely. To seek contradiction, assume that $\mu_{\alpha_1}(\bm x),\mu_{\alpha_2}(\bm x)>0$,  $(\bm{x},\alpha_2,t_2,i_2)\in\operatorname{supp}(\pi^*)$ while $(\bm{x},\alpha_1,t,i)\notin\operatorname{supp}(\pi^*)$ for all $t$ and $i$ with $\alpha_2>\alpha_1$. We have
\begin{equation*}
\phi^*(\bm{x},\alpha_2) + \varphi^*_{i_2}(t_2) = r-\delta(\bm{x},\bm{y}_{i_2}) - \ell(t_2,\alpha_2) < r-\delta(\bm{x},\bm{y}_{i_2}) - \ell(t_2,\alpha_1)\leq \phi^*(\bm{x},\alpha_1) + \varphi^*_{i_2}(t_2),
\end{equation*}
which implies $\phi^*(\bm{x},\alpha_2)< \phi^*(\bm{x},\alpha_1)$. Recall the proof of \Cref{prop:general_Dual}, when $(\bm{x},\alpha_1)$ is not served (matched to a dummy time), we have $\phi^*(\bm{x},\alpha_1) = 0$, then we reached a contradiction of $\phi^*(\bm{x},\alpha_2)< 0$. \hfill\qedsymbol

\subsubsection{Proof of \Cref{prop:order_preserve}.}
This proposition is derived from the optimality of the co-monotone transport plan for a one-dimensional OT problem with convex cost (see \citealp[Definition 2.3 and Theorem 2.9]{santambrogio2015optimal}). Specifically, note that in $OT_{\mathrm{time}}$, the problem is separable in $i$; one may solve it conditioning on each station $i$. We introduce an auxiliary decision variable $\hat{c}_i(t)$ to represent the occupied processing capacity of station $i$. Under \Cref{assm:Hom_Sens_Het_Pref}, since the type captures only time preferences, we may set $\tau_\alpha = \alpha$ and $\mathcal{A} = \mathcal{T}$. The reformulated problem becomes:
\begin{equation*}
\begin{aligned}
OT_{\mathrm{time}}(\bm{c},\bm{q}) = \inf_{\hat{c}_i} \inf_{\nu_i}\quad& \int_{\mathcal{T}}\hat{\ell}(t-\alpha)\nu_{i}(t,\alpha)\mathrm{d}t,\\
\text{s.t.}\quad& \int_\mathcal{A}\nu_{i}(t,\alpha)= \hat{c}_i(t),\quad\forall t\in\mathcal{T},\\
&\int_{\mathcal{T}}\nu_{i}(t,\alpha)\mathrm{d}t = q_i(\alpha), \quad\forall \alpha\in\mathcal{T},\\
& \hat{c}_i(t)\leq c_i(t),\quad\forall t\in\mathcal{T}.\\
\end{aligned}    
\end{equation*}
The inner problem is a standard one-dimensional OT problem from $\mathcal{T}$ to $\mathcal{T}$. Therefore, Theorem 2.9 from \citeec{santambrogio2015optimal} applies, which completes the proof. \hfill\qedsymbol

\subsection{Proof of \Cref{Thm:finite_type}}\label{proof:finite_type}
The proof proceeds as follows. We begin by fixing $\bm{q}$ and formulating the duals of $OT_{\text{time}}$ and $OT_{\text{space}}$, both of which correspond to the dual of a semi-discrete partial OT problem. In the second step, we apply the minimax theorem to interchange the inner and outer optimization variables. This renders the inner problem with respect to $\bm{q}$ a linear program, for which we subsequently derive its dual. As a result, the overall problem becomes a minimization problem. Finally, we simplify the formulation by eliminating certain decision variables, ultimately arriving at \eqref{eq:STB_D}.
~\begin{paragraph}{Step 1: Dual reformulation of $OT_{\mathrm{time}}$ and $OT_{\mathrm{space}}$.}
We first reformulate the subproblems in their dual form, following an analogous approach to that used in the proof of \Cref{prop:general_Dual} and \Cref{prop:Laguerre_cell}. Since the subproblem $OT_{\mathrm{space}}$ is separable with respect to $j$, we fix a $j\in[m]$.  By introducing a dummy station (indexed as $0$) with capacity $\sum_{j=1}^{m}\int_{\mathcal{X}}\mu_j(\bm{x}) \mathrm{d}\bm{x} -\sum_{i=1}^{n}\sum_{j=1}^{m}q_{i,j}$ and defining $\delta(\bm{x},\bm{y}_0)=0$ for all $\bm{x}\in\mathcal{X}$, and can be further reformulated as a standard optimal transportation problem, which is a discrete variation of \eqref{eq:OT_space_dummy}:
\begin{equation}\label{eq:OT_space_dummy_discrete}
\begin{aligned}
\inf_{\theta_{i,j}}&\quad  \sum_{i=0}^{n}
\int_{\mathcal{X}}\delta(\bm{x },\bm{y}_i)\theta_{i,j}(\bm{x})\mathrm{d}\bm{x},\\
\text{s.t.} 
&\int_{\mathcal{X}}\theta_{i,j}(\bm{x}) \mathrm{d}\bm{x} = q_{i,j},\quad \forall i\in[n],\\
&\int_{\mathcal{X}}\theta_{0,j}(\bm{x}) \mathrm{d}\bm{x} = \int_{\mathcal{X}}\mu_j(\bm{x}) \mathrm{d}\bm{x} -\sum_{i=1}^{n}q_{i,j},\\
&\sum_{i=0}^{n}\theta_{i,j}(\bm{x}) = \mu_j(\bm{x}).\\
\end{aligned}
\end{equation}

The Kantorovich duality of \eqref{eq:OT_space_dummy_discrete} is 
\begin{equation}
\label{eq:Dj}
\begin{aligned}
\sup_{f_j,\{\eta_{i,j}\}_{i=1}^n}\;& 
\int_{\mathcal X} f_j(\bm x)\mu_j(\bm x)\mathrm d\bm x + \sum_{i=1}^n \eta_{i,j} q_{i,j} +\eta_{0,j}(\int_{\mathcal{X}}\mu_j(\bm{x}) \mathrm{d}\bm{x} -\sum_{i=1}^{n}q_{i,j}) \\
\text{s.t.}\;\;
& f_j(\bm x) + \eta_{i,j} \le \delta(\bm x,\bm y_i), 
   \qquad \forall i\in[n],\bm x\in\mathcal X,\\
& f_j(\bm x) + \eta_{0,j} \le 0, 
   \qquad \forall \bm x\in\mathcal X.
\end{aligned}
\end{equation}
Rearrange the objective, it becomes
\begin{equation*}
\int_{\mathcal X} \left(f_j(\bm x)+\eta_{0,j}\right)\mu_j(\bm x)\mathrm d\bm x + \sum_{i=1}^n \left(\eta_{i,j}-\eta_{0,j}\right) q_{i,j}. 
\end{equation*}
Adding a constant $C$ to all $\eta_{i,j}$ and subtracting $C$ from $f_j$ leaves the objective function unchanged. Therefore, without loss of generality, we may set $\eta_{0,j}=0$. Given the values $\{\eta_{i,j}\}_{i=1}^n$, the constraints on $f_j$ are summarized as $-f_{j}(\bm x)\geq \left(\max_{i\in[n]}[\eta_{i,j}-\delta(\bm x,\bm y_i)]\right)^+$, for all $\bm{x}$. Since the objective increases with respect to $f_{j}$, this constraint will be tight. Thus, we can directly set $f_{j}(\bm x) = -\left(\max_{i\in[n]}[\eta_{i,j}-\delta(\bm x,\bm y_i)]\right)^+$, which leads to the following formulation:
\begin{equation}\label{eq:OT_space_dual}
\begin{aligned}
OT_{\mathrm{space}}(\bm{q},\mu)=\sup_{\zeta_{i,j}}\quad &
\sum_{j=1}^m \left(
   \sum_{i=1}^{n}\zeta_{i,j}q_{i,j}
   - \int_{\mathcal X} \mu_j(\bm{x})
      \Big(\ \max_{i\in[n]}\big[\zeta_{i,j}-\delta(\bm{x},\bm{y}_i)\big]\ \Big)^+ \mathrm{d}\bm{x}
\right).
\end{aligned}
\end{equation}
By an analogous argument, we also have
\begin{equation}\label{eq:OT_time_dual}
\begin{aligned}
OT_{\mathrm{time}}(\bm{c},\bm{q})=\sup_{\eta_{i,j}}\quad &
\sum_{i=1}^n \left(
   \sum_{j=1}^{m}\eta_{i,j}q_{i,j}
   - \int_{\mathcal T} c_i(t)
      \Big(\ \max_{j\in[m]}\big[\eta_{i,j}-\ell(t,j)\big]\ \Big)^+ \mathrm{d}t
\right).
\end{aligned}
\end{equation}

Substituting \eqref{eq:OT_space_dual} and \eqref{eq:OT_time_dual} back into \eqref{eq:STB}, the entire problem becomes
\begin{equation}\label{eq:proof_STB_D_Dual_1}
\begin{aligned}
\sup_{\bm{q}\geq0}\inf_{\bm{\zeta},\bm{\eta}}\quad &
\sum_{i=1}^n 
   \sum_{j=1}^{m}\left(r-\zeta_{i,j}-\eta_{i,j}\right) q_{i,j}
   +\sum_{i=1}^{n}\int_{\mathcal T} 
   \Big( \max_{j\in[m]}\big[\eta_{i,j}-\ell_j(t)\big]\ \Big)^+c_i(t) \mathrm{d}t\\ 
   &+\sum_{j=1}^m \int_{\mathcal X} 
      \Big( \max_{i\in[n]}\big[\zeta_{i,j}-\delta(\bm{x},\bm{y}_i)\big]\ \Big)^+\mu_j(\bm{x}) \mathrm{d}\bm{x},\\
\text{s.t.} \quad & \sum_{i=1}^{n}q_{i,j}\leq \varrho(j) \quad\forall j,\\
& \sum_{j=1}^{m} q_{i,j} \leq \int_{\mathcal{T}}c_{i}(t)\mathrm{d}t\quad\forall i.
\end{aligned}
\end{equation}
\end{paragraph}

~\begin{paragraph}{Step 2: Swap $\inf$ and $\sup$.}
Note that objective consists a bilinear term of $\bm{q}$ and $(\bm{\zeta},\bm{\eta})$ while the remaining components are continuous and convex in $\bm{\zeta}$ and $\bm{\eta}$ given $\bm{q}$. Since the feasible region of $\bm{q}$ is compact, we can swap the $\inf$ and $\sup$ by Sion's minimax theorem. Then the inner problem is nothing but a linear programming of $\bm{q}$ :
\begin{equation}\label{eq:proof_STB_D_inner_LP}
\begin{aligned}
\sup_{\bm{q}\geq0}\quad &
\sum_{i=1}^n 
   \sum_{j=1}^{m}\left(r-\zeta_{i,j}-\eta_{i,j}\right) q_{i,j}\\
\text{s.t.} \quad & \sum_{i=1}^{n}q_{i,j}\leq \varrho(j) \quad\forall j,\\
& \sum_{j=1}^{m} q_{i,j} \leq \int_{\mathcal{T}}c_{i}(t)\mathrm{d}t\quad\forall i.
\end{aligned}
\end{equation}

Substituting the dual formulation of \eqref{eq:proof_STB_D_inner_LP} back to \eqref{eq:proof_STB_D_Dual_1}, we get the following reformulation.
\begin{equation}\label{eq:proof_STB_D_Dual_2}
\begin{aligned}
\inf_{\bm{\kappa}\ge 0,\bm{\lambda}\ge 0,\bm{\zeta},\bm{\eta}}\quad& 
     \sum_{i=1}^{n}\int_{\mathcal T} 
       \left[\Big( \max_{j\in[m]}\big[\eta_{i,j}-\ell_j(t)\big]\Big)^+ + \lambda_i\right]c_i(t)\mathrm{d}t \\ 
    &  + \sum_{j=1}^{m}\int_{\mathcal X} 
       \left[\Big( \max_{i\in[n]}\big[\zeta_{i,j}-\delta(\bm{x},\bm{y}_i)\big]\Big)^+ + \kappa_j\right]\mu_j(\bm{x})\mathrm{d}\bm{x} \\
\text{s.t.}\quad 
    & \kappa_j + \lambda_i+\zeta_{i,j}+ \eta_{i,j} \ge r, \quad \forall i\in[n], j\in[m],
\end{aligned}
\end{equation}
where $\kappa$ and $\lambda$ are the dual variables of \eqref{eq:proof_STB_D_inner_LP}. Recall that $\varrho(j) = \int_{\mathcal{X}} \mu_j(\bm{x})\mathrm{d}\bm{x}$, so $\kappa_j$ can be taken inside the integral.
\end{paragraph}
~\begin{paragraph}{Step 3: Eliminate $\bm\kappa$, $\bm\lambda$ and $\bm\zeta$.}
Note that given any feasible $(\bm\kappa, \bm\lambda,\bm\eta,\bm\zeta)$, we can always let $\hat{\kappa}_j=\hat{\lambda}_i=0$, $\hat{\eta}_{i,j}=\eta_{i,j}+\lambda_i$, and $\hat{\zeta}_{i,j}=r-\hat\eta_{i,j}=r-\eta_{i,j}-\lambda_i$. Then $(\hat{\bm\kappa}, \hat{\bm\lambda}, \hat{\bm\eta}, \hat{\bm\zeta})$ is still feasible and induces a smaller objective since
\begin{equation*}
\begin{aligned}
\Big( \max_{j\in[m]}\big[\hat{\eta}_{i,j}-\ell_j(t)\big]\Big)^+ + \hat{\lambda}_i =& \Big( \max_{j\in[m]}\big[\eta_{i,j}+ \lambda_i-\ell_j(t)\big]\Big)^+ \\
\leq& \Big( \max_{j\in[m]}\big[\eta_{i,j}-\ell_j(t)\big]\Big)^+ + \lambda_i,    \\
\Big( \max_{i\in[n]}\big[\hat{\zeta}_{i,j}-\delta(\bm{x},\bm{y}_i)\big]\Big)^+ + \hat{\kappa}_j =& \Big( \max_{i\in[n]}\big[r-\eta_{i,j}-\lambda_i-\delta(\bm{x},\bm{y}_i)\big]\Big)^+ \\
\leq& \Big( \max_{i\in[n]}\big[\zeta_{i,j} + \kappa_j-\delta(\bm{x},\bm{y}_i)\big]\Big)^+ \\
\leq& \Big( \max_{i\in[n]}\big[\zeta_{i,j}-\delta(\bm{x},\bm{y}_i)\big]\Big)^+ + \kappa_j .
\end{aligned}
\end{equation*}
\end{paragraph}
The next-to-last inequality is due to the constraint in \eqref{eq:proof_STB_D_Dual_2}. And the last steps in both inequalities are by the nonnegativity of $\bm\lambda$ and $\bm\kappa$. By this reduction, the only decision variable to be optimized is $\hat{\bm\eta}$ since $\hat{\bm\zeta}$ is given by $r-\hat{\bm\eta}$. Therefore, the dual problem of \eqref{eq:STB} reduces to \eqref{eq:STB_D} when the demand type is finite. \hfill\qedsymbol

\subsection{Proof of \Cref{lemma:pricing_inequality}}
The inequality is straightforward by definition of $p_i^*$ in \eqref{eq:incentive_price}. If further $t\in\mathcal T_{i,j}(\bm\eta^*)$, then $\eta_{i,j}^*-\ell_j(t)\ge 0$ and it maximizes $\eta_{i,j}^*-\ell_j(t)$ across all $j$, so $p_i^*(t)=\eta_{i,j}^*-\ell_j(t)$ and equality holds. Conversely, $p_i^*(t)= \eta_{i,j}^*-\ell_j(t)$ implies that $j$ maximizes $\eta_{i,j}^*-\ell_j(t)$ with nonnegative value given $t$, i.e., $t\in\mathcal T_{i,j}(\bm\eta^*)$ a.e. under \Cref{assm:null_boundary}. \hfill\qedsymbol

\subsection{Proof of \Cref{prop:EF_implementation}}
As a direct result of \Cref{lemma:pricing_inequality}, for any type $j$, station $i$, and time $t$,
$
\ell_j(t)+p_i^*(t)\ge\eta_{i,j}^*,
$
with equality if and only if $t\in\mathcal T_{i,j}(\bm\eta^*)$ (up to a null set). Therefore, for any $(\bm x,j)$, the maximal utility available at station $i$ equals
\begin{equation}\label{eq:max_utility}
\max_{t\in\mathcal T} U_j(\bm{x},i,t,\bm{p}^*) = \max_{t\in\mathcal T}\Big\{r-\delta(\bm x,\bm y_i)-\ell_j(t)-p_i^*(t)\Big\}
=r-\delta(\bm x,\bm y_i)-\eta_{i,j}^*,    
\end{equation}
attained by any $t\in\mathcal T_{i,j}(\bm\eta^*)$. 

Consider a type $j$ agent located in $\mathcal{C}_{i,j}(\bm\eta^*)$ for $i\neq 0$. As mentioned in \Cref{sec:SP_partition}, this agent will be matched to station $i$ at $t\in\mathcal T_{i,j}(\bm\eta^*)$. By \Cref{def:GLC}, we have
\begin{equation*}
\delta(\bm x,\bm y_i)+\eta_{i,j}^*\leq r \text{ and } \delta(\bm x,\bm y_i)+\eta_{i,j}^*\leq  \delta(\bm x,\bm y_{\tilde{i}})+\eta_{\tilde{i},j}^*, \quad \forall \tilde{i}\in[n], 
\end{equation*}
which implies $i$ maximizes the right hand of \eqref{eq:max_utility} and the maximal value is nonnegative. Therefore, we have established envy-freeness and individual rationality for this agent, as no alternative matching yields a higher utility and the agent's utility remains nonnegative.

Similarly, consider a type-$j$ agent located in $\mathcal{C}_{0,j}(\bm\eta^*)$. This agent is not matched to any station, satisfying individual rationality. By \Cref{def:GLC}, the right-hand side of \eqref{eq:max_utility} is nonpositive for all $i \in [n]$, implying that no matching yields strictly positive utility for the agent. Therefore, no matching is optimal, and the outcome is envy-free. 
\hfill\qedsymbol


\subsection{Proof of \Cref{prop:finite_slots}}
This proposition is a result of \Cref{lemma:lower_envelope_DS} and \Cref{prop:EF_implementation}. We first show that $\mathcal{T}_{i,j}(\bm\eta^*)$ consists of at most $m$ intervals. We consider a slightly different set $\widehat{\mathcal{T}}_{i,j}(\bm\eta^*) $ defined by removing the requirement $\ell_j(t) -\eta^{*}_{i,j} \leq 0$: 
\begin{equation*}
\widehat{\mathcal{T}}_{i,j}(\bm\eta^*) =  \left\{t\in\mathcal{T}:\ell_j(t) -\eta^{*}_{i,j}\leq \ell_{\tilde{j}}(t) -\eta^{*}_{i,\tilde{j}},\quad\forall\tilde{j}\in[m]\right\}.
\end{equation*}
Then it is clear that $ \mathcal{T}_{i,j}(\bm\eta^*) = \widehat{\mathcal{T}}_{i,j}(\bm\eta^*)\cap \mathcal{T}^{0}_{i,j}(\bm\eta^*)$ with $\mathcal{T}^{0}_{i,j}(\bm\eta^*)\coloneqq  \left\{t\in\mathcal{T}:\ell_j(t) -\eta^{*}_{i,j} \leq 0\right\}$. Since $\ell_j(t)$ is a quasi-convex function, $\mathcal{T}^{0}_{i,j}(\bm\eta^*)$ is then a closed interval (including singleton or empty set). On the other hand, by \Cref{lemma:lower_envelope_DS}, $\widehat{\mathcal{T}}_{i,j}(\bm\eta^*)$ can be expressed as the union of at most $\hat{K}_{i,j} \leq \lambda_s(n)$ disjoint intervals. Consequently, the intersection of these intervals with $\mathcal{T}^{0}_{i,j}(\bm\eta^*)$ also consists of at most $K_{i,j} \leq \hat{K}_{i,j} \leq \lambda_s(m)$ disjoint intervals, since each interval either remains unchanged, becomes smaller, or disappears.

Moreover, under this pricing scheme, the conclusion of \Cref{lemma:pricing_inequality} still holds. This is because the net utility of a type-$j$ demand selecting a slot in $\mathcal{T}_{i,j}$ remains unchanged, and the utilities of other slots are sub-optimal, as they form a convex combination of the net utilities described in \Cref{prop:EF_implementation}. Hence, the result follows directly from the same reasoning as in \Cref{prop:EF_implementation}. \hfill\qedsymbol

\subsection{Proof of \Cref{prop:finite_linear}}
We start by analyzing $OT_{\mathrm{time}}$ in \eqref{eq:STB}. Given $q_{i,j}$, the quantity of type-$j$ demand assigned to station $i$, since the capacity is constant, type $j$ demand occupies total time $\frac{q_{i,j} }{c_{i}}$. By \Cref{prop:sensitive_priority}, demands with higher time sensitivity are served closer to $t=0$. Hence, we introduce a sequence of time points
$
t^-_m \leq\cdots\leq t^-_1\leq 0\leq t^+_1\leq\cdots\leq t^+_m
$
to characterize the optimal schedule. Specifically, for each $j=1,\dots,m$, demands of sensitivity $s_j$ are served in the intervals
$[t^+_{j-1},t^+_j]$ and $[t^-_j,t^-_{j-1}]$ (ignoring the boundary),
with the convention $t^+_0 = t^-_0 = 0$. Therefore, the cumulative temporal cost is

\begin{equation*}
\begin{aligned}
&\sum_{j=1}^{m}\frac{s_jc_i h\left(t^+_j-t^+_{j-1}\right)\left(t^+_j+t^+_{j-1}\right)}{2}+\sum_{j=1}^{m}\frac{s_jc_i b\left(t^-_{j-1}-t^-_j\right)\left|t^-_{j-1}+t^-_j\right|}{2}\\ 
=&\sum_{j=1}^{m}\frac{s_j hc_i\left((t^+_j)^2-(t^+_{j-1})^2\right)}{2}+\sum_{j=1}^{m}\frac{s_j bc_i\left((t^-_j)^2-(t^-_{j-1})^2\right)}{2}\\
=&\frac{c_i}{2}
\sum_{j=1}^{m-1} \left((t^+_j)^2 h+(t^-_j)^2 b\right)(s_j - s_{j+1}) 
+ \frac{s_m c_i}{2} \left((t^+_m)^2 h+ (t^-_m)^2 b\right). 
\end{aligned}
\end{equation*}

On the other hand, by Jensen’s inequality,
\begin{equation*}
\left((t^+_j)^2 h+(t^-_j)^2 b\right) \geq \frac{bh}{b+h}\left(t_j^+ - t_j^- \right)^2 = \frac{bh}{b+h}\left(\sum_{k=1}^{j}\frac{q_{i,j}}{c_i} \right)^2.
\end{equation*}

Equality holds if and only if 
$t^+_j=\frac{b\sum_{k=1}^{j}q_{i,j}}{c_i(b+h)} $ and $t^-_j=-\frac{h\sum_{k=1}^{j}q_{i,j}}{c_i(b+h)}$ for all $j$. The corresponding minimal temporal cost at station $i$ is

\begin{equation}\label{eq:OT_time_closed}
\sum_{j=1}^{m}\frac{ bh \left(\sum_{k=1}^{j}q_{i,j} \right)^2}{2c_i(b+h)}(s_j - s_{j+1}) =  \sum_{j=1}^{m}\frac{ \beta \left(\sum_{k=1}^{j}q_{i,j} \right)^2}{4c_i}(s_j - s_{j+1}), 
\end{equation}
where we let $s_{m+1}=0$ for convenience. This result yields the optimal schedule once $\{q_{i,j}\}$ are fixed, and hence solves $OT_{\mathrm{time}}$ in closed form in terms of $\bm{q}$.

Substituting the right hand sides of \eqref{eq:OT_time_closed} and \eqref{eq:OT_space_dual} (in \Cref{proof:finite_type}) into \eqref{eq:STB}, we obtain
\begin{equation}\label{eq:proof_STB_D_Dual_finite}
\begin{aligned}
\sup_{\bm{q}\geq 0}\inf_{\bm\zeta}\quad& \sum_{i=1}^{n}\sum_{j=1}^{m}(r-\zeta_{i,j})q_{i,j}-\sum_{j=1}^{m}\frac{ \beta \left(\sum_{k=1}^{j}q_{i,j} \right)^2}{4c_i}(s_j - s_{j+1})\\
&+ \sum_{j=1}^m 
    \int_{\mathcal X} \mu_j(\bm{x})
      \Big(\ \max_{i\in[n]}\big[\zeta_{i,j}-\delta(\bm{x},\bm{y}_i)\big]\ \Big)^+ \mathrm{d}\bm{x},\\
\text{s.t.}\quad& \sum_{i=1}^{n}q_{i,j}\leq \varrho(j) \quad\forall j\in[m],\\
& \sum_{j=1}^{m}q_{i,j}\leq \int_{\mathcal{T}}c_{i}(t)\mathrm{d}t\quad\forall i\in[n].
\end{aligned}
\end{equation}
This reformulation is similar to \Cref{eq:proof_STB_D_Dual_1} and the remainder of the proof is similar. However, instead of first taking duality and then eliminating the dual variables. We first show here that the constraint in \eqref{eq:proof_STB_D_Dual_finite} is redundant. We first find that the second constraint is redundant due to the second condition in \Cref{assm:long-horizon-linear-cost}. After removing the second constraint, the feasible region of $\bm{q}$ is still closed, so we can apply Sion's minimax theorem and exchange the $\sup$ and $\inf$. We denote the resulting problem as 
\begin{equation}\label{eq:proof_STB_D_Dual_finite_inner_dual}
\begin{aligned}
\inf_{\bm\zeta}\sup_{\bm{q}\geq 0}\quad& \sum_{i=1}^{n}\sum_{j=1}^{m}(r-\zeta_{i,j})q_{i,j}-\sum_{j=1}^{m}\frac{\beta}{4 c_i }\bm{q}_{i}^\top \bm{A}\bm{q}_{i} + \sum_{j=1}^m 
    \int_{\mathcal X} \mu_j(\bm{x})
      \Big(\ \max_{j\in[m]}\big[\zeta_{i,j}-\delta(\bm{x},\bm{y}_i)\big]\ \Big)^+ \mathrm{d}\bm{x}\\
\text{s.t.}\quad& \sum_{i=1}^{n}q_{i,j}\leq \varrho(j) \quad\forall j\in[m],
\end{aligned}
\end{equation}
where we adopt the definition of matrix $\bm A$ and denote the vector $(q_{i,j})_{j=1}^{m}$ as $\bm{q}_{i}$. We denote \eqref{eq:proof_STB_D_Dual_finite_inner_dual} as $\inf_{\bm\zeta}F(\bm{\zeta}) $ for short, where $F(\bm{\zeta})$ is defined as the optimal objective of the inner problem. Let us consider the relaxed version of \eqref{eq:proof_STB_D_Dual_finite_inner_dual} by dropping all constraints on $\bm{q}$. Then the optimal unconstrained $\bm{q}$ is given by 
\begin{equation*}
\bm{q}_{i} = \frac{2c_i}{\beta}\bm{A}^{-1} (r-\bm{\zeta}_i).
\end{equation*}
Therefore we have
\begin{equation}\label{eq:F<G}
F(\bm{\zeta}) \leq   \sum_{i=1}^n\frac{c_i}{\beta}(r-\bm{\zeta}_i)^\top \bm{A}^{-1}(r-\bm{\zeta}_i)  + \sum_{j=1}^m 
    \int_{\mathcal X} \mu_j(\bm{x})
      \Big(\ \max_{i\in[n]}\big[\zeta_{i,j}-\delta(\bm{x},\bm{y}_i)\big]\ \Big)^+ \mathrm{d}\bm{x} \coloneqq H(r-\bm\zeta) . 
\end{equation}
Note that by letting $\bm\eta = r-\bm\zeta$, $H(\bm\eta)$ is exactly the objective of \eqref{eq:finite_linear_Dual}. 

Moreover, $H$ is strictly convex because the $\bm{A}^{-1}$ is an M-matrix and the integral term is convex. Therefore it attains its minimum with first order condition holds. That is, the $\bm\eta^*$ that minimize $H(\bm\eta)$ satisfies
\begin{equation}\label{eq:finite_firstorder}
\int_{\mathcal{C}_{i,j}(\bm\eta^*)}\mathrm{d}\mu_{j}(\bm{x}) = \left\{
\begin{aligned}
&\frac{2c_i}{\beta} \frac{\eta^{*}_{i,1}-\eta^{*}_{i,2}}{s_1-s_2} & j =1, \\
&\frac{2c_i}{\beta} \left(\frac{\eta^{*}_{i,j}-\eta^{*}_{i,j+1}}{s_{j}-s_{j+1}}-\frac{\eta^{*}_{i,j-1}-\eta^{*}_{i,j}}{s_{j-1}-s_{j}}\right) & j = 2, \hdots, m-1, \\
&\frac{2c_i}{\beta}\left(\frac{\eta^{*}_{i,m}}{s_{m}}-\frac{\eta^{*}_{i,m-1}-\eta^{*}_{i,m}}{s_{m-1}-s_{m}} \right)& j =m,
\end{aligned}\right.
\end{equation}
where $\mathcal{C}_{i,j}(\bm\eta^*)$ follows the same definition in \Cref{sec:SP_partition}. Choose $q^*_{i,j} = \int_{\mathcal{C}_{i,j}(\bm\eta^*)}\mathrm{d}\mu_{j}(\bm{x})$. This $\bm q^*$ satisfies the constraint in \eqref{eq:proof_STB_D_Dual_finite_inner_dual} and the objective of \eqref{eq:proof_STB_D_Dual_finite_inner_dual} equals $H(\bm\eta^*)$ when $\bm\zeta=\bm\zeta^* = r-\bm\eta^*$. Denote the objective of \eqref{eq:proof_STB_D_Dual_finite_inner_dual} as $G(\bm\zeta,\bm q)$ we have 
\begin{equation*}
H(\bm\eta^*) = G(\bm\zeta^*,\bm q^*)\leq F(\bm\zeta^*)\leq H(\bm\eta^*), 
\end{equation*}
where the first inequality is due to the definition of $F$ and the second inequality is by \eqref{eq:F<G}. Therefore, we have all equality holds in above inequalities chain. This means that given $\bm\zeta^*$, $\bm q^*$
 optimize $G(\bm\zeta^*,\bm q)$. On the other hand, given $\bm q^*$,
 \begin{equation*}
G(\bm\zeta,\bm q^*) = \sum_{i=1}^n\frac{c_i}{\beta}(r+\bm{\zeta}^*_i-2\bm{\zeta}_i)^\top \bm{A}^{-1}(r-\bm{\zeta}^*_i)  + \sum_{j=1}^m 
    \int_{\mathcal X} \mu_j(\bm{x})
      \Big(\ \max_{j\in[m]}\big[\zeta_{i,j}-\delta(\bm{x},\bm{y}_i)\big]\ \Big)^+ \mathrm{d}\bm{x}, 
 \end{equation*}
 which is convex with respect to $\bm\zeta$. When $\bm{\zeta} = \bm\zeta^*$, the first order condition holds by \eqref{eq:finite_firstorder}, hence $\bm\zeta^*$ optimize $G(\bm\zeta,\bm q^*) $. So $(\bm\zeta^*,\bm q^*)$ is the saddle point and therefore optimal. Therefore, we complete the proof, showing that by optimizing $H(\eta)$, or equivalently \eqref{eq:finite_linear_Dual}, we can recover \eqref{eq:proof_STB_D_Dual_finite} and further solve \eqref{eq:STB}. \hfill\qedsymbol

\subsection{Proof of \Cref{Prop:Opt_c}}
Using the same approach as in \Cref{sec:capacity_allocation}, one can write the capacity allocation problem based on \eqref{eq:finite_linear_Dual} with $m=1$ as follows.
\begin{equation*}
\begin{aligned}
\max_{\bm{c}\geq\bm0}\min_{\bm{\eta}}
&\sum_{i=1}^n\frac{c_i\eta_i^2}{s\beta}  -\int_{\mathcal{X}}\min_{i\in[n]}\left\{\delta(\bm{x},\bm{y}_i)+\eta_i\right\}\mu(\bm{x})\mathrm{d}\bm{x}\\
\text{s.t.} &\sum_{i=1}^{n} c_i \leq B.&
\end{aligned}
\end{equation*}
Note that, since we have the assumption that $r$ is sufficiently large, we have already removed the total service reward (which is a constant) in the above problem.

Next, we directly show that 
\begin{equation*}
c^*_i = \frac{\int_{\mathcal{C}^{+\infty}_i(\bm{0})}\mu(\bm{x})\mathrm{d}\bm{x}}{\int_{\mathcal{X}}\mu(\bm{x})\mathrm{d}\bm{x}}B ,\quad  \eta^*_i =   \frac{s\beta\int_{\mathcal{X}}\mu(\bm{x})\mathrm{d}\bm{x}}{2B}
\end{equation*}
is a saddle point of this problem. Indeed, given $\bm{c}^*$, it is not difficult to verify that $\bm{\eta}^*$ satisfies the first order condition and hence it is optimal. On the other hand, since all $\eta_i^*$ are the same, when $\bm{\eta}=\bm{\eta}^*$, any feasible $\bm{c}$ that exhausts the budget, including $\bm{c}^*$, is optimal. Therefore, the proof is completed. \hfill\qedsymbol





\subsection{Proof of \Cref{prop:Hotelling_Homogeneous}}
According to \Cref{prop:Laguerre_cell}, we only need to decide two values $x_1$ and $ x_2$ satisfying $x_1+x_2\leq 1$ such that all demand located at location $[0,x_1)$ will be transported to station $1$ and all demand on the interval $(1-x_2,1]$ will be transported to station $2$. Therefore, the problem can be simplified as the following problem with only $2$ decision variables
\begin{equation}
\begin{aligned}
 \max_{x_1,x_2>=0} &\quad (x_1+x_2)r -\frac{x_1^2}{2} -\frac{x_2^2}{2}-\frac{w x_1^2}{2 c_1}-\frac{w x_2^2}{2 c_2}   \\
  \text{s.t.} &\quad x_1+x_2\leq 1. 
\end{aligned}
\end{equation}

The optimal $x_1$ and $x_2$ can then be readily determined in two cases.

\textbf{Case 1:} If $r<\frac{(w+c_1)(w+c_2)}{2 c_1 c_2 + (c_1 + c_2) w}$, the constraint is inactive.  The optimal solution is then given by $x_1=\frac{rc_1}{c_1+w}$,  $x_2=\frac{rc_2}{c_2+w}$.

\textbf{Case 2:} If $r\geq\frac{(w+c_1)(w+c_2)}{2 c_1 c_2 + (c_1 + c_2) w}$, the constraint becomes active. By substituting $x_2=1-x_1$, the optimal solution is $x_1=\frac{c_1c_2+c_1w}{2 c_1 c_2 + (c_1 + c_2) w}$. \hfill\qedsymbol

\subsection{Proof of \Cref{prop:Hotelling_Uniform}}\label{sec:proof_Hotelling_Uniform}
The computation of parameters $\hat{\alpha}$ is in \eqref{eq:solve_hat_alpha}. 
The partitioning function in \textbf{Case 2} can be find in \eqref{eq:tilde_f_part1} for $\alpha\geq \hat\alpha$ and \eqref{eq:tilde_f_part2} for $\alpha\leq\hat\alpha$, where $\kappa$ is computed in \eqref{eq:kappa}.
\begin{proof}{Proof of \Cref{prop:Hotelling_Uniform}}
According to \Cref{prop:Laguerre_cell}, we only need to decide two function $f_1(\alpha)$ and $f_2(\alpha)$ such that all demand with sensitivity $\alpha$ and between location $0$ and $f_1(\alpha)$ will be assigned to station $1$ and all demand with sensitivity $\alpha$ and between location $1-f_2(\alpha)$ and $1$ will be transported to station $2$. Therefore, the cumulated reward minus transportation cost for station $1$ is given by 
\begin{equation*}
r\int_{0}^{1}f_{1}(\alpha)\mathrm{d}\alpha - \int_{0}^{1}\frac{f_{1}(\alpha)^2}{2}\mathrm{d}\alpha .
\end{equation*}

Moreover, by \Cref{prop:sensitive_priority}, we adopt sensitivity-prioritized schedules. Then the temporal (waiting) cost is expressed as
\begin{equation*}
\int_{0}^{1}\alpha w f_1(\alpha)\frac{\int_{\alpha}^{1}f_1(t)\mathrm{d}t}{c_1}\mathrm{d}\alpha = \frac{w}{2c_1}\int_{0}^{1}\left(\int_{\alpha}^{1}f_1(t)\mathrm{d}t\right)^2\mathrm{d}\alpha.
\end{equation*}

Therefore, the total welfare of station $i$ can be expressed as $J_i(f_i)$:
\begin{equation*}
J_i(f_i) = \int_{0}^{1} \left(rf_i(\alpha)\ - \frac{f_i(\alpha)^2}{2}   -\frac{w}{2c_i}\left(\int_{\alpha}^{1}f_i(t)\mathrm{d}t\right)^2\right)\mathrm{d}\alpha . 
\end{equation*}

So the matching problem can be formulated as 
\begin{equation}\label{eq:hotel_control}
\begin{aligned}
\sup_{f_1, f_2}  &\quad J_1(f_1) +  J_2(f_2)\\
\text{s.t.} &\quad f_1(\alpha)+f_2(\alpha)\leq 1 ,\quad\forall \alpha\in[0,1], \\
&\quad f_1(\alpha),f_2(\alpha)\geq 0,\quad \forall \alpha\in[0,1]. 
\end{aligned}
\end{equation}

 We first investigate the subproblem for station $1$ with out any constraint, that is 
\begin{equation}
\begin{aligned}
\sup_{f_1} \quad J_1(f_1) 
\quad=\quad &\sup_{F_1} \quad r F(0) 
- \frac{1}{2}\int_{0}^{1} \left(\frac{w}{c_1}F_{1}(\alpha)^2+F'_1(\alpha)^2\right)\mathrm{d}\alpha\\
\end{aligned}
\end{equation}

where $F_1(\alpha) = \int_{\alpha}^{1}f_1(t)\mathrm{d}t$.

The Euler–Lagrange equation gives that 
\begin{equation}\label{Hotel_ODE}
-\frac{w}{c_1}F_1(\alpha) + F''(\alpha)=0. 
\end{equation}

Solving the ODE \eqref{Hotel_ODE} with condition $F(1)=0$ and $F'(0)=-f_1(0)=-r$ (all demand on $[0,r]$ with $0$ temporal cost will be assigned to the station $1$ since there is no temporal cost) gives that 
\begin{equation*}
f^*_1(\alpha) = r \frac{\cosh\left((1-\alpha)\sqrt{\frac{w}{c_1}}\right)}{\cosh\left(\sqrt{\frac{w}{c_1}}\right)} .
\end{equation*}
Since it satisfies the Euler–Lagrange equation and the boundary conditions, it is therefore the unique maximizer of the original variational problem. Similarly, the optimal $f_2$ without constraint is
\begin{equation*}
f^*_2(\alpha) = r \frac{\cosh\left((1-\alpha)\sqrt{\frac{w}{c_2}}\right)}{\cosh\left(\sqrt{\frac{w}{c_2}}\right)} .
\end{equation*}

\textbf{Case 1}
When $r\leq 1/2$, both $f^*_1$ or $f^*_2$ are less or equal to $1/2$, hence the constraints automatically hold. Therefore, $f_1=f^*_1$ and $f_2=f^*_2$ are optimal.

\textbf{Case 2}
Define $\lambda = \sqrt{\frac{w(c_1+c_2)}{2c_1c_2}}$. Consider 

\begin{equation}\label{eq:r_case2}
1/2<r< 
\frac{c_1^2+c_2^2}{(c_1+c_2)^2} +\frac{w}{2(c_1+c_2)} -\frac{(c_1-c_2)^2}{2(c_1+c_2)^2\cosh(\lambda)}.
\end{equation}

We will define the partition boundary with the help of two parameters $\kappa$ and $\hat{\alpha}$. The computation of these two parameters will be presented later. When $\alpha\geq \hat{\alpha}$, let 

\begin{equation}\label{eq:tilde_f_part1}
\begin{aligned}
\tilde{f}_1(\alpha ) =&
\kappa
\frac{\cosh\left(\sqrt{\frac{w}{c_1}}(1-\alpha)\right)}{\cosh\left(\sqrt{\frac{w}{c_1}}(1-\hat{\alpha})\right)},\quad
\tilde{f}_2(\alpha ) =& 
(1-\kappa)
\frac{\cosh\left(\sqrt{\frac{w}{c_2}}(1-\alpha)\right)}{\cosh\left(\sqrt{\frac{w}{c_2}}(1-\hat{\alpha})\right)}  . 
\end{aligned}
\end{equation}
 
When $\alpha< \hat{\alpha}$, we construct 

\begin{equation}\label{eq:tilde_f_part2}
\begin{aligned}
 \tilde{f}_1(\alpha ) =&  \frac{c_2-c_1}{2(c_1 + c_2)}\frac{\sinh(\lambda(\hat{\alpha}-\alpha))}{\sinh(\lambda\hat{\alpha})} + \left(\kappa-\frac{c_1}{c_1+c_2}\right)\frac{\sinh(\lambda\alpha)}{\sinh(\lambda\hat{\alpha})}+ \frac{c_1}{c_1+c_2} ,   \\
\tilde{f}_2(\alpha ) = &\frac{c_1-c_2}{2(c_1 + c_2)}\frac{\sinh(\lambda(\hat{\alpha}-\alpha))}{\sinh(\lambda\hat{\alpha})} + \left(\frac{c_1}{c1+c_2}-\kappa\right)\frac{\sinh(\lambda\alpha)}{\sinh(\lambda\hat{\alpha})}+ \frac{c_2}{c_1+c_2}   . 
\end{aligned}
\end{equation}

It is clear that, $\tilde{f}_1(\alpha)+\tilde{f}_2(\alpha)=1$ when $\alpha< \hat{\alpha}$. 



Next, we show how to determine $(\hat{\alpha},\kappa)$. 

$\hat{\alpha}$ is defined as solution of following equation with respect to $\alpha$:
%
\begin{equation}\label{eq:solve_hat_alpha}
\begin{aligned}
r =&
\frac{c_2}{c_1+c_2}+\frac{w\alpha^{2}}{2(c_1+c_2)}- 
\frac{w \alpha A_1(\alpha)}{(c_1+c_2)\lambda}+\frac{(c_1-c_2)w\alpha}{2c_1(c_1+c_2)\lambda\sinh(\lambda\alpha)}\\
&+\frac{\bigl(B_2(\alpha)(c_1 + c_2) + 2A_1(\alpha)c_1\lambda - \frac{c_1 - c_2}{\sinh(\lambda\alpha)}\lambda\bigr)\bigl(c_1 - c_2 + B_1(\alpha)(c_1 + c_2)\alpha + 2A_1(\alpha)c_2\alpha\lambda\bigr)}{(c_1 + c_2)^2(B_1(\alpha) + B_2(\alpha) + 2A_1(\alpha)\lambda)}, 
\end{aligned}
\end{equation}
where for saving the space, we use $A_1$, $B_1$ and $B_2$ to represent the following expressions.

\begin{equation*}
\begin{aligned}
&A_1(\alpha) = \frac{\cosh(\lambda\alpha)}{\sinh(\lambda\alpha)} ,\quad B_1(\alpha) =   \sqrt{\frac{w}{c_{1}}}
\tanh\Bigl(\sqrt{\tfrac{w}{c_{1}}}(1-\alpha)\Bigr)
,\quad B_2(\alpha) =   \sqrt{\frac{w}{c_{2}}}
\tanh\Bigl(\sqrt{\tfrac{w}{c_{2}}}(1-\alpha)\Bigr) .
\end{aligned}
\end{equation*}
When  $\alpha\rightarrow 0$, the right hand side of \eqref{eq:solve_hat_alpha} goes to  $1/2$ ; when $\alpha=1$ the right hand side of \eqref{eq:solve_hat_alpha} becomes $\frac{c_1^2+c_2^2}{(c_1+c_2)^2} +\frac{w}{2(c_1+c_2)} -\frac{(c_1-c_2)^2}{2(c_1+c_2)^2\cosh(\lambda)}$. So, due to our assumption of the range of $r$ in \eqref{eq:r_case2}, there always exists a solution $\hat{\alpha}$. 
After obtaining the $\hat{\alpha}$, $\kappa$ is computed by 
\begin{equation}\label{eq:kappa}
\kappa =  \frac{
2\dfrac{\lambda}{\sinh(\lambda\hat{\alpha})}
\Bigl(
\frac{c_{1}}{c_{1}+c_{2}}\cosh(\lambda\hat{\alpha})
-\frac{c_{1}-c_{2}}{2(c_{1}+c_{2})}
\Bigr)
+
B_2({\hat\alpha})
}
{
2\lambda A_1({\hat\alpha})
+
B_2({\hat\alpha})
+
B_1({\hat\alpha})
} . 
\end{equation}

Finally, we would like to prove the optimality. We first analyze the optimality conditions. Note that the problem \eqref{eq:hotel_control} is an optimal control problem, hence we will handle it with Pontryagin’s Maximum Principle (PMP) \citepec{sethi2021optimal}. 

Recall that $F_i(\alpha)=\int_\alpha^1 f_i(t)\mathrm{d}t$ is the \emph{tail state}.
With multiplier $\nu(\alpha)\ge 0$ for the mixed constraint $f_1+f_2\le 1$, define the Hamiltonian
$$
H(\alpha,S,\lambda;f,\nu)
=\sum_{i=1}^2\left(r f_i-\tfrac12 f_i^2-\tfrac{w}{2c_i}F_i(\alpha)^2+\lambda_i(-f_i)\right)
+\nu(\alpha)\big(1-f_1-f_2\big),
$$
where $\lambda_i$ are the costates. PMP requires the following conditions.

\paragraph{Maximization condition.}
For a.e.\ $\alpha\in[0,1]$, the optimal control $f(\alpha)=(f_1(\alpha),f_2(\alpha))$ maximizes
the Hamiltonian $H$.
Equivalently, the KKT conditions of this static maximization give multipliers
$\nu(\alpha)\ge 0$ and $\eta_i(\alpha)\ge 0$ such that
$$
\begin{aligned}
&
\frac{\partial H}{\partial f_i}
= r - f_i(\alpha) - \lambda_i(\alpha) - \nu(\alpha) - \eta_i(\alpha) =0,\qquad i=1,2,&\text{(stationarity)}\\
&f_i(\alpha)\ge 0,\ \ f_1(\alpha)+f_2(\alpha)\le 1, &\text{(primal feasibility)}\\
&\nu(\alpha)\ge 0,\ \ \eta_i(\alpha)\ge 0, &\text{(dual feasibility)}\\
&\nu(\alpha)\big(1-f_1(\alpha)-f_2(\alpha)\big)=0,\quad \eta_i(\alpha)f_i(\alpha)=0. &\text{(complementarity)}
\end{aligned}
$$
Eliminating $\eta_i(\alpha)$ yields the convenient inequality form
\begin{equation}\label{eq:ineq_KKT}
r - f_i(\alpha) - \lambda_i(\alpha) \le \nu(\alpha),
\qquad\text{with equality whenever } f_i(\alpha)>0,\quad i=1,2,
\end{equation}
together with $f_i\ge 0$, $f_1+f_2\le 1$, $\nu\ge 0$, and $\nu(1-f_1-f_2)=0$.

\paragraph{Adjoint equations.}
Since $F_i'=-f_i$ and $\partial H/\partial F_i = -\frac{w}{c_i}F_i$, the costates satisfy
$$
\lambda_i'(\alpha)=-\frac{\partial H}{\partial F_i}(\alpha)=\frac{w}{c_i}F_i(\alpha),\qquad i=1,2.
$$

\paragraph{Transversality.}
The terminal state is fixed ($F_i(1)=0$), hence $\lambda_i(1)$ is free.
The initial state $F_i(0)$ is free and there is no endpoint cost at $\alpha=0$, hence
$
\lambda_i(0)=0$, $i=1,2$.

From $\lambda_i'=\frac{w}{c_i}F_i$ and $\lambda_i(0)=0$, we can write $\lambda_i(\alpha)$ as
$
\lambda_i(\alpha)=\frac{w}{c_i}\int_{0}^{\alpha} F_i(s)\mathrm{d}s
$, 
so the inequality form \eqref{eq:ineq_KKT} reads
$$
\nu(\alpha) \geq r -f_i(\alpha) -\frac{w}{c_i}\int_{0}^{\alpha}F_i(t)\mathrm{d}t,
\quad \text{with equality when } f_i(\alpha)>0, i=1,2.
$$

Therefore, the optimality conditions are expressed as

\begin{equation}\label{eq:PMP_condition}
\left\{
\begin{aligned}
&\nu(\alpha) \geq r -f_i(\alpha) -\frac{w}{c_i}\int_{0}^{\alpha}F_i(t)\mathrm{d}t,
\quad \text{with equality when } f_i(\alpha)>0, i=1,2,\\
&\nu(\alpha)\left[1-f_1(\alpha)-f_2(\alpha)\right]=0 . \\ &f_1(\alpha) \geq 0, f_2(\alpha) \geq 0,\nu(\alpha) \geq 0,\\  
\end{aligned}
\right.
\end{equation}

Note that the feasible set is convex and the objective functional is strictly concave in $(f_1,f_2)$. To see this, take a small permutation $h\in L^2([0,1])$ and set $F_h(\alpha)=\int_\alpha^1 h(t)dt$. 
Note $F_h\in L^2([0,1])$ and $\|F_h\|_2^2\le \tfrac12\|h\|_2^2$ by Cauchy--Schwarz and Fubini. Consider the perturbation $f+\varepsilon h$ with a small $\varepsilon$ and compute the objective $J_i$:
$$
\begin{aligned}
J_i(f_i+\varepsilon h)
&=\int_0^1\Big(r(f_i+\varepsilon h)-\tfrac12(f_i+\varepsilon h)^2-\tfrac{w}{2c_i}(F_i+\varepsilon F_h)^2\Big)d\alpha\\
&=J_i(f_i)+\varepsilon\int_0^1\Big(rh-fh-\tfrac{w}{c_i}F_i F_h\Big)d\alpha
-\frac{\varepsilon^2}{2}\int_0^1\Big(h^2+\tfrac{w}{c_i}F_h^2\Big)d\alpha .
\end{aligned}
$$
Therefore the second variation at $f$ in the direction $h$ is 
$$
D^{2}J_i(f)[h,h]=-\int_0^1\Big(h^2+\tfrac{w}{c_i}F_h^2\Big)d\alpha <0\quad\text{for all }h\neq 0.
$$

Hence, $J_i$ is strictly concave.
As a result, a feasible pair satisfying all conditions in \eqref{eq:PMP_condition} (a.e.\ $\alpha$) is the global maximizer.

Next, we check the optimality conditions for our constructed solution $\tilde{f}_i$. Let $G_1(\alpha) = r -\tilde{f}_1(\alpha) -\frac{w}{c_1}\int_{0}^{\alpha}\tilde{F}_1(t)\mathrm{d}t$, $G_2(\alpha) =r -\tilde{f}_2(\alpha) -\frac{w}{c_2}\int_{0}^{\alpha}\tilde{F}_2(t)\mathrm{d}t $. 
By definition, one can verify that for all $\alpha\neq\hat{\alpha}$,
\begin{equation}
G_1'(\alpha)  =  - \tilde{f}''_1(\alpha) + \frac{w}{c_1}\tilde{f}_1(\alpha)=- \tilde{f}''_1(\alpha)+\frac{w}{c_2}\tilde{f}_1(\alpha)= G_2'(\alpha) .
\end{equation}

Next, we would like to check the semi-derivative of $\tilde{f}_1$ and $\tilde{f}_2$ at point $\hat{\alpha}$. 
\begin{subequations}
\begin{align}
\partial_{+}G_1(\hat{\alpha}) =& - \partial_{+}\tilde{f}_1(\hat{\alpha}) - \frac{w}{c_1}\tilde{F}_1(\hat{\alpha}) = 0 , \label{subeq1}\\
\partial_{+}G_2(\hat{\alpha}) =&- \partial_{+}\tilde{f}_2(\hat{\alpha}) - \frac{w}{c_1}\tilde{F}_2(\hat{\alpha}) = 0 , \label{subeq2}\\
\partial_{-}G_1(\hat{\alpha}) =&- \partial_{-}\tilde{f}_1(\hat{\alpha}) - \frac{w}{c_1}\tilde{F}_1(\hat{\alpha})\\
=& -\frac{\lambda}{\sinh(\lambda\hat\alpha)}
\Biggl[
\Bigl(\kappa - \frac{c_1}{c_1 + c_2}\Bigr)\cosh(\lambda\hat\alpha)
- \frac{c_2 - c_1}{2(c_1 + c_2)}
\Biggr]
-\kappa\sqrt{\frac{w}{c_1}}
\frac{\sinh\left(\sqrt{\frac{w}{c_1}}(1-\hat{\alpha})\right)}{\cosh\left(\sqrt{\frac{w}{c_1}}(1-\hat{\alpha})\right)} , \label{subeq3}\\
\partial_{-}G_2(\hat{\alpha}) =& - \partial_{-}\tilde{f}_2(\hat{\alpha}) - \frac{w}{c_1}\tilde{F}_2(\hat{\alpha}) \\
=&  -\frac{\lambda}{\sinh(\lambda\hat\alpha)}
\Biggl[
\Bigl(\frac{c_1}{c_1 + c_2} - \kappa\Bigr)\cosh(\lambda\hat\alpha)
- \frac{c_1 - c_2}{2(c_1 + c_2)}
\Biggr]-(1-\kappa)\sqrt{\frac{w}{c_2}}\frac{\sinh\left(\sqrt{\frac{w}{c_2}}(1-\hat{\alpha})\right)}{\cosh\left(\sqrt{\frac{w}{c_2}}(1-\hat{\alpha})\right)}. \label{subeq4}
\end{align}
\end{subequations}

\eqref{subeq3} and \eqref{subeq4} takes same value due to our construction of $\kappa$. In addition
\begin{equation}\label{eq:G_1}
G_1(\hat{\alpha}) = r - \kappa 
-\frac{w}{c_1}\int_{0}^{\hat{\alpha}}\int_{\alpha}^{\hat{\alpha}}\tilde{f}_1(t)\mathrm{d}t\mathrm{d}\alpha
\end{equation}
substituting \eqref{eq:kappa} into \eqref{eq:G_1} and rearranging terms yields  $G_1(\hat{\alpha}) = r-R(\hat{\alpha}) =0$. By an analogous calculation, $G_2(\hat\alpha) = 0$. Hence, both functions are valued $0$ at $\hat\alpha$, share the same one‐sided derivative there, and possess identical second‐order derivatives on the interval $[0,1]$. Furthermore, it is clear that $G_1$ and $G_2$ is decreasing functions and $G_1(0)=G_2(0)=r-1/2>0$, so they are nonnegative. Therefore, when $\alpha \geq \hat{\alpha}$, $\nu(\alpha)=0$, when $\alpha \leq \hat{\alpha}$, $\nu(\alpha)=G_1(\alpha)=G_1(\alpha)\geq0$ and $\tilde{f}_1(\alpha)+\tilde{f}_2(\alpha)=1$. Therefore, all conditions in \eqref{eq:PMP_condition} are satisfied, establishing the optimality of $\tilde f_1(\alpha)$ and $\tilde f_2(\alpha)$.

\textbf{Case:3}
Let $\lambda = \sqrt{\frac{w(c_1+c_2)}{2c_1c_2}}$. When 

\begin{equation}\label{eq:r_case3}
r\geq 
\frac{c_1^2+c_2^2}{(c_1+c_2)^2} +\frac{w}{2(c_1+c_2)} -\frac{(c_1-c_2)^2}{2(c_1+c_2)^2\cosh(\lambda)}.
\end{equation}
Define
\begin{equation*}
\begin{aligned}
\tilde{f}_1(\alpha ) =& \frac{c_2-c_1}{2(c_1 + c_2)}\frac{\cosh(\lambda(1-\alpha))}{\cosh(\lambda)} + \frac{c_1}{c_1 + c_2}, \\
\tilde{f}_2(\alpha ) = & \frac{c_1-c_2}{2(c_1 + c_2)}\frac{\cosh(\lambda(1-\alpha))}{\cosh(\lambda)} + \frac{c_2}{c_1 + c_2}. 
\end{aligned}
\end{equation*}
It is clear that $\tilde{f}_1(\alpha )+\tilde{f}_2(\alpha ) = 1$ for all $\alpha$. On the other hand, due to the assumption of $r$ in \eqref{eq:r_case3}, we have
\begin{equation*}
\begin{aligned}
&r -f_1(\alpha) -\frac{w}{c_1}\int_{0}^{\alpha}F_1(t)\mathrm{d}t =r -f_2(\alpha) -\frac{w}{c_2}\int_{0}^{\alpha}F_2(t)\mathrm{d}t\\
=& r + \frac{(c_1-c_2)^2 \cosh\left(\lambda(1-\alpha)\right)}{2(c_1+c_2)^2\cosh(\lambda)} -\frac{c_1^2+c_2^2}{(c_1+c_2)^2}- \frac{w}{c_1 + c_2}\left(\alpha - \frac{\alpha^2}{2}\right)\\
\geq& r + \frac{(c_1-c_2)^2 }{2(c_1+c_2)^2\cosh(\lambda)} -\frac{c_1^2+c_2^2}{(c_1+c_2)^2}- \frac{w}{c_1 + c_2}\frac{1}{2}\\
\geq&  0.
\end{aligned}
\end{equation*}

Therefore, the conditions \eqref{eq:PMP_condition} hold hence $f_1=\tilde{f}_1$ and $f_2=\tilde{f}_2$ are optimal.\hfill\qedsymbol
\end{proof}
\bibliographystyleec{informs2014}
\bibliographyec{EC_reference}